\documentclass[a4paper,runningheads]{llncs}

\usepackage{xspace,amsmath,amsfonts,amssymb,bbm,ifthen,stmaryrd,mathtools}
\usepackage[OT4]{fontenc}
\usepackage{enumerate,subfigure,url}

\usepackage[caps]{complexity}
\newclass{\EXPTIME}{Exptime}
\renewclass{\NP}{NP}
\renewclass{\coNP}{co-NP}

\usepackage{tikz}
\usetikzlibrary{arrows,automata,shapes}

\def\longrightarrowtriangle{\relbar\joinrel\rightarrowtriangle}

\newcommand{\conj}{\varowedge}

\newcommand*\may{\textit{may}\xspace}
\newcommand*\must{\textit{must}\xspace}
\newcommand*\Must{\textit{Must}\xspace}
\newcommand*\mayto[1][]{%
  \ensuremath{\mathrel{\overset{%
     \raisebox{3pt}{$\scriptscriptstyle%
        #1%
      $}%
    }{\smash{\dashrightarrow}}}}}
\newcommand*\mustto[1][]{%
  \ensuremath{\mathrel{\overset{%
     \raisebox{3pt}{$\scriptscriptstyle%
        #1%
      $}%
    }{\smash{\longrightarrow}}}}}
\newcommand*\tto{\mustto}
\newcommand*\gameto[1][]{\ensuremath{\overset{#1}{\longrightarrowtriangle}}}
\newcommand*\mmayto{\ensuremath{\mathord{\mayto}}}
\newcommand*\mmustto{\ensuremath{\mathord{\mustto}}}
\newcommand*\mtto{\mmustto}
\newcommand*\mgameto{\ensuremath{\mathord{\gameto}}}
\newcommand*\Int{\mathbbm{Z}}
\newcommand*\Nat{\mathbbm{N}}
\newcommand*\Real{\ensuremath{\mathbbm{R}}}
\newcommand*\Realnn{\ensuremath{\Real_{ \ge 0}}}
\newcommand*\I{\ensuremath{\mathbbm{I}}}

\renewcommand*\K{\textup{\textsf{Spec}}}
\newcommand*\Imp[1]{\textup{\textsf{Imp}}}

\DeclareMathOperator{\id}{id}
\DeclareMathOperator{\pre}{pre}
\newcommand*\bigmid{\mathrel{\big|}}

\renewcommand*\epsilon{\varepsilon}
\newcommand*\powerset[1]{\mathcal P(#1)}

\newcommand*\tgt{\textit{tgt}} 
\newcommand*\weight{\textit{wt}} 

\newcommand*\ie{\textit{i.e.}\xspace}
\newcommand*\cf{\textit{cf.}\xspace}
\newcommand*\eg{\textit{e.g.}\xspace}

\newcommand*\sem[1]{\llbracket #1 \rrbracket}

\newcommand*\idist{d_{ \Imp \K}} 
\newcommand*\kdist{d_\K} 
\newcommand*\bdist{d} 
\newcommand*\mdist{d_m} 
\newcommand*\tdist{d_t} 

\newcommand*\true{t\!t}
\newcommand*\false{f\!\!f}

\setcounter{topnumber}{2} 

\title{Weighted Modal Transition Systems%
  \thanks{This paper is based on the conference
    contribution~\cite{DBLP:conf/mfcs/BauerFJLLT11} which was presented
    at the 36th International Symposium on Mathematical Foundations of
    Computer Science, MFCS 2011, Warszawa, Poland.}}

\author{Sebastian S.~Bauer\inst1 \and Uli Fahrenberg\inst2 \and Line
  Juhl\inst3 \and Kim G.~Larsen\inst3 \and Axel Legay\inst2 \and Claus
  Thrane\inst3}

\authorrunning{S.S.~Bauer, U.~Fahrenberg, L.~Juhl, K.G.~Larsen,
  A.~Legay, C.~Thrane}

\institute{Ludwig-Maximilians-Universit{\"a}t M{\"u}nchen, Germany \and
  Irisa/INRIA Rennes, France \and Aalborg University, Denmark}

\begin{document}

\maketitle

\begin{abstract}
  Specification theories as a tool in model-driven development processes
  of component-based software systems have recently attracted a
  considerable attention.  Current specification theories are however
  qualitative in nature, and therefore fragile in the sense that the
  inevitable approximation of systems by models, combined with the
  fundamental unpredictability of hardware platforms, makes it difficult
  to transfer conclusions about the behavior, based on models, to the
  actual system.  Hence this approach is arguably unsuited for modern
  software systems.  We propose here the first specification theory
  which allows to capture quantitative aspects during the refinement and
  implementation process, thus leveraging the problems of the
  qualitative setting.

  Our proposed quantitative specification framework uses weighted modal
  transition systems as a formal model of specifications. These are
  labeled transition systems with the additional feature that they can
  model optional behavior which may or may not be implemented by the
  system.  Satisfaction and refinement is lifted from the well-known
  qualitative to our quantitative setting, by introducing a notion of
  distances between weighted modal transition systems. We show that
  quantitative versions of parallel composition as well as quotient (the
  dual to parallel composition) inherit the properties from the Boolean
  setting.
\end{abstract}

\section{Introduction}

One of the major current challenges to rigorous design of software
systems is that these systems are becoming increasingly complex and
difficult to reason about~\cite{Sifakis11}.  As an example, an
integrated communication system in a modern airplane can have more than
$10^{ 900}$ distinct states~\cite{DBLP:conf/forte/BasuBBCDL10}, and
state-of-the-art tools offer no possibility to reason about, and model
check, the system as a whole.

One promising approach to overcome such problems is the one of
\emph{compositional and incremental design}.  Here the reasoning is done
as much as possible at higher \emph{specification} levels rather than at
\emph{implementations}; partial specifications are proven correct and
then composed and refined until one arrives at an implementation model.
Practice has shown that this is indeed a viable
approach~\cite{COMBEST,SPEEDS}.

Specifications of system requirements are high-level finite abstractions
of possibly infinite sets of implementations. A model of a system is
considered an implementation of a given specification if the behavior
defined by the implementation is implied by the description provided by
the specification. 

Any practical specification formalism comes equipped with a number of
operations which allow compositional and incremental reasoning.  The
first of these is a \emph{refinement} relation which allows to
successively distill specifications into more detailed ones and
eventually into implementations.  In an implementation, all optional
behavior defined in the specification has been decided upon in
compliance with the specification.

Also needed is an operation of \emph{logical conjunction} which allows
to combine specifications so that the systems which refine the
conjunction of two specifications are precisely the ones which satisfy
both partial specifications.  Refinement and conjunction together allow
for incremental reasoning as specifications are successively refined and
composed.

For compositional reasoning, one needs an operation of \emph{structural
  composition} which allows to infer specifications from
sub-specifications of independent requirements, mimicking at the
implementation level \eg~the interaction of components in a distributed
system. A partial inverse of this operation is given by the
\emph{quotient} operation which allows to synthesize a specification of
the missing components from an overall specification and an
implementation which realizes a part of the overall specification.

Over the years, there have been a series of advances on specification
theories~\cite{AlfaroH95,ChakrabartiAHM02,DBLP:conf/hybrid/DavidLLNW10,Delahaye10Thesis,Lynch-tuttle88,Nyman08Thesis,Thrane11Thesis}.
The predominant approaches are based on modal logics and process
algebras but have the drawback that they cannot naturally embed both
logical and structural composition within the same
formalism~\cite{Larsen89}.  Hence such formalisms do not permit to
reason incrementally through refinement.

In order to leverage these problems, the concept of \emph{modal
  transition systems} was introduced~\cite{Larsen89}. In short, modal
transition systems are labeled transition systems equipped with two
types of transitions: \must~transitions which are mandatory for any
implementation, and \may~transitions which are optional for
implementations.  It is well established that modal transition systems
match all the requirements of a reasonable specification theory (see
also~\cite{DBLP:journals/entcs/Raclet08} for motivation), and much
progress has been made using modal specifications, see
\eg~\cite{AntonikHLNW08} for an overview.  Also, practical experience
shows that the formalism is expressive enough to handle complex
industrial problems~\cite{COMBEST,SPEEDS}.

As an example, consider the modal transition system shown in
Figure~\ref{fig:mtsintro} which models the requirements of a simple
email system in which emails are first received and then delivered.
Before delivering the email, the system may check or process the email,
\eg~for en- or decryption, filtering of spam emails, or generating
automatic answers using has an auto-reply feature (see
also~\cite{DBLP:conf/fiw/Hall00}).  \Must transitions, representing
obligatory behavior, are drawn as solid arrows, whereas \may
transitions, modeling optional behavior, are shown as dashed arrows;
hence any implementation of this email system specification must be able
to receive and deliver email, and it may also be able to check arriving
email before delivering it.  No other behavior is allowed.

\begin{figure}[tp]
  \centering
  \begin{tikzpicture}[->,>=stealth',shorten >=1pt,auto,node
    distance=2.0cm,initial text=,scale=0.9,transform shape]
    \tikzstyle{every node}=[font=\small] \tikzstyle{every
      state}=[fill=white,shape=circle,inner sep=.5mm,minimum size=6mm]
    \node[state,initial] (s0) at (0,0) {};
    \node[state] (s1) at (3,0) {};
    \node[state] (s2) at (6,0) {};
    \path (s0) edge [solid] node [above,sloped] {receive} (s1);
    \path (s1) edge [solid,bend right] node [above,sloped] {deliver} (s0);
    \path (s1) edge [densely dashed] node [above] {check} (s2);
    \path (s2) edge [solid,bend left=25] node [below,sloped] {deliver} (s0);
  \end{tikzpicture} 
  \caption{Modal transition system modeling a simple email system, with
    an optional behavior: Once an email is received it may \eg~be
    scanned for containing viruses, or automatically decrypted, before
    it is delivered to the receiver.}
  \label{fig:mtsintro}
\end{figure}
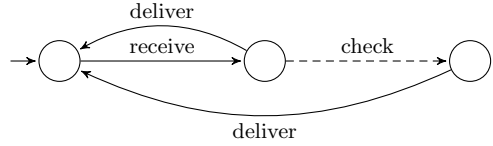

Implementations can also be represented within the modal transition
system formalism, simply as specifications without \may transitions.
Hence any implementation choice has been resolved, and implementations
are plain labeled transition systems.  Formally, for
a labeled transition system to be an implementation of a given
specification,
we require that the states of the two objects are related by a
refinement relation with the property that all behavior required (\must)
by the specification has been implemented, and that any implementation
behavior was permitted (\may) in the specification.
Figure~\ref{fig:intro_impl} shows an implementation of our email
specification with two different
checks, leading to distinct processing states.
Note that a simple system without any check at all, hence only able to
receive and deliver email, is also an implementation of the
specification.

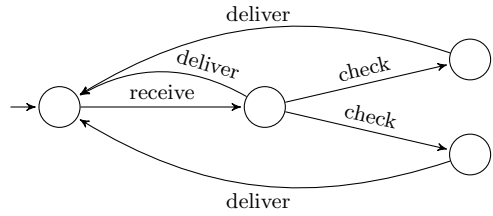
\begin{figure}[tp]
  \centering
  \begin{tikzpicture}[->,>=stealth',shorten >=1pt,auto,node
    distance=2.0cm,initial text=,scale=0.9,transform shape]
    \tikzstyle{every node}=[font=\small] \tikzstyle{every
      state}=[fill=white,shape=circle,inner sep=.5mm,minimum size=6mm]
    \node[state,initial] (s0) at (0,0) {};
    \node[state] (s1) at (3,0) {};
    \node[state] (s2) at (6,-.7) {};
    \node[state] (s3) at (6,.7) {};
    \path (s0) edge [solid] node [above,sloped] {receive} (s1);
    \path (s1) edge [solid,bend right] node [near start,above,sloped]
    {deliver} (s0);
    \path (s1) edge [solid] node [above,sloped] {check} (s2);
    \path (s1) edge [solid] node [above,sloped] {check} (s3);
    \path (s2) edge [solid,bend left=25] node [below] {deliver} (s0);
    \path (s3) edge [solid,bend right=25] node [above] {deliver} (s0);
  \end{tikzpicture} 
  \caption{An implementation of the simple email system in
    Figure~\ref{fig:mtsintro} in which we explicitly model two distinct
    types of email pre-processing.}
  \label{fig:intro_impl}
\end{figure}

\medskip
Motivated by applications to embedded, real-time and hybrid systems, the
modal transition system framework has recently been extended in order to
reason about \emph{quantitative}
aspects~\cite{journals/mscs/BauerJLLS11,journals/jlp/JuhlLS11}.  With
these applications in mind, it is necessary not only to be able to
\emph{specify} quantitative aspects of systems, but also to formalize
successive \emph{refinement} of quantities.
To illustrate this extension, consider again the modal transition system
of Figure~\ref{fig:mtsintro}, but this time with quantities, see
Figure~\ref{fig:mtsquantities}: Every transition label is extended by
integer intervals modeling upper and lower bounds on time required for
performing the corresponding actions. For instance, the reception of a
new email (action \emph{receive}) must take between one and three time
units, the checking of the email (action \emph{check}) is allowed to
take up to five time units.

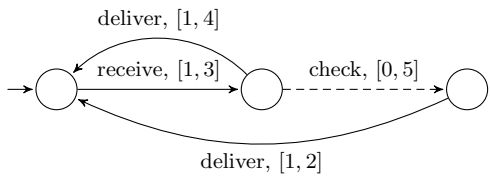
\begin{figure}[tp]
  \centering
  \begin{tikzpicture}[->,>=stealth',shorten >=1pt,auto,node
    distance=2.0cm,initial text=,scale=0.9,transform shape]
    \tikzstyle{every node}=[font=\small] \tikzstyle{every
      state}=[fill=white,shape=circle,inner sep=.5mm,minimum size=6mm]
    \node[state,initial] (s0) at (0,0) {};
    \node[state] (s1) at (3,0) {};
    \node[state] (s2) at (6,0) {};
    \path (s0) edge [solid] node [above,sloped] {receive, $[1,3]$} (s1);
    \path (s1) edge [solid,bend right=45] node [above,sloped] {deliver,
      $[1,4]$} (s0); 
    \path (s1) edge [densely dashed] node [above] {check, $[0,5]$} (s2);
    \path (s2) edge [solid,bend left=25] node [below,sloped] {deliver,
      $[1,2]$} (s0); 
  \end{tikzpicture} 
  \caption{Specification of a simple email system, similar to
    Figure~\ref{fig:mtsintro}, but extended by integer intervals
    modeling time units for performing the corresponding actions.}
  \label{fig:mtsquantities}
\end{figure}

In this quantitative setting, there is a problem with using a
\emph{Boolean} notion of refinement (as is done
in~\cite{journals/mscs/BauerJLLS11,journals/jlp/JuhlLS11}): If one only
can decide \emph{whether or not} an implementation refines a
specification, then the quantitative aspects get lost in the refinement
process.  As an example, consider the email system implementations in
Figure~\ref{fig:introimpl}.  Implementation~(a) does not refine the
specification, as there is an error in the discrete structure of
actions: after receiving an email, the system can check it indefinitely
without ever delivering it.  Also implementations~(b) and~(c) do not
refine the specification: (b) takes too long to receive email, (c) does
not deliver email fast enough after checking it.  Implementation~(d) on
the other hand is a perfect refinement of the specification.

Intuitively however, implementations~(b) and~(c) conform much better to
the specification than implementation~(a) in Figure~\ref{fig:introimpl}:
there are no discrepancies in the discrete structure, only the weights
are off by $1$.  Additionally, the quantitative error in
implementation~(c) occurs later than the one in~(b).  Hence one may want
to say that implementation~(d) is in perfect refinement of the
specification, (c) is slightly off, (b) is a bit more problematic,
whereas implementation~(a) is completely unacceptable.  A Boolean notion
of refinement does not allow to make such distinctions between different
negative answers.

To sum up, a Boolean notion of refinement is too \emph{fragile} for
quantitative formalisms.  Minor
and major modifications in the implementation cannot be distinguished,
as both of them may reverse the Boolean answer.  As observed 
in~\cite{DBLP:journals/tse/AlfaroFS09}, this view is obsolete; engineers
need quantitative notions on how modified implementations differ.
The introduction of such a quantitative notion of refinement, and its
consequences for the specification theory, are the subject of this paper.

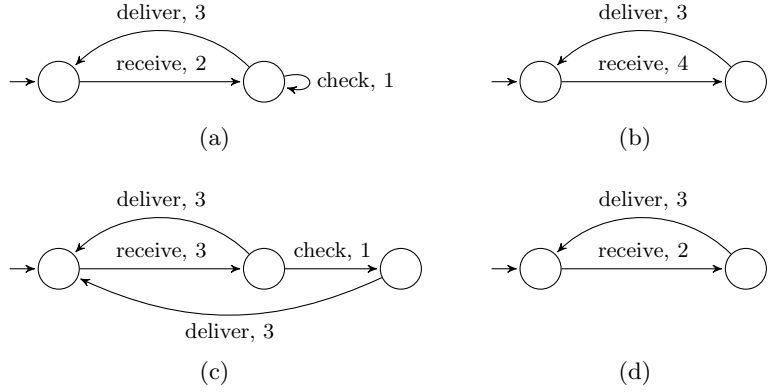
\begin{figure}[tp]
  \centering
\subfigure[]{
    \label{fig:introimpl:1}
  \begin{tikzpicture}[->,>=stealth',shorten >=1pt,auto,node
    distance=2.0cm,initial text=,scale=0.9,transform shape]
    \tikzstyle{every node}=[font=\small] \tikzstyle{every
      state}=[fill=white,shape=circle,inner sep=.5mm,minimum size=6mm]
    \path[use as bounding box] (-1,-.5) rectangle (5.8,1.5); 
    \node[state,initial] (s0) at (0,0) {};
    \node[state] (s1) at (3,0) {};
    \path (s0) edge [solid] node [above,sloped] {receive, $2$} (s1);
    \path (s1) edge [solid,bend right=45] node [above,sloped] {deliver, $3$} (s0);
    \path (s1) edge [solid,loop right] node [right] {check, $1$} (s1);
  \end{tikzpicture}}
\subfigure[]{
    \label{fig:introimpl:2}
  \begin{tikzpicture}[->,>=stealth',shorten >=1pt,auto,node
    distance=2.0cm,initial text=,scale=0.9,transform shape]
    \tikzstyle{every node}=[font=\small] \tikzstyle{every
      state}=[fill=white,shape=circle,inner sep=.5mm,minimum size=6mm]
    \path[use as bounding box] (-1,-0.5) rectangle (4,1.5); 
    \node[state,initial] (s0) at (0,0) {};
    \node[state] (s1) at (3,0) {};
    \path (s0) edge [solid] node [above,sloped] {receive, $4$} (s1);
    \path (s1) edge [solid,bend right=45] node [above,sloped] {deliver, $3$} (s0);
  \end{tikzpicture}}
\subfigure[]{
    \label{fig:introimpl:3}
  \begin{tikzpicture}[->,>=stealth',shorten >=1pt,auto,node
    distance=2.0cm,initial text=,scale=0.9,transform shape]
    \tikzstyle{every node}=[font=\small] \tikzstyle{every
      state}=[fill=white,shape=circle,inner sep=.5mm,minimum size=6mm]
    \path[use as bounding box] (-1,-1.2) rectangle (5.8,1.5); 
    \node[state,initial] (s0) at (0,0) {};
    \node[state] (s1) at (3,0) {};
    \node[state] (s2) at (5,0) {};
    \path (s0) edge [solid] node [above,sloped] {receive, $3$} (s1);
    \path (s1) edge [solid,bend right=45] node [above,sloped] {deliver, $3$} (s0);
    \path (s1) edge [solid] node [above] {check, $1$} (s2);
    \path (s2) edge [solid,bend left=25] node [below,sloped] {deliver, $3$} (s0);
  \end{tikzpicture}}
  \subfigure[]{
    \label{fig:introimpl:4}
  \begin{tikzpicture}[->,>=stealth',shorten >=1pt,auto,node
    distance=2.0cm,initial text=,scale=0.9,transform shape]
    \tikzstyle{every node}=[font=\small] \tikzstyle{every
      state}=[fill=white,shape=circle,inner sep=.5mm,minimum size=6mm]
    \path[use as bounding box] (-1,-1.2) rectangle (4,1.5); 
    \node[state,initial] (s0) at (0,0) {};
    \node[state] (s1) at (3,0) {};
    \path (s0) edge [solid] node [above,sloped] {receive, $2$} (s1);
    \path (s1) edge [solid,bend right=45] node [above,sloped] {deliver, $3$} (s0);
  \end{tikzpicture}}
\caption{Four implementations of the simple email system in
  Figure~\ref{fig:mtsquantities}.}
  \label{fig:introimpl}
\end{figure}

\medskip%
In the above examples, the transition weights have expressed the time
used to perform the associated action.  However our formalism is
abstract enough to also model other quantitative aspects such as
\eg~energy consumption or financial aspects.  For instance,
Figure~\ref{fig:wiper} presents a simple electronic wiper control
component for a car, with a normal mode and an optional fast
mode. Integer intervals express the allowed energy consumption of each
action (using abstract energy units).

Depending on the precise application of our quantitative formalism,
there are a few choices which one has to make.  One such choice is the
precise definition of quantitative refinement, as the way quantitative
discrepancies between specifications is measured \eg~depends on whether
differences accumulate over time or the interest more lies in the
maximal individual differences.  Another choice is how to combine
quantities during structural composition: when modeling \eg~energy
consumption, they should be added; when modeling timing constraints,
some form of conjunction should be used.  To simplify presentation, we
develop the theory in this paper for one specific kind of quantitative
refinement and one specific choice of composition; a more general
treatment is deferred to future work.

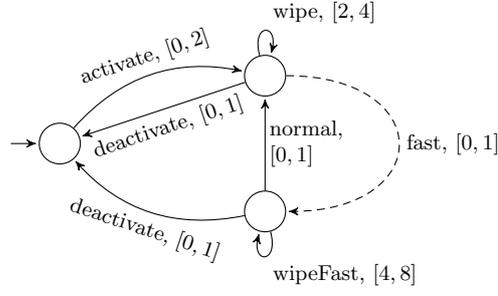
\begin{figure}[tp]
  \centering
  \begin{tikzpicture}[->,>=stealth',shorten >=1pt,auto,node
    distance=2.0cm,initial text=,scale=0.9,transform shape]
    \tikzstyle{every node}=[font=\small] \tikzstyle{every
      state}=[fill=white,shape=circle,inner sep=.5mm,minimum size=6mm]
    \node[state,initial] (s0) at (0,0) {};
    \node[state] (s1) at (3,1) {};
    \node[state] (s2) at (3,-1) {};
    \path (s0) edge [solid,bend left] node [above,sloped] {activate,
      $[0,2]$} (s1); 
    \path (s1) edge [solid] node [below,sloped] {deactivate, $[0,1]$} (s0);
    \path (s1) edge [solid,loop above] node [above right] {wipe, $[2,4]$} (s1);
    \draw[densely dashed,rounded corners] (s1) .. node [right] {fast,
      $[0,1]$} controls (5.5,1) and (5.5,-1) .. (s2); 
    \path (s2) edge [solid] node [right,xshift=-1mm] {
      \begin{tabular}{l}
normal,\\ $[0,1]$
\end{tabular}} (s1);
    \path (s2) edge [solid,loop below] node [below right] {wipeFast,
      $[4,8]$} (s2); 
    \path (s2) edge [solid,bend left] node [below,sloped] {deactivate,
      $[0,1]$} (s0); 
  \end{tikzpicture} 
  \caption{Weighted modal transition system modeling a simple wiper
    control component of a car.} 
  \label{fig:wiper}
\end{figure}

\medskip
To facilitate quantitative reasoning on specifications and
implementations, we introduce a real-valued \emph{distance} between
specifications such that perfect refinement corresponds to distance $0$,
small quantitative discrepancies give rise to small distances, and
differences in the discrete control structure correspond to distance
$\infty$.  For the examples in Figs.~\ref{fig:mtsquantities}
and~\ref{fig:introimpl}, we will hence deduce the following chain of
decreasing distances:
\begin{equation*}
  \infty = d(I_1,S) > d(I_2,S) > d(I_3,S) > d(I_4,S) = 0
\end{equation*}
Our distance is \emph{discounted} in the sense that behaviors which
occur $d$ steps in the future are discounted by a factor $\lambda^d$,
where $\lambda$ with $0< \lambda< 1$ is a fixed discounting factor.

Using a reduction to discounted games~\cite{DBLP:conf/cocoon/ZwickP95},
we show that this so-called \emph{modal} distance is computable in
\NP~$\cap$~\coNP.  As any specification can be seen as the (generally
infinite) set of implementations which are in perfect refinement, we
also have a natural notion of so-called \emph{thorough} distance between
specifications which is given by the (Hausdorff) distance between their
implementation sets; we show that computing through distances is
\EXPTIME-hard.

Replacing Boolean refinement by distances
has an impact on operations between specifications.  As a second
contribution of this paper, we propose quantitative versions of
structural composition and quotient which inherit the good properties
from the Boolean setting.  We also propose a new notion of
\emph{relaxation} which is inherent to the quantitative framework and
allows \eg~to calibrate the quotient operator: If the overall
specification is too restrictive with respect to a partial
implementation to synthesize a meaningful specification of the missing
components, the overall specification may be relaxed to facilitate a
better quotient.

However, there is no free lunch, and working with distances has a price:
some of the properties of logical conjunction and determinization are
not preserved in our quantitative setting. More precisely, conjunction
is not the greatest lower bound with respect to refinement distance as
it is in the Boolean setting, and deterministic overapproximation is too
coarse.  In fact we show that this is a fundamental limitation of
\emph{any} reasonable quantitative specification formalism.

Our final contribution consists of showing that a quantitative
interpretation of Hennessy-Milner logic provides a logical
characterization which is sound with respect to refinement distance and
complete for the disjunction-free fragment.

\paragraph{Related work.}

The objective of the paper is to propose a new complete quantitative
modal specification theory, which exploits a notion of distance between
specifications.  This distance builds on previous work of some of the
authors~\cite{journals/cai/FahrenbergLT10,DBLP:conf/fsttcs/FahrenbergLT11,conf/qapl/FahrenbergTL11,journals/tcs/LarsenFT11,Thrane11Thesis,journals/jlp/ThraneFL10}. For
the sake of completeness, we briefly put it in perspective with other
notions of distances proposed, particularly but not exclusively for
probabilistic systems, in recent years.  These
include~\cite{thesis/Breugel94,Breugel95-Metric} which develop a theory
of \emph{metric transition systems} and introduce the notion of compact
branching,
\cite{DBLP:conf/concur/Alfaro03,DBLP:conf/icalp/AlfaroHM03,DBLP:journals/tcs/DesharnaisGJP04,thesis/Majumdar03}
which introduce discounting distances for Markov decision processes,
and~\cite{DBLP:journals/corr/abs-0809-4326,DBLP:journals/lmcs/AlfaroMRS08}
which generalize these to a game setting.

For a non-probabilistic setting of metric transition systems (different
from van Breugel's), notions of discounting linear and branching
distances are developed in~\cite{DBLP:journals/tse/AlfaroFS09}, and an
important theoretical contribution
is~\cite{DBLP:journals/tcs/BonsangueBR98} which develops a theory of
directed distances, or \emph{hemimetrics} as they have come to be
called, and relate completion of hemimetric spaces to Yoneda embeddings
(see also~\cite{journals/rsmfm/Lawvere73,journals/rcm/Lawvere86}).
Another, language-based approach to quantitative verification, related
to the theory of semiring-weighted
automata~\cite{DBLP:journals/tcs/DrosteG07,book/DrosteKV09,DBLP:journals/tcs/DrosteR09},
can be found
in~\cite{DBLP:journals/tcs/CernyHR12,DBLP:journals/corr/abs-1007-4018}.

\paragraph{Structure of the paper.} 

The paper starts by introducing our quantitative formalism which has
weighted transition systems as implementations and weighted modal
transition systems as specifications.  In Section~\ref{se:distances} we
introduce the distances we use for quantitative comparison of both
implementations and specification, and Section~\ref{se:complex} provides
complexity results for the computation of these distances.
Section~\ref{se:relax} is devoted to a formalization of the notion of
relaxation which is of great use in quantitative design.  In
Section~\ref{se:hull} we see some inherent limitations of the
quantitative approach, and Section~\ref{se:parcomp} shows that
structural composition works as expected in the quantitative framework
and links relaxation to quotients.  Section~\ref{se:logics} finishes the
paper by providing logical characterizations of refinement distance.

\section{Weighted Modal Transition Systems}
\label{se:wmts}

In this section we present the formalism we use for implementations and
specifications.  As implementations we choose the model of
\emph{weighted transition systems}, \ie~labeled transition systems with
integer weights at transitions.  Specifications both have a \emph{modal}
dimension, specifying discrete behavior which \emph{must} be implemented
and behavior which \emph{may} be present in implementations, and a
\emph{quantitative} dimension, specifying intervals of weights on each
transition within are permissible for an implementation.

Let $\I=\big\{[ x, y]\bigmid x\in \Int\cup\{ -\infty\}, y\in \Int\cup\{
\infty\}, x\le y\big\}$ be the set of closed extended-integer intervals
and let $\Sigma$ be a finite set of actions.  Our set of
\emph{specification labels} is $\K= \Sigma\times \I$, pairs of actions
and intervals.  The set of \emph{implementation labels} is defined as
$\Imp \K= \Sigma\times\big\{[ x, x]\bigmid x\in \Int\big\}\approx
\Sigma\times \Int$. Hence a specification imposes labels and integer
intervals which constrain the possible weights of an implementation.

We define a partial order on $\I$ (representing inclusion of intervals)
by $[ x, y] \sqsubseteq [ x', y']$ if $x'\le x$ and $y\le y'$, and we
extend this order to specification labels by $( a, I)\sqsubseteq( a',
I')$ if $a= a'$ and $I\sqsubseteq I'$.  The partial order on $\K$ is
hence a \emph{refinement} order; if $k_1\sqsubseteq k_2$ for $k_1,
k_2\in \K$, then no more implementation labels are contained in $k_1$
than in $k_2$.

Specifications and implementations are defined as follows:

\begin{definition}
  A \emph{weighted modal transition system} (WMTS) is a quadruple $(
  S, s^0,\linebreak \mmayto, \mmustto)$ consisting of a set of states
  $S$ with an initial state $s^0\in S$ and \must ($\mmustto$) and \may
  ($\mmayto$) transition relations $\mmustto, \mmayto\subseteq S\times
  \K\times S$ such that for every $(s,k,s') \in \mmustto$ there is
  $(s,\ell,s') \in \mmayto$ where $k \sqsubseteq \ell$.  A WMTS is an
  \emph{implementation} if $\mmustto= \mmayto\subseteq S\times \Imp
  \K\times S$.
\end{definition}

Note the natural requirement that any required (\must) behavior is also
allowed (\may) above, and that implementations correspond to standard
integer-weighted transition systems, where all optional behavior and
positioning in the intervals has been decided on.

A WMTS is \emph{finite} if $S$ and $\mmayto$ (and hence also $\mmustto$)
are finite sets, and it is \emph{deterministic} if it holds that for any
$s\in S$ and $a\in \Sigma$, $\big( s,( a, I_1), t_1\big),\big( s,( a,
I_2), t_2\big)\in \mmayto$ imply $I_1= I_2$ and $t_1= t_2$.  Hence a
deterministic specification allows at most one transition under each
discrete action from every state.  In the rest of the paper we will
write $s\mayto[ k] s'$ for $( s, k, s')\in \mmayto$ and similarly for
$\mmustto$, and we will always write $S=( S, s^0, \mmayto, \mmustto)$ or
$S_i=( S_i, s^0_i, \mmayto_i, \mmustto_i)$ for WMTS and $I=( I, i^0,
\mmustto)$ for implementations.  Note that an implementation is just a
usual integer-weighted transition system.

Our theory will work with infinite WMTS, though we will require them to
be \emph{compactly branching}.  This is a natural generalization of the
standard requirement on systems to be \emph{finitely branching} which
was first used in~\cite{Breugel95-Metric}; see Definition~\ref{de:comp}
below.

The implementation semantics of a specification is given through modal
refinement, as follows:

\begin{definition}
  A \emph{modal refinement} of WMTS $S_1$, $S_2$
  is a relation $R\subseteq S_1\times S_2$ such that for any $(s_1,
  s_2)\in R$
  \begin{itemize}
  \item whenever $s_1\mayto[ k_1]_1 t_1$ for some $k_1\in \K$, $t_1\in
    S_1$, then there exists $s_2\mayto[ k_2]_2 t_2$ for some $k_2\in
    \K$, $t_2\in S_2$, such that $k_1 \sqsubseteq k_2$ and $( t_1,
    t_2)\in R$,
  \item whenever $s_2\mustto[ k_2]_2 t_2$ for some $k_2\in \K$,
    $t_2\in S_2$, then there exists $s_1\mustto[ k_1]_1 t_1$ for some
    $k_1\in \K$, $t_1\in S_1$, such that $k_1\sqsubseteq k_2$ and $(
    t_1, t_2)\in R$.
  \end{itemize}
  We write $S_1\le_m S_2$ if there is a modal refinement relation $R$
  for which $( s^0_1, s^0_2)\in R$.
\end{definition}

Hence in such a modal refinement, behavior which is required in $S_2$ is
also required in $S_1$, no more behavior is allowed in $S_1$ than in
$S_2$, and the quantitative requirements in $S_1$ are refinements of the
ones in $S_2$.
The implementation semantics of a specification can then be defined as
the set of all implementations which are also refinements:

\begin{definition}
  The \emph{implementation semantics} of a WMTS $S$ is the set
  $\llbracket S\rrbracket=\{ I \mid I \le_m S ~\text{and}~ I ~\text{is
    an implementation}\}$.
\end{definition}

This is conform with the intuition developed in the introduction: if
$I\in\llbracket S\rrbracket$, then any (reachable) behavior $i\mustto[
a, x] j$ in $I$ must be allowed by a matching transition $s\mayto[{
  a,[ l, r]}] t$ in $S$ with $l\le x\le r$; correspondingly, any
(reachable) required behavior $s\mustto[{ a,[ l, r]}] t$ in $S$ must be
implemented by a matching transition $i\mustto[ a, x] j$ in $I$ with
$l\le x\le r$.

\section{Thorough and Modal Refinement Distances}
\label{se:distances}

For the quantitative specification formalism we have introduced in the
last section, the standard Boolean notions of satisfaction and
refinement are too fragile.  To be able to reason not only whether a
given quantitative implementation satisfies a given quantitative
specification, but also to what extent, we introduce a notion of
\emph{distance} between both implementations and specifications.

We recall some terminology. Let $\Realnn\cup\{ \infty\}$ denote the
extended positive reals, let $X$ be a set and $d: X\times X\to
\Realnn\cup\{ \infty\}$.  Then $d$ is called
\begin{itemize}
\item a \emph{hemimetric} if $d( x, x)= 0$ for all $x\in X$
  (indiscernibility of identicals) and $d( x, y)+ d( y, z)\ge d( x,
  z)$ for all $x, y, z\in X$ (triangle inequality);
\item a \emph{pseudometric} if it is a hemimetric and additionally,
  $d( x, y)= d( y, x)$ for all $x, y\in X$ (symmetry);
\item a \emph{metric} if it is a pseudometric and additionally, $d( x,
  y)= 0$ implies $x= y$ for all $x, y\in X$ (identity of
  indiscernibles)
\end{itemize}
Note that as our (hemi-, pseudo-)metrics may take the values $\infty$,
some authors will refer to them as \emph{extended} (hemi-,
pseudo-)metrics.

The \emph{symmetrization}\label{pg:distsym} of a
hemimetric $d$ is the pseudometric $\bar d: X\times X\to \Realnn\cup\{
\infty\}$ given by $\bar d( x, y)= \max( d( x, y), d( y, x))$; this is
the smallest of all pseudometrics $d'$ on $X$ for which $d\le d'$.
Given hemimetrics $d$ on $X$ and $d'$ on another set $X'$, the
\emph{product distance} $D$ on $X\times X'$ is defined by $D(( x, x'),(
y, y'))= d( x, y)+ d( x', y')$.

\medskip
We first define the distance between \emph{implementations}; for this we
introduce a distance on implementation labels by
\begin{equation}
  \label{eq:d_imp}
  \idist\big(( a_1, x_1),( a_2, x_2)\big)=\left\{
  \begin{array}{cl}
    \infty &\quad\text{if } a_1\ne a_2, \\
    | x_1- x_2| &\quad\text{if } a_1= a_2.
  \end{array}
  \right.
\end{equation}
In the rest of the paper, let $\lambda\in \Real$ with $0 <
\lambda< 1$ be a \emph{discounting factor}.

\begin{definition}
  \label{de:acc.dist}
  The \emph{implementation distance} $\bdist: I_1\times I_2\to
  \Realnn\cup\{ \infty\}$ between the states of implementations $I_1$
  and $I_2$ is the least fixed point of the equations
  \begin{equation*}
    \bdist( i_1, i_2)= \max\left\{
      \begin{aligned}
        &\adjustlimits \sup_{ i_1\tto[ k_1]_1 j_1} \inf_{ i_2\tto[
          k_2]_2 j_2} \idist( k_1, k_2)+ \lambda \bdist( j_1, j_2), \\
        &\adjustlimits \sup_{ i_2\tto[ k_2]_2 j_2} \inf_{ i_1\tto[
          k_1]_1 j_1} \idist( k_1, k_2)+ \lambda \bdist( j_1, j_2).
      \end{aligned}
    \right.
  \end{equation*}
  We define $\bdist(I_1,I_2) = \bdist(i_1^0, i_2^0)$.
\end{definition}

\begin{lemma}
  \label{le:impdistmet}
  The implementation distance is well-defined, and is a pseudometric.
\end{lemma}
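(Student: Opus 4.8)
The plan is to establish the two claims of Lemma~\ref{le:impdistmet} separately: first that the least fixed point defining $\bdist$ exists (well-definedness), and then that this fixed point satisfies the hemimetric axioms together with symmetry.

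For well-definedness, I would apply the Knaster--Tarski fixed-point theorem. Consider the complete lattice $L$ of all functions $f\colon I_1\times I_2\to \Realnn\cup\{\infty\}$ ordered pointwise; this is a complete lattice since $\Realnn\cup\{\infty\}$ is. Let $F\colon L\to L$ be the operator whose application is the right-hand side of the defining equation, i.e.\ $F(f)(i_1,i_2)$ equals the $\max$ of the two $\sup$--$\inf$ expressions with $\bdist$ replaced by $f$. The key step is to verify that $F$ is monotone: if $f\le g$ pointwise then, since $\idist$ is fixed and $\lambda>0$, each term $\idist(k_1,k_2)+\lambda f(j_1,j_2)$ is bounded above by the corresponding term with $g$, and $\sup$, $\inf$, and $\max$ all preserve $\le$. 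Hence $F(f)\le F(g)$, and Knaster--Tarski gives a least fixed point, so $\bdist$ is well-defined. (If the paper's convention assumes $I_1,I_2$ finitely or compactly branching, the $\sup$ and $\inf$ are attained, but monotonicity and the lattice argument do not need this.)

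For the pseudometric property I must check three things. Reflexivity $\bdist(i,i)=0$: the constant-$0$ function is a fixed point of $F$ restricted to the diagonal in the sense that, taking $k_2=k_1$ (resp.\ $k_1=k_2$) in the $\inf$, each inner term is $0+\lambda\cdot 0=0$, so $F(\mathbf{0})(i,i)\le 0$; since the least fixed point is below any post-fixed point and $\mathbf 0$ is one on the diagonal, $\bdist(i,i)=0$. Symmetry $\bdist(i_1,i_2)=\bdist(i_2,i_1)$: the operator $F$ is visibly invariant under swapping the two arguments (the two lines of the $\max$ simply exchange roles, and $\idist$ is symmetric by~\eqref{eq:d_imp}), so the map $(i_1,i_2)\mapsto \bdist(i_2,i_1)$ is also a fixed point of $F$ on $I_1\times I_2$ viewed appropriately, hence equals the least one. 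The triangle inequality $\bdist(i_1,i_3)\le \bdist(i_1,i_2)+\bdist(i_2,i_3)$ is the substantive part: I would show that $g(i_1,i_3):=\inf_{i_2}\bigl(\bdist(i_1,i_2)+\bdist(i_2,i_3)\bigr)$ — or, more simply, the specific function $(i_1,i_3)\mapsto \bdist(i_1,i_2)+\bdist(i_2,i_3)$ for each fixed $i_2$ — is a pre-fixed point of $F$, i.e.\ $F(g)\le g$; then minimality of the least fixed point yields $\bdist\le g$. Concretely, given a transition $i_1\tto[k_1]_1 j_1$, use the defining fixed-point equation for $\bdist(i_1,i_2)$ to pick (an $\epsilon$-approximately optimal) $i_2\tto[k_2]_2 j_2$, then use it again for $\bdist(i_2,i_3)$ to pick $i_3\tto[k_3]_3 j_3$, and estimate $\idist(k_1,k_3)+\lambda\bdist(j_1,j_3)\le \idist(k_1,k_2)+\idist(k_2,k_3)+\lambda\bigl(\bdist(j_1,j_2)+\bdist(j_2,j_3)\bigr)$ using the triangle inequality for $\idist$ (immediate from~\eqref{eq:d_imp}) and the assumed triangle inequality for $\bdist$ at the successors — this is where an induction/coinduction or an $\epsilon$-argument over the unfolding is needed to make the circularity rigorous.

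The main obstacle I anticipate is making the triangle-inequality argument clean in the presence of the nested $\sup$--$\inf$ and the least-fixed-point definition: one cannot simply "apply induction on transitions'' because runs may be infinite, so the right framework is to show $F(g)\le g$ for the candidate $g$ and invoke minimality, handling the $\sup$/$\inf$ with an $\epsilon$ of room (choosing $\epsilon$-optimal matching transitions at each of the two matching steps and letting $\epsilon\to 0$), and using that $\lambda<1$ only if one instead prefers a direct contraction/Banach argument on bounded-distance subspaces. I would present the Knaster--Tarski route since it avoids separating the finite and infinite cases. The reflexivity and symmetry steps are routine once the fixed-point setup is in place.
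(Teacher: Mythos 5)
Your proof is correct, but it reaches well-definedness by a genuinely different route than the paper. The paper observes that, because $\lambda<1$, the defining equations form a contraction with Lipschitz constant $\lambda$ and invokes the Banach fixed point theorem for extended metric spaces (deferring details to the earlier work on the accumulating branching distance); this buys uniqueness of the fixed point (apart from the everywhere-$\infty$ solution) essentially for free, but needs some care about completeness when distances may be infinite. Your Knaster--Tarski argument uses only monotonicity of the defining operator on the complete lattice of functions into $\Realnn\cup\{\infty\}$, so it works for any $\lambda>0$ and sidesteps the completeness issue, at the price of producing only existence of the least fixed point rather than uniqueness --- which is all the definition actually asks for. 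For the pseudometric axioms the paper is terse (symmetry and reflexivity are declared clear, the triangle inequality is deferred to a citation), whereas you supply the actual mechanism: exhibit a candidate majorant as a pre-fixed point of the operator and appeal to minimality. Two small points of care in executing that plan: for reflexivity the constant-$0$ function is not a pre-fixed point off the diagonal, so the majorant you need is the function equal to $0$ on the diagonal and $\infty$ elsewhere; and for the triangle inequality your ``simpler'' candidate $(i_1,i_3)\mapsto \bdist(i_1,i_2)+\bdist(i_2,i_3)$ with $i_2$ held fixed is \emph{not} a pre-fixed point, since the matching transition moves $i_2$ to a successor $j_2$ --- you must use the $\inf_{i_2}$ version, as your detailed $\epsilon$-optimal-transition estimate in fact does.
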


\begin{proof}
  Except for the symmetrizing max operation, this is precisely the
  \emph{accumulating branching distance}
  from~\cite{journals/tcs/LarsenFT11,journals/jlp/ThraneFL10}.  Because
  of $\lambda< 1$, the equations above define a contraction (with
  Lipschitz constant $\lambda$), so the Banach fixed point theorem (for
  extended metric spaces) applies.  Hence besides the fixed point
  $\bdist( i_1, i_2)= \infty$, the contraction has at most one other
  fixed point, \ie~there exists indeed a unique least fixed point.  We
  refer to~\cite{journals/tcs/LarsenFT11} for a more detailed proof.

  Symmetry of $d$ is clear, and so is the property $d( i, i)= 0$.  The
  triangle inequality can be shown inductively,
  \cf~\cite{journals/tcs/LarsenFT11}. \qed
\end{proof}

We remark that besides this accumulating distance, other interesting
system distances may be defined depending on the application at hand,
\cf~\cite{journals/jlp/ThraneFL10,journals/cai/FahrenbergLT10,DBLP:conf/fsttcs/FahrenbergLT11},
but we concentrate here on this distance and leave a generalization to
other distances for future work.

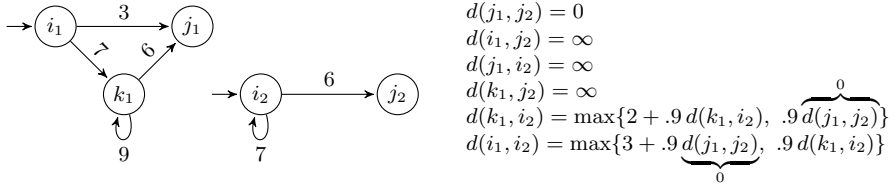
\begin{figure}[tpb]
  \begin{center}
    \begin{tikzpicture}[->,>=stealth',shorten >=1pt,auto,node
      distance=2.0cm,initial text=,scale=0.9,transform shape]
      \tikzstyle{every node}=[font=\small] \tikzstyle{every
        state}=[fill=white,shape=circle,inner sep=.5mm,minimum size=6mm]
      \path[use as bounding box] (0,-0.5) rectangle (12,1.3); 
      \node[state,initial] (s) at (0,1) {$i_1$};
      \node[state] (s1) at (2,1) {$j_1$};
      \node[state] (s2) at (1,0) {$k_1$};
      \path (s) edge [solid] node [above,sloped] {$3$} (s1);
      \path (s) edge [solid] node [above,sloped] {$7$} (s2);
      \path (s2) edge [solid] node [above,sloped] {$6$} (s1);
      \path (s2) edge [solid,loop below] node [below,sloped] {$9$} (s2);
      
      \node[state,initial] (t) at (3,0) {$i_2$};
      \node[state] (t1) at (5,0) {$j_2$};
      \path (t) edge [solid] node [above,sloped] {$6$} (t1);
      \path (t) edge [solid,loop below] node [below,sloped] {$7$} (t);
      
      \node (text) at (9.4,0) 
      {\begin{minipage}{6.8cm}
          $\bdist(j_1,j_2) = 0$ \\
          $\bdist(i_1,j_2) = \infty$ \\
          $\bdist(j_1,i_2) = \infty$  \\
          $\bdist(k_1,j_2) = \infty$  \\[-4mm]
          $\bdist(k_1,i_2) = \max\{2+.9\, \bdist(k_1,i_2),\ 
          .9\overbrace{\bdist(j_1,j_2)}^0 \}$ \\
          $\bdist(i_1,i_2) = \max\{3+.9\underbrace{\bdist(j_1,j_2)}_0,\ 
          .9\, \bdist(k_1,i_2) \}$ \\
        \end{minipage}
      };
    \end{tikzpicture}
  \end{center}
  \caption{Two weighted transition systems with branching distance
    $\bdist( I_1, I_2)= 18$.}
  \label{fig:distance}
\end{figure}

\begin{example}
  Consider the two implementations $I_1$ and $I_2$ in
  Figure~\ref{fig:distance} with a single action (elided for simplicity)
  and with discounting factor $\lambda = .9$.  The equations in the
  illustration have already been simplified by removing all expressions
  that evaluate to $\infty$.  What remains to be done is to compute the
  least fixed point of the equation $\bdist(k_1,i_2) = \max\big\{ 2 +
  .9\, \bdist(k_1,i_2), 0\big\}$.  Clearly $0$ is not a fixed point, and
  solving the equation $\bdist( k_1, i_2)= 2+ .9\,\bdist( k_1, i_2)$
  gives $\bdist(k_1,i_2) = 20$. Hence $\bdist(i_1,i_2) = \max\{3,
  .9\cdot 20\} = 18$.
\end{example}

Note that the interpretation of the distance between two implementations
depends entirely on the application one has in mind; but it can easily
be shown~\cite{journals/jlp/ThraneFL10} that the distance between two
implementations is zero if and only if they are \emph{weighted
  bisimilar}.  The intuition is then that the smaller the distance, the
closer the implementations are to being bisimilar.

To lift the implementation distance to specifications, we need first to
consider the distance between \emph{sets} of implementations.  Given
implementation sets $\mathcal I_1$, $\mathcal I_2$, we define
\begin{equation*}
  \bdist( \mathcal I_1, \mathcal I_2)= \adjustlimits \sup_{ I_1\in
    \mathcal I_1} \inf_{ I_2\in \mathcal I_2} \bdist( I_1, I_2)
\end{equation*}
Note that in case $\mathcal I_2$ is finite, we have that for all
$\epsilon\ge 0$, $\bdist( \mathcal I_1, \mathcal I_2)\le \epsilon$ if
and only if for each implementation $I_1\in \mathcal I_1$ there exists
$I_2\in \mathcal I_2$ for which $\bdist( I_1, I_2)\le \epsilon$, hence
this is quite a natural notion of distance.  Especially, $\bdist(
\mathcal I_1, \mathcal I_2)= 0$ if $\mathcal I_1$ is a subset of
$\mathcal I_2$ up to bisimilarity.  For infinite $\mathcal I_2$, we have
the slightly more complicated property that $\bdist( \mathcal I_1,
\mathcal I_2)\le \epsilon$ if and only if for all $\delta> 0$ and any
$I_1\in \mathcal I_1$, there is $I_2\in \mathcal I_2$ for which $\bdist(
I_1, I_2)\le \epsilon+ \delta$.

Also remark the similarity of this definition to the one of
\emph{Hausdorff distance} between subsets of a metric space, see
\eg~\cite[Sect.~3.16]{book/AliprantisB07}.  Crucially however, our
distance is missing the symmetrizing max operation of Hausdorff
distance, hence it is \emph{asymmetric}.  We may well have $\bdist(
\mathcal I_1, \mathcal I_2)\ne \bdist( \mathcal I_2, \mathcal I_1)$ and
will thus prefer to speak of the distance \emph{from $\mathcal I_1$ to
  $\mathcal I_2$} rather than \emph{between} $\mathcal I_1$ and
$\mathcal I_2$.  We lift this distance to specifications as follows:

\begin{definition}
  The \emph{thorough refinement distance} between WMTS $S_1$ and $S_2$
  is defined as $\tdist( S_1, S_2)= \bdist\big(\llbracket S_1\rrbracket,
  \llbracket S_2\rrbracket\big)$.  We write $S_1\le_t^\epsilon S_2$ if
  $\tdist( S_1, S_2)\le \epsilon$.
\end{definition}

\begin{lemma}
  The thorough refinement distance is a hemimetric.
\end{lemma}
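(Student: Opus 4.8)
The plan is to verify the two defining properties of a hemimetric --- indiscernibility of identicals and the triangle inequality --- directly from the definition
$\tdist(S_1,S_2)=\bdist\big(\sem{S_1},\sem{S_2}\big)=\adjustlimits\sup_{I_1\in\sem{S_1}}\inf_{I_2\in\sem{S_2}}\bdist(I_1,I_2)$,
using that the implementation distance $\bdist$ is a pseudometric (Lemma~\ref{le:impdistmet}). Symmetry is deliberately not claimed; as already noted for the distance between implementation sets, $\tdist$ is asymmetric in general.

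For indiscernibility of identicals, fix a WMTS $S$. For every $I_1\in\sem{S}$ we may take $I_2=I_1\in\sem{S}$, so $\inf_{I_2\in\sem{S}}\bdist(I_1,I_2)\le\bdist(I_1,I_1)=0$; taking the supremum over $I_1\in\sem{S}$ gives $\tdist(S,S)\le 0$, and since $\tdist$ is $\Realnn\cup\{\infty\}$-valued, $\tdist(S,S)=0$. If $\sem{S}=\emptyset$ the supremum is $0$ by the usual convention and the claim is immediate.

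For the triangle inequality I would show $\tdist(S_1,S_3)\le\tdist(S_1,S_2)+\tdist(S_2,S_3)$ by an $\epsilon$-chase, since the $\sup$--$\inf$ form forbids picking exact minimizers. If $\tdist(S_1,S_2)=\infty$ or $\tdist(S_2,S_3)=\infty$, or if $\sem{S_1}=\emptyset$, there is nothing to prove --- note that $\sem{S_2}=\emptyset$ together with $\sem{S_1}\ne\emptyset$ already forces $\tdist(S_1,S_2)=\infty$. Otherwise, fix $I_1\in\sem{S_1}$ and $\epsilon>0$. Since $\inf_{I_2\in\sem{S_2}}\bdist(I_1,I_2)\le\tdist(S_1,S_2)<\infty$, choose $I_2\in\sem{S_2}$ with $\bdist(I_1,I_2)\le\tdist(S_1,S_2)+\epsilon$; likewise choose $I_3\in\sem{S_3}$ with $\bdist(I_2,I_3)\le\tdist(S_2,S_3)+\epsilon$. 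By the triangle inequality for the pseudometric $\bdist$, $\bdist(I_1,I_3)\le\bdist(I_1,I_2)+\bdist(I_2,I_3)\le\tdist(S_1,S_2)+\tdist(S_2,S_3)+2\epsilon$, hence $\inf_{I_3\in\sem{S_3}}\bdist(I_1,I_3)\le\tdist(S_1,S_2)+\tdist(S_2,S_3)+2\epsilon$. Letting $\epsilon\to 0$ and then taking the supremum over $I_1\in\sem{S_1}$ yields the claim.

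The only real subtlety --- and hence the main obstacle --- is the bookkeeping around the asymmetry and the possibly infinite, possibly empty implementation sets: one must resist picking exact witnesses inside the infima and instead carry $\epsilon$-approximations through the composition, and dispatch the $\infty$ and empty-set cases separately before the $\epsilon$-argument even starts. Everything else is routine, riding entirely on the fact that $\bdist$ satisfies the triangle inequality on implementations.
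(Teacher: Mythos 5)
Your proof is correct and follows essentially the same route as the paper, which dismisses $\tdist(S,S)=0$ as trivial and refers the triangle inequality to the standard Hausdorff-distance argument in Aliprantis--Border (Lemma~3.72); your $\epsilon$-chase through $\sem{S_1}$, $\sem{S_2}$, $\sem{S_3}$ using the triangle inequality for the pseudometric $\bdist$ is exactly that argument, spelled out. The care you take with the empty-set and $\infty$ cases is sound and slightly more explicit than the paper's treatment.
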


\begin{proof}
  To show that $d_t( S, S)= 0$ is trivial, and the triangle inequality
  $d_t( S_1, S_2)+ d_T( S_2, S_3)\ge d_t( S_1, S_3)$ follows like in the
  proof of~\cite[Lemma~3.72]{book/AliprantisB07}. \qed
\end{proof}

Indeed this permits us to measure incompatibility of specifications;
intuitively, if two specifications have thorough distance $\epsilon$,
then any implementation of the first specification can be matched by an
implementation of the second up to $\epsilon$.  Also observe the special
case where $S_1= I_1$ is an implementation: then $\tdist( I_1, S_2)=
\inf_{ I_2\in \llbracket S_2\rrbracket} \bdist( I_1, I_2)$, which
measures how close $I_1$ is to satisfy the specification $S_2$.

\medskip
To facilitate computation and comparison of refinement distance, we
introduce \emph{modal} refinement distance as an overapproximation.  We
will show in Theorem~\ref{pr:exptime-hard} below that similarly to the
Boolean setting~\cite{DBLP:conf/ictac/BenesKLS09}, computation of
thorough refinement distance is \EXPTIME-hard, whereas modal refinement
distance is computable in \NP~$\cap$~\coNP.

First we generalize the distance on implementation labels from
Equation~\eqref{eq:d_imp} to specification labels, again using a
Hausdorff-type construction.  For $k, \ell \in \K$ we define
\begin{equation*}
  \kdist( k, \ell)= \adjustlimits \sup_{ k'\sqsubseteq k, k'\in \Imp{}\,}
  \inf_{\, \ell'\sqsubseteq \ell, \ell'\in \Imp{}} \idist( k', \ell').
\end{equation*}
Note that $\kdist$ is asymmetric, and that $\kdist( k, \ell)= 0$ if and
only if $k\sqsubseteq \ell$.  Also, $\kdist( k, \ell)= \idist( k, \ell)$
for all $k, \ell\in \Imp \K$.  In more elementary terms, we can express
$\kdist$ as follows:
\begin{align*}
  \kdist\big(( a_1, I_1),( a_2, I_2)\big) &= \infty \quad \text{if }
  a_1\ne a_2 \\
  \kdist\big(( a,[ x_1, y_1]),( a,[ x_2, y_2])\big) &= \max(
  x_2 - x_1, y_1 - y_2, 0)
\end{align*}

\begin{definition}
  \label{de:acc.mo.dist}
  Let $S_1$, $S_2$ be WMTS.  The \emph{modal refinement distance}
  $\mdist: S_1\times S_2\to \Realnn\cup\{ \infty\}$ from states of
  $S_1$ to states of $S_2$ is the least fixed point of the equations
  \begin{equation*}
    \mdist( s_1, s_2)= \max\left\{
      \begin{aligned}
        &\adjustlimits \sup_{ s_1\mayto[ k_1]_1 t_1} \inf_{ s_2\mayto[
          k_2]_2 t_2} \kdist( k_1, k_2)+ \lambda \mdist( t_1, t_2), \\
        &\adjustlimits \sup_{ s_2\mustto[ k_2]_2 t_2} \inf_{
          s_1\mustto[ k_1]_1 t_1} \kdist( k_1, k_2)+ \lambda \mdist(
        t_1, t_2).
      \end{aligned}
    \right.
  \end{equation*}
  We define $\mdist(S_1,S_2) = \mdist(s_1^0, s_2^0)$, and we write
  $S_1\le_m^\epsilon S_2$ if $\mdist( S_1, S_2)\le \epsilon$.
\end{definition}

\begin{lemma}
  The modal refinement distance is well-defined, and is a hemimetric.
\end{lemma}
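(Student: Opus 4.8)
The plan is to follow the proof of Lemma~\ref{le:impdistmet}, with two changes forced by the fact that the label distance $\kdist$ which now replaces $\idist$ is asymmetric: we obtain only a hemimetric, and the triangle inequality can no longer be quoted but must be re-derived. For well-definedness, note that the right-hand side of the defining equation induces an operator $F$ on the complete lattice of functions $S_1\times S_2\to\Realnn\cup\{\infty\}$ (ordered pointwise), and $\mdist$ is by definition the least fixed point of $F$. Since $\kdist\ge 0$, $\lambda>0$, and $F$ is assembled only from $\sup$, $\inf$, $+$ and $\max$, the operator $F$ is monotone, so by Knaster--Tarski this least fixed point exists. (As in Lemma~\ref{le:impdistmet}, $\lambda<1$ makes $F$ in addition a $\lambda$-contraction in the sup-metric, so Banach's theorem for extended metric spaces applies as well; I would mention this only for consistency with the treatment of $\bdist$.)

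To prove $\mdist(s,s)=0$ for every state $s$ of a WMTS $S$, I would exhibit a function $d'$ on $S\times S$ with $F(d')\le d'$ whose diagonal value is $0$: take $d'(s_1,s_2)=0$ if $s_1=s_2$ and $d'(s_1,s_2)=\infty$ otherwise. At a diagonal pair $(s,s)$, every transition $s\mayto[k]t$ can be matched by itself, contributing $\kdist(k,k)+\lambda\,d'(t,t)=0$ (here $\kdist(k,k)=0$ because $k\sqsubseteq k$), and likewise for the \must-clause, so $F(d')(s,s)=0$; off the diagonal, $F(d')(s_1,s_2)\le\infty=d'(s_1,s_2)$ holds trivially. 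Since the least fixed point lies below every such $d'$, we get $\mdist(s,s)\le d'(s,s)=0$.

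For the triangle inequality $\mdist(s_1,s_3)\le\mdist(s_1,s_2)+\mdist(s_2,s_3)$, I would first record that $\kdist$ is itself a hemimetric: its triangle inequality is immediate from the elementary formula $\kdist\big((a,[x_1,y_1]),(a,[x_2,y_2])\big)=\max(x_2-x_1,\,y_1-y_2,\,0)$ together with the $\infty$-clause when the actions differ. Then, fixing WMTS $S_1,S_2,S_3$ and writing $F^{ij}$ and $d^{ij}$ for the operator and its least fixed point associated with $S_i,S_j$, I would show that $h(s_1,s_3)=\inf_{s_2\in S_2}\big(d^{12}(s_1,s_2)+d^{23}(s_2,s_3)\big)$ satisfies $F^{13}(h)\le h$; this yields $d^{13}\le h$, which unfolds precisely to the triangle inequality. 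For the \may-part of $F^{13}(h)(s_1,s_3)$, fix a transition $s_1\mayto[k_1]_1 t_1$; given $s_2$ and $\epsilon>0$, the \may-clause of $d^{12}=F^{12}(d^{12})$ at $(s_1,s_2)$ provides $s_2\mayto[k_2]_2 t_2$ with $\kdist(k_1,k_2)+\lambda\,d^{12}(t_1,t_2)\le d^{12}(s_1,s_2)+\epsilon$, and then the \may-clause of $d^{23}$ at $(s_2,s_3)$ provides $s_3\mayto[k_3]_3 t_3$ with $\kdist(k_2,k_3)+\lambda\,d^{23}(t_2,t_3)\le d^{23}(s_2,s_3)+\epsilon$; using $\kdist(k_1,k_3)\le\kdist(k_1,k_2)+\kdist(k_2,k_3)$ and $h(t_1,t_3)\le d^{12}(t_1,t_2)+d^{23}(t_2,t_3)$, the summand $\kdist(k_1,k_3)+\lambda\,h(t_1,t_3)$ is bounded by $d^{12}(s_1,s_2)+d^{23}(s_2,s_3)+2\epsilon$, so, after letting $\epsilon\to 0$, taking the infimum over $s_2$ and the supremum over $s_1\mayto[k_1]_1 t_1$, the \may-part is at most $h(s_1,s_3)$. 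The \must-part is the mirror image with $S_1$ and $S_3$ interchanged: one starts from $s_3\mustto[k_3]_3 t_3$, uses the \must-clause of $d^{23}$ to produce a matching $s_2\mustto[k_2]_2 t_2$, then the \must-clause of $d^{12}$ to produce $s_1\mustto[k_1]_1 t_1$, and the same estimate applies. I expect this last step to be the only genuine obstacle: the work is entirely in the bookkeeping of the nested $\sup$/$\inf$ with $\epsilon$-approximations, keeping the \may-chain and the \must-chain separate and getting the quantifier alternation right across three systems, and it is precisely the asymmetry of $\kdist$ that prevents one from simply reusing the pseudometric argument of Lemma~\ref{le:impdistmet}.
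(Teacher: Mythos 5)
Your proposal is correct. It follows the same overall strategy the paper gestures at -- establish the least fixed point, verify $\mdist(s,s)=0$, and prove the triangle inequality by a fixed-point/inductive argument -- but where the paper's proof is a two-line sketch deferring to the contraction argument of Lemma~\ref{le:impdistmet} and to the cited reference for the triangle inequality, you supply a complete, self-contained version. The one genuine methodological difference is that you base well-definedness primarily on monotonicity and Knaster--Tarski rather than on the Banach contraction argument the paper invokes (you mention the latter only in passing); this is harmless and in fact matches what the paper itself does later in the proof of Lemma~\ref{le:family}. Your treatment of $\mdist(s,s)=0$ via the pre-fixed point $d'$ and of the triangle inequality via the pre-fixed point $h(s_1,s_3)=\inf_{s_2}\big(d^{12}(s_1,s_2)+d^{23}(s_2,s_3)\big)$, using the triangle inequality for $\kdist$ and the $\epsilon$-approximation of the infima, is exactly the bookkeeping the paper's phrase ``can again be shown inductively'' is hiding, and it goes through as you describe.
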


\begin{proof}
  Like in the proof of Lemma~\ref{le:impdistmet}, the argument for
  existence of a unique least fixed point to the defining equations is
  that they define a contraction.  The triangle inequality can again be
  shown inductively, and the property $\mdist( s, s)= 0$ is clear. \qed
\end{proof}

We can now give a precise definition of compact branching; for this we
need the notions of symmetrization of a hemimetric and of product
distance as defined on page~\pageref{pg:distsym}.

\begin{definition}
  \label{de:comp}
  A WMTS $S$ is said to be \emph{compactly branching} if the sets $\{(
  s', k)\mid s\mayto[ k] s'\}, \{( s', k)\mid s\mustto[ k] s'\}\subseteq
  S\times \K$ are compact under the symmetrized product distance $\bar
  d_m\times \bar d_\K$ for every $s\in S$.
\end{definition}

The notion of compact branching was first introduced, for a formalism of
\emph{metric transition systems}, in~\cite{Breugel95-Metric}.  It is a
natural generalization of the standard requirement on transition systems
to be \emph{finitely branching} to a distance setting; we will need it
for the property that continuous functions defined on the sets $\{( s',
k)\mid s\mayto[ k] s'\}, \{( s', k)\mid s\mustto[ k] s'\}\subseteq
S\times \K$, for some $s\in S$, attain their infimum and supremum, see
Lemma~\ref{le:family} and its proof below.

Thus, we shall henceforth assume all our WMTS to be compactly
branching.  The following lemma sets up some sufficient conditions for
this to be the case.

\begin{lemma}
  Let $S$ be a WMTS and define the sets $L_i( s, a)$, $U_i( s, a)$ for
  all $s\in S$, $a\in \Sigma$ and $i\in\{ 1, 2\}$ by
  \begin{align*}
    L_1( s, a) &= \{ l\mid s\mayto[{ a,[ l, r]}] s'\}, &
    L_2( s, a) &= \{ l\mid s\mustto[{ a,[ l, r]}] s'\}, \\
    U_1( s, a) &= \{ r\mid s\mayto[{ a,[ l, r]}] s'\}, 
    & U_2( s, a) &= \{ r\mid s\mustto[{ a,[ l, r]}] s'\}.
  \end{align*}
  Then $S$ is compactly branching if
  \begin{itemize}
  \item for all $s\in S$, any Cauchy sequence $( s'_n)_{ n\in \Nat}$ in
    $\{ s'\mid s\mayto s'\}$ (with pseudometric $\bar d_m$) has $\lim_{
      n\to \infty} s_n\in\{ s'\mid s\mayto s'\}$, and likewise, any
    Cauchy sequence $( s'_n)_{ n\in \Nat}$ in $\{ s'\mid s\mustto s'\}$
    has $\lim_{ n\to \infty} s_n\in\{ s'\mid s\mustto s'\}$, and
  \item for all $s\in S$, $a\in \Sigma$ and $i\in\{ 1, 2\}$, $L_i$ is
    finite or $-\infty\in L_i$, and $U_i$ is finite or $\infty\in U_i$.
  \end{itemize}
\end{lemma}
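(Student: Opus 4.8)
The plan is to verify compactness directly from the definition: for a fixed state $s$ I must show that the two sets $M_s=\{(s',k)\mid s\mayto[k]s'\}$ and $N_s=\{(s',k)\mid s\mustto[k]s'\}$ are compact under the symmetrised product distance $\bar d_m\times\bar d_\K$. The hypotheses are symmetric under exchanging \may with \must (and $L_1,U_1$ with $L_2,U_2$), so it suffices to treat $M_s$ and then copy the argument. Since we are in a (pseudo)metric space, I would prove \emph{sequential} compactness: given an arbitrary sequence $(s'_n,k_n)_{n\in\Nat}$ in $M_s$, I would thin it out so as to stabilise, one after another, the action component of $k_n$, then its interval component, and finally the target $s'_n$, arriving at a subsequence that converges to a point of $M_s$.

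For the action: labels with distinct actions are at $\bar d_\K$-distance $\infty$ by the closed form of $d_\K$, so a convergent subsequence must eventually use one action; as $\Sigma$ is finite I would pass to a subsequence with $k_n=(a,[l_n,r_n])$ for a fixed $a$, whence $l_n\in L_1(s,a)$ and $r_n\in U_1(s,a)$. For the interval the second hypothesis is used: if $L_1(s,a)$ is finite then $(l_n)$ takes finitely many values and I would thin out to a subsequence on which $l_n$ is a constant $l$; the clause ``$-\infty\in L_1(s,a)$'' is to be used to rule out the only remaining case, in which the lower bounds on $a$-labelled transitions out of $s$ form an infinite family, which — being uniformly $\bar d_\K$-separated — would already contradict compactness, again leaving a constant subsequence; symmetrically for $r_n$ via $U_1(s,a)$ and ``$\infty\in U_1(s,a)$''. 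After these steps $k_n=k=(a,[l,r])$ is fixed, and the task reduces to finding a subsequence of $(s'_n)$ that $\bar d_m$-converges to some $t$ with $s\mayto[k]t$, \ie to a point of $T:=\{s'\mid s\mayto[k]s'\}\subseteq\{s'\mid s\mayto s'\}$.

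The first hypothesis enters in this last step. The idea is first to extract a $\bar d_m$-Cauchy subsequence of $(s'_n)$: two states are separated by a fixed positive amount as soon as their first transitions differ, while everything reachable afterwards contributes only through the factor $\lambda<1$, and this contraction is exactly what makes ``compactly branching'' the right quantitative analogue of ``finitely branching''; by completeness of $\{s'\mid s\mayto s'\}$ (the first hypothesis) such a subsequence has a limit $t$ in that set; and one finally argues $t\in T$, \ie that the $k$-labelled transition survives in the limit — using that the tail of the subsequence carries exactly the label $k$, so a $\bar d_m$-limit can neither drop nor alter the $k$-transition to $t$. Hence $(s'_n,k)\to(t,k)\in M_s$. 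Reading \must for \may and $L_2,U_2$ for $L_1,U_1$ throughout gives the same for $N_s$, so $S$ is compactly branching.

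The main obstacle I expect is precisely this third part. The first hypothesis only asserts \emph{completeness} of the successor sets, whereas sequential compactness also needs a Cauchy subsequence extracted from an \emph{arbitrary} sequence of $k$-successors; the action and interval coordinates reduce, by finiteness of $\Sigma$ and the second hypothesis, to finite-alphabet bookkeeping, but the target coordinate lives in the genuine metric space $(S,\bar d_m)$, and the extraction must be powered by the discount factor $\lambda$ — and, where needed, by the regularity implicit in looking only at successors of one state — to rule out the $k$-successors spreading out so as to defeat convergence. A smaller point needing care is the reading of the extended-integer endpoints: the closed form $\bar d_\K\big((a,[x_1,y_1]),(a,[x_2,y_2])\big)=\max(|x_1-x_2|,|y_1-y_2|)$ must be taken with the convention under which $\bar d_\K$ is genuinely a hemimetric, so that ``$-\infty\in L_i$'' and ``$\infty\in U_i$'' really correspond to an attained, isolated extreme endpoint and do not silently allow an infinite, uniformly separated family of labels.
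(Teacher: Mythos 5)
Your route differs from the paper's: the paper bounds the transition set inside the product (successor set)$\,\times\,$(label set), obtains compactness of the label component from the second hypothesis by splitting along the finite alphabet and analysing the endpoint sets $L_i,U_i$, asserts compactness of the successor component from the first hypothesis, and invokes Tychonoff; you instead argue sequential compactness directly, stabilising action, then interval, then target. Your action and interval stages essentially reproduce the paper's analysis of the label component. The genuine gap --- which you candidly flag but do not close --- is the target stage. The first hypothesis is a \emph{completeness} assumption: Cauchy sequences of successors converge inside the successor set. It gives no total boundedness, and the hope that the discount factor $\lambda$ lets you extract a $\bar d_m$-Cauchy subsequence from an \emph{arbitrary} sequence of $k$-successors is unfounded: the successors $t_1,t_2,\dots$ of $s$ can be pairwise at distance $\ge 1$ already on account of their own immediate outgoing labels (say $t_i\mayto[{a,[i,i]}]u$), with no discounting involved; both hypotheses of the lemma are then satisfied by such a configuration while $\{(t_i,k)\mid i\in\Nat\}$ is infinite and uniformly separated. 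So this step is not merely unfinished --- it cannot be carried out from the stated hypotheses alone. (To be fair, the paper's own proof is equally terse at exactly this point: it simply asserts that the first condition makes the successor sets compact.)

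Two further issues. First, your closing claim that the $\bar d_m$-limit $t$ of the $k$-successors is itself a $k$-successor ``because the tail carries the label $k$'' does not follow: the label $k$ decorates the transitions \emph{from $s$ into} the $s'_n$, not anything observable from the $s'_n$ themselves, so the first hypothesis only places $t$ in the successor set under \emph{some} label; one additionally needs the transition set to be closed in the product --- the same point the paper elides when it applies Tychonoff to a product of which the transition set is only a subset. Second, your treatment of the infinite-endpoint case is inverted: the hypothesis explicitly \emph{permits} $L_i$ to be infinite provided $-\infty\in L_i$, the intended reading being that $-\infty$ is then the limit point rendering $L_i$ compact, whereas you try to ``rule out'' the infinite case as contradicting the compactness you are in the middle of proving. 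Your underlying instinct is sound --- under the literal formula for $\bar d_\K$ a sequence of finite lower bounds tending to $-\infty$ is not Cauchy, so the label set is then not compact --- but the repair lies in the metric on extended integers (or in the lemma's statement), not in discarding the case the hypothesis is designed to cover.
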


Note that the first property mimicks (and generalizes) standard
properties of finite branching and \emph{saturation},
\cf~\cite[Sect.~3.3]{DBLP:journals/toplas/Sangiorgi09}.  The intuition
is that if $s$ has (either \may or \must) transitions to a converging
sequence of states, then it also has a transition to the limit.

\begin{proof}
  The first condition implies that the sets $\{ s'\in S\mid s\mayto
  s'\}$ and $\{ s'\in S\mid s\mustto s'\}$ are compact in the
  pseudometric $\bar d_m$ for all $s\in S$.  By Tychonoff's theorem,
  products of compact sets are compact, so we need only show that the
  second condition implies that the sets $\{ k\in \K\mid s\mayto[ k]
  s'\}$ and $\{ k\in \K\mid s\mustto[ k] s'\}$ are compact in the
  pseudometric $\bar d_\K$ for every $s\in S$.

  Let $s\in S$.  By definition of $\kdist$, the sets $\{ k\mid s\mayto[
  k] s'\}$, $\{ k\mid s\mustto[ k] s'\}$ fall into connected components
  $\{ I\mid s\mayto[ a, I] s'\}$, $\{ I\mid s\mustto[ a, I] s'\}$ for
  all $a\in \Sigma$, hence the former are compact if and only if all the
  latter are.  These in turn are compact if and only if the four sets
  $L_i$, $U_i$ in the lemma, collecting lower and upper bounds of
  intervals, are compact.  Now interval bounds are extended integers, so
  a sequence in $L_i$ or $U_i$ converges if and only if it is eventually
  stable or goes towards $-\infty$ or $\infty$.  If the sets are finite,
  eventual stability is the only option; if they are infinite, they need
  to include the limit points $-\infty$ (for the lower interval bounds
  in $L_i$) or $\infty$ (for the upper interval bounds in $U_i$). \qed
\end{proof}

There is a powerful proof technique introduced for branching
distances between implementations in~\cite{journals/jlp/ThraneFL10}
that we here extend to modal refinement distance.  We define a
\emph{modal refinement family} as an $\Realnn$-indexed family of
relations $R=\{ R_\epsilon\subseteq S_1\times S_2\mid \epsilon\ge 0\}$
such that for any $\epsilon$ and any $(s_1, s_2)\in R_\epsilon$,
\begin{itemize}
\item whenever $s_1\mayto[ k_1] t_1$ for some $k_1 \in \K$, $t_1 \in
  S_1$, then there exists $s_2\mayto[ k_2] t_2$ for some $k_2 \in \K$,
  $t_2 \in S_2$, such that $\kdist( k_1, k_2)\le \epsilon$ and $( t_1,
  t_2)\in R_{ \epsilon'}$ for some $\epsilon'\le \lambda^{ -1}\big(
  \epsilon- \kdist( k_1, k_2)\big)$,
\item whenever $s_2\mustto[ k_2] t_2$ for some $k_2 \in \K$, $t_2 \in
  S_2$, then there exists $s_1\mustto[ k_1] t_1$ for some $k_1 \in
  \K$, $t_1 \in S_1$, such that $\kdist( k_1, k_2)\le \epsilon$ and $(
  t_1, t_2)\in R_{ \epsilon'}$ for some $\epsilon'\le \lambda^{
    -1}\big( \epsilon- \kdist( k_1, k_2)\big)$.
\end{itemize}
Note that modal refinement families are
\begin{itemize}
\item \emph{upward closed} in the sense that $( s_1, s_2)\in R_\epsilon$
  implies that $( s_1, s_2)\in R_{ \epsilon'}$ for all $\epsilon'\ge
  \epsilon$, and
\item \emph{downward closed} in the sense that for any set $E\subseteq
  \Realnn$, if $(s_1,s_2) \in R_\epsilon$ for all $\epsilon\in E$, then
  also $(s_1,s_2) \in R_{\inf E}$.  This property follows from the
  assumption that our WMTS are compactly branching.
\end{itemize}

Following the proof strategy developed
in~\cite{journals/jlp/ThraneFL10} for implementations, we can show the
following characterization of modal refinement distance by modal
refinement families:

\begin{lemma}
  \label{le:family}
  $S_1\le_m^\epsilon S_2$ if and only if there is a modal refinement
  family $R$ with $( s_1^0, s_2^0)\in R_\epsilon \in R$.
\end{lemma}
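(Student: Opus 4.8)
The plan is to prove the two implications separately, using the fixed-point characterization of $\mdist$ from Definition~\ref{de:acc.mo.dist} and the compact-branching hypothesis (Definition~\ref{de:comp}); the argument parallels the implementation-level result of~\cite{journals/jlp/ThraneFL10}, with $\kdist,\mdist$ in place of $\idist,\bdist$.

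For the ``if'' direction, I would start from a modal refinement family $R$ with $(s_1^0,s_2^0)\in R_\epsilon$ and define $D\colon S_1\times S_2\to\Realnn\cup\{\infty\}$ by $D(s_1,s_2)=\inf\{\delta\ge 0\mid (s_1,s_2)\in R_\delta\}$ (with $\inf\emptyset=\infty$); by downward closure, $(s_1,s_2)\in R_{D(s_1,s_2)}$ whenever $D(s_1,s_2)<\infty$. I would then show that $D$ is a pre-fixed point of the monotone operator $F$ defining $\mdist$ (monotonicity of $F$ is immediate, and $\mdist$ is its least, hence least pre-, fixed point). Fixing $(s_1,s_2)$ with $\eta:=D(s_1,s_2)<\infty$ and a transition $s_1\mayto[k_1]t_1$, the family property at level $\eta$ gives $s_2\mayto[k_2]t_2$ with $\kdist(k_1,k_2)\le\eta$ and $(t_1,t_2)\in R_{\eta'}$ for some $\eta'\le\lambda^{-1}(\eta-\kdist(k_1,k_2))$, hence $\kdist(k_1,k_2)+\lambda D(t_1,t_2)\le\kdist(k_1,k_2)+\lambda\eta'\le\eta$; taking the infimum over matching \may-transitions and then the supremum over $s_1\mayto[k_1]t_1$ preserves this bound, and the symmetric argument bounds the \must-component, so $F(D)(s_1,s_2)\le\eta=D(s_1,s_2)$. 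Consequently $\mdist\le D$ pointwise, and $\mdist(s_1^0,s_2^0)\le D(s_1^0,s_2^0)\le\epsilon$.

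For the ``only if'' direction, I would take the canonical candidate $R_\delta=\{(s_1,s_2)\mid\mdist(s_1,s_2)\le\delta\}$ and verify it is a modal refinement family. Let $(s_1,s_2)\in R_\delta$ and $s_1\mayto[k_1]t_1$; since $\mdist$ is a fixed point of $F$, $\inf_{s_2\mayto[k_2]t_2}\kdist(k_1,k_2)+\lambda\mdist(t_1,t_2)\le\mdist(s_1,s_2)\le\delta$. The key step is that this infimum is attained: the map $(t_2,k_2)\mapsto\kdist(k_1,k_2)+\lambda\mdist(t_1,t_2)$ is Lipschitz with respect to $\bar d_\K$ and $\bar d_m$ (by the triangle inequalities for $\kdist$ and $\mdist$), hence continuous on $S\times\K$, and the set $\{(t_2,k_2)\mid s_2\mayto[k_2]t_2\}$ is compact under $\bar d_m\times\bar d_\K$ by Definition~\ref{de:comp}; so some $s_2\mayto[k_2]t_2$ realizes the infimum. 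For this witness, $\kdist(k_1,k_2)\le\delta$, $(t_1,t_2)\in R_{\mdist(t_1,t_2)}$, and $\mdist(t_1,t_2)=\lambda^{-1}\big(\lambda\mdist(t_1,t_2)\big)\le\lambda^{-1}\big(\delta-\kdist(k_1,k_2)\big)$, which are exactly the required conditions; the \must-clause is symmetric, using compactness of $\{(t_1,k_1)\mid s_1\mustto[k_1]t_1\}$. Since $\mdist(s_1^0,s_2^0)\le\epsilon$, we conclude $(s_1^0,s_2^0)\in R_\epsilon$.

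The main obstacle I expect is precisely this attainment of the infimum (and, symmetrically, of the supremum) in the ``only if'' direction: without compactness one could only build a family with $\epsilon$-slack, which would not meet the stated clauses with their exact inequalities. Compact branching, together with the continuity of the distance functions coming from their triangle inequalities, is what supplies the needed witness; the same hypothesis underlies the downward-closure property used to make the definition of $D$ work in the ``if'' direction. Minor care will be needed with the value $\infty$ and with the interplay between ``$\le$'' and ``$=$'' for the realizing transitions, but these are routine.
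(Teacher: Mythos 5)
Your proposal is correct and follows essentially the same route as the paper: the ``only if'' direction uses the canonical family $R_\delta=\{(s_1,s_2)\mid \mdist(s_1,s_2)\le\delta\}$ with compact branching supplying a transition attaining the infimum, and the ``if'' direction shows that $(s_1,s_2)\mapsto\inf\{\delta\mid(s_1,s_2)\in R_\delta\}$ is a pre-fixed point of the monotone operator defining $\mdist$, so that the least fixed point lies below it. Your extra remarks on Lipschitz continuity of $\kdist$ and $\mdist$ and on downward closure only make explicit what the paper's proof uses implicitly.
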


\begin{proof}
  First, assume that $S_1\le_m^\epsilon S_2$, \ie~$d_m(s^0_1,s^0_2) \le
  \epsilon$, and define a relation family $R = \{ R_\delta \mid \delta
  \ge 0 \}$ by $R_\delta = \{ (s_1,s_2) \in S_1 \times S_2 \mid
  d_m(s_1,s_2) \le \delta \}$ for all $\delta \ge 0$, then
  $(s^0_1,s^0_2) \in R_\epsilon$ holds by assumption.  We show that $R$
  is a modal refinement family.  Let $(s_1,s_2) \in R_\delta$ for some
  $\delta \ge 0$, then by definition we know that $d_m(s_1,s_2) \le
  \delta$. Assume $s_1 \mayto[k_1]_1 t_1$. From $d_m(s_1,s_2) \le
  \delta$ we can infer that
  \begin{equation*}
    \inf_{ s_2\mayto[ k_2]_2 t_2} \kdist( k_1, k_2)+ \lambda
    \mdist( t_1, t_2) \le \delta.
  \end{equation*}
  Hence, because $S_2$ is compactly branching, there exists a
  may-transition $s_2 \mayto[k_2] t_2$ such that $\kdist( k_1, k_2) \le
  \delta$ and $d_m(t_1,t_2) \le \lambda^{-1}(\delta -
  \kdist(k_1,k_2))$. The latter implies that $(t_1,t_2) \in R_{\delta'}$
  for some $\delta' \le \lambda^{-1}(\delta - \kdist(k_1,k_2))$ which
  was to be shown. The argument for the other assertion for
  must-transitions is symmetric. This proves that there is a modal
  refinement family $R$ such that $(s^0_1,s^0_2) \in R_\epsilon \in R$.

  For the reverse direction, assume that $( s_1^0, s_2^0)\in R_\epsilon
  \in R$ for some modal refinement family $R = \{R_\epsilon \mid
  \epsilon \ge 0 \}$. We prove that $(s_1,s_2) \in R_\delta$, for some
  $\delta \ge 0$, implies $d_m(s_1,s_2) \le \delta$. The claim $S_1
  \le_m^\epsilon S_2$ then follows from the assumption $(s^0_1,s^0_2)
  \in R_\epsilon$.

  To this end, observe that the space of functions $\Delta = [S_1 \times
  S_2 \to \Realnn \cup \{\infty\}]$ forms a complete lattice, when the
  partial order $\le_\Delta$ is defined such that for $f, f' \in \Delta$, $f
  \le_\Delta f'$ iff $f(s_1,s_2) \le f'(s_1,s_2)$ for all $s_1\in S_1$, $s_2\in
  S_2$. Moreover, since $\max, \sup, \inf$ and $+$ are monotone, the
  function $D$ defined for all $f \in \Delta$ by
  \begin{equation*}
    D(f) = \max
      \begin{cases}
        &\adjustlimits \sup_{ s_1\mayto[ k_1]_1 t_1} \inf_{ s_2\mayto[
          k_2]_2 t_2} \kdist( k_1, k_2)+ \lambda f( t_1, t_2), \\
        &\adjustlimits \sup_{ s_2\mustto[ k_2]_2 t_2} \inf_{
          s_1\mustto[ k_1]_1 t_1} \kdist( k_1, k_2)+ \lambda f(
        t_1, t_2)
      \end{cases}
  \end{equation*}
  is a monotone endofunction on $\Delta$, hence by Tarski's fixed point
  theorem, $D$ has a least fixed point. Now let us define $h(s_1,s_2) =
  \inf\{\delta \mid (s_1,s_2) \in R_\delta \in R\}$, and since
  $R_\delta$ is downward closed, we have that $(s_1,s_2)\in
  R_{h(s_1,s_2)}$. By showing that $h$ is a pre-fixed point of $D$,
  \ie~that $D(h) \le_\Delta h$, we get that $(s_1,s_2)\in R_\delta$ implies
  that $d_m(s_1,s_2) \le \delta$, since $h(s_1,s_2) \le \delta$ and
  $d_m(s_1,s_2) \le h(s_1,s_2)$.

  Since $(s_1,s_2)\in R_{h(s_1,s_2)}$ every $s_1\mayto[k_1]s'_1$ can be
  matched by some $s_2\mayto[k_2] s'_2$ such that $\kdist(k_1,k_2) +
  \lambda \delta' \le h(s_1,s_2)$ for some $\delta'$ where $(s'_1,s'_2)
  \in R_{\delta'}$, implying $h(s'_1,s'_2) \le \delta'$, but then also
  $\kdist(k_1,k_2) + \lambda h(s'_1,s'_2) \le h(s_1,s_2)$. Similarly,
  every $s_2\mustto[k_2] s'_2$ has a match $s_1 \mustto[k_1] s'_1$ such
  that $\kdist(k_1,k_2) + \lambda h(s'_1,s'_2) \le h(s_1,s_2)$. Hence we
  have $D(h) \le_\Delta h$ which was to be shown. \qed
\end{proof}

The next theorems show that modal refinement distance indeed
overapproximates thorough refinement distance, and that it is exact for
deterministic WMTS.  Note that nothing general can be said about the
precision of the overapproximation in the nondeterministic case; as an
example observe the two specifications in
Figure~\ref{fig:incompleteness} for which $\tdist(S_1,S_2) = 0$ but
$\mdist(S_1,S_2) = \infty$.

\begin{figure}[tpb]
  \centering
  \subfigure
  {
    \raisebox{6mm}{
    \begin{tikzpicture}[->,>=stealth',shorten >=1pt,auto,node
      distance=2.0cm,initial text=,scale=0.8,transform shape]
      \tikzstyle{every node}=[font=\small] \tikzstyle{every
        state}=[fill=white,shape=circle,inner sep=.5mm,minimum size=6mm]
      \node[initial,state] (s1) at (0,0) {$s_1$};
      \node (S1) at (-1,0) {$S_1$};
      \node[state] (s2) at (2,0) {$t_1$};
      \path (s1) edge [densely dashed] node [above] {$a,[ 0, 1]$} (s2);
    \end{tikzpicture}}}
  \hspace{2mm}
  \subfigure
  {
    \begin{tikzpicture}[->,>=stealth',shorten >=1pt,auto,node
      distance=2.0cm,initial text=,scale=0.8,transform shape]
      \tikzstyle{every node}=[font=\small] \tikzstyle{every
        state}=[fill=white,shape=circle,inner sep=.5mm,minimum size=6mm]
      \node[initial,state] (s1) at (0,0) {$s_2$};
      \node (S2) at (-1,0) {$S_2$};
      \node[state] (s2) at (2,.8) {$t_2$};
      \node[state] (s3) at (2,-.8) {$t_3$};
      \path (s1) edge [densely dashed] node [above,sloped] {$a,[ 0, 0]$} (s2);
      \path (s1) edge [densely dashed] node [above,sloped] {$a,[ 1, 1]$} (s3);
    \end{tikzpicture}}
  
  \caption{\label{fig:incompleteness}%
    Incompleteness of modal refinement distance: $\tdist( S_1, S_2)= 0$,
    but $\mdist( S_1, S_2)= \infty$.}
\end{figure}
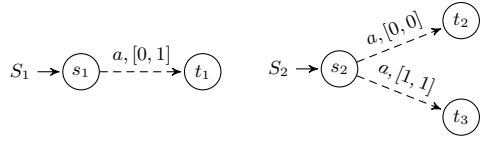

\begin{theorem}
  \label{th:dtledm}
  For WMTS $S_1$, $S_2$ we have $\tdist( S_1, S_2)\le \mdist( S_1,
  S_2)$.
\end{theorem}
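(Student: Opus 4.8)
The plan is to show that $\mdist(S_1, S_2) \le \epsilon$ implies $\tdist(S_1, S_2) \le \epsilon$; equivalently, assuming $S_1 \le_m^\epsilon S_2$, I must produce for every implementation $I_1 \in \sem{S_1}$ an implementation $I_2 \in \sem{S_2}$ with $\bdist(I_1, I_2) \le \epsilon$ (or, more carefully, $\le \epsilon + \delta$ for every $\delta > 0$, to match the definition of $\bdist$ on infinite implementation sets). The natural route is to compose refinements: if $I_1 \le_m S_1$ and $S_1 \le_m^\epsilon S_2$, I want to build an $I_2$ with $I_2 \le_m S_2$ and $\bdist(I_1, I_2) \le \epsilon$. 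I would use the modal refinement family characterization from Lemma \ref{le:family}: fix a modal refinement family $R$ for $S_1 \le_m^\epsilon S_2$ with $(s_1^0, s_2^0) \in R_\epsilon$, and fix an ordinary modal refinement $Q \subseteq I_1 \times S_1$ witnessing $I_1 \le_m S_1$.

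The core of the argument is to construct $I_2$ by "running" $I_1$ against $S_2$ through the intermediary $S_1$, synchronizing along $Q$ and $R$. Concretely, I would define $I_2$ to have states that are reachable triples $(i, s_1, s_2, \delta)$ — or more simply, I would define $I_2$ directly on the state set of $I_1$ paired with the bookkeeping needed for the family index, together with a chosen witness transition in $S_2$ at each step. For each transition $i \tto[a, x]_1 i'$ in $I_1$: since $Q$ relates $i$ to some $s_1$, there is a matching \emph{may}-transition $s_1 \mayto[k_1]_1 t_1$ with $(a, x) \sqsubseteq k_1$ and $(i', t_1) \in Q$; since $R_\delta$ relates $s_1$ to $s_2$ (for the current index $\delta$), there is $s_2 \mayto[k_2]_2 t_2$ with $\kdist(k_1, k_2) \le \delta$ and $(t_1, t_2) \in R_{\delta'}$ for $\delta' \le \lambda^{-1}(\delta - \kdist(k_1,k_2))$. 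Now I pick an implementation label $(a, y) \sqsubseteq k_2$ minimizing $\idist\big((a,x),(a,y)\big)$ among labels below $k_2$; because $(a,x) \sqsubseteq k_1$ and $\kdist(k_1,k_2) \le \delta$, the definition of $\kdist$ via the Hausdorff construction on implementation labels gives $\idist\big((a,x),(a,y)\big) \le \kdist(k_1,k_2) \le \delta$. This defines a transition of $I_2$, and I must also check the symmetric \emph{must}-clause so that $I_2 \le_m S_2$ holds — here I use that $I_1$'s \must and \may transitions coincide, so the \must-matching of $S_2$ by $S_1$ (through $R$) and of $S_1$ by $I_1$ (through $Q$) lines up. Then I argue $\bdist(I_1, I_2) \le \epsilon$ by exhibiting the relation pairing each $i$ of $I_1$ with its copy in $I_2$ and checking that it satisfies the defining inequalities of $\bdist$ with the running bound $\delta$: at each step the label cost is at most $\kdist(k_1,k_2)$ and the residual distance is discounted, so the accumulated bound telescopes to at most the initial index $\epsilon$. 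This is essentially the same "distance family" bookkeeping as in Lemma \ref{le:family}, now applied to $\bdist$.

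The main obstacle I anticipate is handling the infima properly: the matching \may-transitions in $S_2$ (and the choice of implementation label below $k_2$) are only \emph{approximately} optimal unless the infima are attained, and the family index $\delta'$ is only bounded, not exact. Compact branching (Definition \ref{de:comp}) is exactly what lets the relevant continuous functions attain their infima over the transition sets, so with that assumption I expect the witnesses to exist outright; without it I would carry an $\epsilon/2^n$-style slack at depth $n$ and conclude $\bdist(I_1, I_2) \le \epsilon + \delta$ for arbitrary $\delta$, which by the remark following the definition of $\bdist(\mathcal I_1, \mathcal I_2)$ for infinite $\mathcal I_2$ still yields $\bdist(\sem{S_1}, \sem{S_2}) \le \epsilon$. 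A secondary point to get right is that $I_2$ as constructed is genuinely a WMTS implementation — finitely/compactly branching, with $\mmusto{} = \mmayto{} \subseteq \Imp\K$ — which follows since at each $I_1$-transition we add exactly one $I_2$-transition with a point interval. Putting these together gives $\tdist(S_1, S_2) = \bdist(\sem{S_1}, \sem{S_2}) \le \epsilon = \mdist(S_1, S_2)$, taking the infimum over $\epsilon \ge \mdist(S_1,S_2)$.
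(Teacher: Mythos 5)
Your proposal is correct and follows essentially the same route as the paper's proof: compose the witness for $I_1\le_m S_1$ with a modal refinement family for $S_1\le_m^\epsilon S_2$, build $I_2$ by projecting each implementation weight onto the matching interval of a chosen $S_2$-transition (so the label cost is bounded by $\kdist(k_1,k_2)$), and then verify both $I_2\le_m S_2$ and $\bdist(I_1,I_2)\le\epsilon$ via the composed relation family. The only cosmetic differences are that the paper takes the state set of $I_2$ to be $S_2$ itself rather than your bookkeeping tuples, and that it relies on the standing compact-branching assumption to attain the infima rather than carrying an explicit slack.
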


\begin{proof}
  If $\mdist( S_1, S_2)=\infty$, we have nothing to prove.  Otherwise,
  let $R=\{ R_\epsilon\subseteq S_1\times S_2\mid \epsilon\ge 0\}$ be a
  modal refinement family which witnesses $\mdist( S_1, S_2)$, \ie~such
  that $( s_1^0, s_2^0)\in R_{ \mdist( S_1, S_2)}$, and let
  $I_1\in\llbracket S_1\rrbracket$.  We have to expose $I_2\in\llbracket
  S_2\rrbracket$ for which $\bdist( I_1, I_2)\le \mdist( S_1, S_2)$.

  Let $R_1\subseteq I_1\times S_1$ be a witness for $I_1\le_m S_1$,
  define $R'_\epsilon= R_1\circ R_\epsilon\subseteq I_1\times S_2$ for
  all $\epsilon\ge 0$, and let $R'=\{ R'_\epsilon\mid \epsilon\ge 0\}$.
  The states of $I_2=( I_2, i_2^0, \Imp \K, \mtto_{ I_2})$ are $I_2=
  S_2$ with $i_2^0= s_2^0$, and the transitions we define as
  follows:

  For any $i_1\tto[ k_1']_{ I_1} j_1$ and any $s_2\in S_2$ for which $(
  i_1, s_2)\in R'_\epsilon\in R'$ for some $\epsilon$, we have
  $s_2\mayto[ k_2]_2 t_2$ in $S_2$ with $\kdist( k_1', k_2)\le \epsilon$
  and $( j_1, t_2)\in R'_{ \epsilon'}\in R'$ for some $\epsilon'\le
  \lambda^{ -1}\big( \epsilon- \kdist( k_1', k_2)\big)$.  Write $k_1'=(
  a_1', x_1')$ and $k_2=\big( a_2,[ x_2, y_2]\big)$, then we must have
  $a_1'= a_2$.  Let
  \begin{equation}
    x_2'=
    \begin{cases}
      x_2 &\text{if } x_1'< x_2, \\
      x_1' &\text{if } x_2\le x_1'\le y_2, \\
      y_2 &\text{if } x_1'> y_2
    \end{cases}
    \label{eq:x_2'_app}
  \end{equation}
  and $k_2'=( a_2, x_2')$, and put $s_2\tto[ k_2']_{ I_2} t_2$ in $I_2$.
  Note that
  \begin{equation}
    \kdist( k_1', k_2')= \kdist( k_1', k_2).
    \label{eq:dk1'k2'=dk1'k2_app}
  \end{equation}

  Similarly, for any $s_2\mustto[ k_2]_2 t_2$ in $S_2$ and any $i_1\in I_1$
  with $( i_1, s_2)\in R'_\epsilon\in R'$ for some $\epsilon$, we have
  $i_1\tto[ k_1']_{ I_1} j_1$ with $\kdist( k_1', k_2)\le \epsilon$ and $(
  j_1, t_2)\in R'_{ \epsilon'}\in R'$ for some $\epsilon'\le \lambda^{
    -1}\big( \epsilon- \kdist( k_1', k_2)\big)$.  Write $k_1'=( a_1',
  x_1')$ and $k_2=( a_2,[ x_2, y_2])$, define $x_2'$ as
  in~\eqref{eq:x_2'_app} and $k_2'=( a_2, x_2')$, and put $s_2\tto[
  k_2']_{ I_2} t_2$ in $I_2$.

  We show that the identity relation $\id_{ S_2}=\{( s_2, s_2)\mid
  s_2\in S_2\}\subseteq S_2\times S_2$ witnesses $I_2\le_m S_2$.  Let
  first $s_2\tto[ k_2']_{ I_2} t_2$; we must have used one of the two
  constructions above for creating this transition.  In the first
  case, we have $s_2\mayto[ k_2]_2 t_2$ with $k_2'\sqsubseteq k_2$,
  and in the second case, we have $s_2\mustto[ k_2]_2 t_2$, hence also
  $s_2\mayto[ k_2]_2 t_2$, with the same property.  For a transition
  $s_2\mustto[ k_2]_2 t_2$ on the other hand, we have introduced
  $s_2\tto[ k_2']_{ I_2} t_2$ in the second construction above, with
  $k_2'\sqsubseteq k_2$.

  We also want to show that the family $R'$ is a witness for $\bdist(
  I_1, I_2)\le \mdist( S_1, S_2)$.  We have $( i_1^0, s_2^0)\in R'_{
    \mdist( S_1, S_2)}= R_1\circ R_{ \mdist( S_1, S_2)}$, so let $( i_1,
  s_2)\in R'_\epsilon\in R'$ for some $\epsilon\ge 0$.  For any
  $i_1\tto[ k_1']_{ I_1} j_1$ we have $s_2\mayto[ k_2]_2 t_2$ and
  $s_2\tto[ k_2']_{ I_2} t_2$ by the first part of our construction
  above, with $\kdist( k_1', k_2')= \kdist( k_1', k_2)\le \epsilon$ because
  of~\eqref{eq:dk1'k2'=dk1'k2_app}, and also $( j_1, t_2)\in R'_{
    \epsilon'}\in R'$ for some $\epsilon'\le \lambda^{ -1}\big(
  \epsilon- \kdist( k_1', k_2)\big)$.  For any $s_2\tto[ k_2']_{ I_2} t_2$,
  we must have used one of the constructions above to introduce this
  transition, and both give us $i_1\tto[ k_1']_{ I_1} j_1$ with $\kdist(
  k_1', k_2')\le \epsilon$ and $( j_1, t_2)\in R'_{ \epsilon'}\in R'$
  for some $\epsilon'\le \lambda^{ -1}\big( \epsilon- \kdist( k_1',
  k_2)\big)$. \qed
\end{proof}

The fact that modal refinement only equals thorough refinement for
deterministic specifications is well-known from the theory of modal
transition systems~\cite{Larsen89}, and the special case of $S_2$
deterministic is important, as it can be argued~\cite{Larsen89} that
indeed, deterministic specifications are sufficient for applications.

\begin{theorem} 
  \label{th:det-dteqdm}
  If $S_2$ is deterministic, then $\tdist( S_1, S_2)= \mdist( S_1,
  S_2)$.
\end{theorem}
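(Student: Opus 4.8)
By Theorem~\ref{th:dtledm} we already have $\tdist(S_1,S_2)\le\mdist(S_1,S_2)$, so it suffices to prove the reverse inequality $\mdist(S_1,S_2)\le\tdist(S_1,S_2)$, using that $S_2$ is deterministic. If $\tdist(S_1,S_2)=\infty$ there is nothing to show, so assume it is finite. Writing $S(s)$ for the WMTS $S$ re-rooted at a state $s$, I would define the $\Realnn$-indexed family $R=\{R_\epsilon\mid\epsilon\ge0\}$ by $R_\epsilon=\{(s_1,s_2)\in S_1\times S_2\mid\tdist(S_1(s_1),S_2(s_2))\le\epsilon\}$, so that $(s^0_1,s^0_2)\in R_{\tdist(S_1,S_2)}$ holds by definition. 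The plan is to show that $R$ is a modal refinement family; by Lemma~\ref{le:family} this yields $S_1\le_m^{\tdist(S_1,S_2)}S_2$, \ie~$\mdist(S_1,S_2)\le\tdist(S_1,S_2)$, and equality follows. Two elementary facts are used throughout: every WMTS has a non-empty implementation semantics (resolve the modalities and choose integer weights inside the intervals), and given any implementation of $S_1(s)$ together with a chosen implementation of $S_1(t)$ for a \may- or \must-transition $s\to t$ of $S_1$, adding a fresh transition from the root of the former to the root of the latter with any permissible label yields again an implementation of $S_1(s)$.

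For the \may-condition, let $(s_1,s_2)\in R_\epsilon$ and $s_1\mayto[ k_1]_1 t_1$ with $k_1=(a,[x_1,y_1])$. First I would rule out that $s_2$ has no $a$-labeled \may-transition: if it did not, any implementation of $S_1(s_1)$ realizing $k_1$ would carry an $a$-transition that no $I_2\in\llbracket S_2(s_2)\rrbracket$ can match, forcing $\tdist(S_1(s_1),S_2(s_2))=\infty$, contradicting $\epsilon<\infty$. Hence, by determinism, $s_2$ has a \emph{unique} $a$-labeled \may-transition $s_2\mayto[ k_2]_2 t_2$ with $k_2=(a,[x_2,y_2])$, and---again by determinism---every $a$-labeled transition from the root of any $I_2\in\llbracket S_2(s_2)\rrbracket$ has weight in $[x_2,y_2]$ and target in $\llbracket S_2(t_2)\rrbracket$. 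Now for each $j_1\in\llbracket S_1(t_1)\rrbracket$, splicing $j_1$ onto a fresh $a$-transition whose integer weight $w_1\in[x_1,y_1]$ is at distance exactly $\kdist(k_1,k_2)=\max(x_2-x_1,y_1-y_2,0)$ from $[x_2,y_2]$ produces $I_1^{j_1}\in\llbracket S_1(s_1)\rrbracket$ with $\bdist(I_1^{j_1},I_2)\ge\kdist(k_1,k_2)+\lambda\inf_{j_2\in\llbracket S_2(t_2)\rrbracket}\bdist(j_1,j_2)$ for every $I_2\in\llbracket S_2(s_2)\rrbracket$; taking suprema over $j_1$ gives $\tdist(S_1(s_1),S_2(s_2))\ge\kdist(k_1,k_2)+\lambda\,\tdist(S_1(t_1),S_2(t_2))$. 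Since the left-hand side is $\le\epsilon$, this yields $\kdist(k_1,k_2)\le\epsilon$ and $(t_1,t_2)\in R_{\epsilon'}$ with $\epsilon'=\tdist(S_1(t_1),S_2(t_2))\le\lambda^{-1}\big(\epsilon-\kdist(k_1,k_2)\big)$, exactly as required.

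For the \must-condition, let $(s_1,s_2)\in R_\epsilon$ and $s_2\mustto[ k_2]_2 t_2$ with $k_2=(a,[x_2,y_2])$; I claim there is an $a$-labeled \must-transition $s_1\mustto[ k_1]_1 t_1$ with $\kdist(k_1,k_2)+\lambda\,\tdist(S_1(t_1),S_2(t_2))\le\epsilon$. Suppose not. If $s_1$ has no $a$-labeled \must-transition, choose $I_1\in\llbracket S_1(s_1)\rrbracket$ with no $a$-transition from its root at all; every $I_2\in\llbracket S_2(s_2)\rrbracket$ is forced by the \must-transition to have one, so $\bdist(I_1,I_2)=\infty$ uniformly and $\tdist(S_1(s_1),S_2(s_2))=\infty>\epsilon$, contradicting $(s_1,s_2)\in R_\epsilon$. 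Otherwise fix any $a$-labeled \must-transition $s_1\mustto[ k_1]_1 t_1$, so that $\kdist(k_1,k_2)+\lambda\,\tdist(S_1(t_1),S_2(t_2))>\epsilon$, pick $j_1\in\llbracket S_1(t_1)\rrbracket$ with $\inf_{j_2\in\llbracket S_2(t_2)\rrbracket}\bdist(j_1,j_2)$ within $\gamma$ of $\tdist(S_1(t_1),S_2(t_2))$ (or arbitrarily large if the latter is $\infty$), and splice $j_1$ onto a fresh $a$-transition realizing $k_1$ at the weight $w_1$ at distance $\kdist(k_1,k_2)$ from $[x_2,y_2]$, obtaining $I_1\in\llbracket S_1(s_1)\rrbracket$. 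For every $I_2\in\llbracket S_2(s_2)\rrbracket$, determinism of $S_2$ forces its unique $a$-transition from the root---which must match the inserted one---to have weight in $[x_2,y_2]$ and target in $\llbracket S_2(t_2)\rrbracket$, whence $\bdist(I_1,I_2)\ge\kdist(k_1,k_2)+\lambda\big(\tdist(S_1(t_1),S_2(t_2))-\gamma\big)>\epsilon$ for $\gamma$ small enough, uniformly in $I_2$; thus $\tdist(S_1(s_1),S_2(s_2))>\epsilon$, a contradiction. This proves the claim, and as before we obtain $\kdist(k_1,k_2)\le\epsilon$ and $(t_1,t_2)\in R_{\epsilon'}$ with $\epsilon'=\tdist(S_1(t_1),S_2(t_2))\le\lambda^{-1}\big(\epsilon-\kdist(k_1,k_2)\big)$.

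The main obstacle is the \must-case: one must arrange a \emph{single} adversarially built implementation $I_1$ of $S_1(s_1)$ whose distance to \emph{every} implementation of $S_2(s_2)$ stays strictly above $\epsilon$, so that $\tdist(S_1(s_1),S_2(s_2))$ itself exceeds $\epsilon$; determinism of $S_2$ is precisely what makes this work, because it pins the unique $a$-successor of each $I_2\in\llbracket S_2(s_2)\rrbracket$ into $\llbracket S_2(t_2)\rrbracket$ and lets the inductively available bound on $\tdist(S_1(t_1),S_2(t_2))$ propagate. The remaining ingredients---non-emptiness of implementation semantics, the splicing construction, and reading off the extremal weight $w_1$ from the closed form of $\kdist$---are routine, and once $R$ is seen to be a modal refinement family the theorem follows from Lemma~\ref{le:family}.
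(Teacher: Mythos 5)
Your overall strategy coincides with the paper's: index a refinement family by the thorough distance between re-rooted systems, verify the two family conditions by splicing implementations together, and use determinism of $S_2$ to pin down the unique $a$-successor. (The paper defines $R$ as the smallest family closed under same-action successors and then proves the invariant $\tdist\le\epsilon$; you define $R_\epsilon$ directly by that invariant, which is a cosmetic difference.) Your \may-case is essentially sound, up to the harmless edge case where $\kdist(k_1,k_2)$ is infinite or not attained by an integer weight, which an approximation argument repairs. The \must-case, however, contains a genuine gap. You assert that for every $I_2\in\llbracket S_2(s_2)\rrbracket$, ``determinism of $S_2$ forces its unique $a$-transition from the root --- which must match the inserted one --- to have weight in $[x_2,y_2]$''. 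Neither half of this holds. First, an implementation of a deterministic specification need not itself be deterministic: $I_2$ may carry several $a$-transitions from its root, all matched to the single $a$-labelled \may-transition of $s_2$. Second, and more seriously, $[x_2,y_2]$ is the interval of the \must-transition $s_2\mustto[k_2]t_2$, whereas the weights of $I_2$'s $a$-transitions are only constrained to lie in the interval of the \emph{covering \may-transition}, which by the definition of WMTS may be strictly larger; only the existence of \emph{one} $a$-transition with weight in $[x_2,y_2]$ is forced. Hence the lower bound $\bdist(I_1,I_2)\ge\kdist(k_1,k_2)+\lambda(\cdots)$, which you derive by matching the inserted transition $(a,w_1)$ against $I_2$'s transitions (the first $\sup$--$\inf$ direction of $\bdist$), is false in general: with $k_1=(a,[0,0])$ and \must-label $k_2=(a,[5,5])$ covered by \may-label $(a,[0,10])$, the implementation $I_2$ may contain both $(a,5)$ and $(a,0)$, so the infimum in that direction is $0$ rather than $\kdist(k_1,k_2)=5$.

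To exploit a \must-transition of $s_2$ you have to work in the \emph{other} direction of $\bdist$: every $I_2$ is forced to have some $i_2\tto[a,v]q_2$ with $v\in[x_2,y_2]$ and $q_2\in\llbracket S_2(t_2)\rrbracket$, and this transition must be approximately matched by \emph{some} root transition of $I_1$. That changes the shape of the argument: the relevant infimum now ranges over \emph{all} root transitions of your adversarial $I_1$ --- including those implementing the other (possibly also $a$-labelled) \must-transitions of $s_1$ --- and not just the one you inserted. You therefore need to choose the whole base implementation adversarially (say, a minimal one realizing only \must-transitions, with extremal weights), identify which \must-transition $s_1\mustto[k_1]t_1$ of $S_1$ ends up doing the matching, and then separately argue that $\kdist(k_1,k_2)\le\epsilon$, which requires quantifying over all weights in $k_1$'s interval rather than the single weight you realized. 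This is precisely the delicate part of the paper's own treatment of the \must-case, and it is missing from your proposal.
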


\begin{proof}
  If $\tdist( S_1, S_2)= \infty$, we are done by
  Theorem~\ref{th:dtledm}.  Otherwise, let $R=\{ R_\epsilon\mid
  \epsilon\ge 0\}$ be the smallest relation family for which
  \begin{itemize}
  \item $( s_1^0, s_2^0)\in R_{ \tdist( S_1, S_2)}$ and
  \item whenever we have $( s_1, s_2)\in R_\epsilon\in R$, $s_1\mayto[
    a, I_1]_1 t_1$, and $s_2\mayto[ a, I_2]_2 t_2$,
    then $( t_1, t_2)\in R_{ \lambda^{ -1}( \epsilon- \kdist(( a, I_1),(
      a, I_2)))}$.
  \end{itemize}
  We show below that this definition makes sense (also that
  $\epsilon- \kdist\big(( a, I_1),( a, I_2)\big)\ge 0$ in all cases), and
  that $R$ is a modal refinement family.  We will use
  the convenient notation $( s_1, S_1)$ for the WMTS $S_1$ with
  initial state $s_1^0$ replaced by $s_1$, similarly for $( s_2,
  S_2)$.

  We first show inductively that for any pair of states $( s_1, s_2)\in
  R_\epsilon\in R$ we have $\tdist\big(( s_1, S_1),( s_2, S_2)\big)\le
  \epsilon$.  This is obviously the case for $s_1= s_1^0$ and $s_1=
  s_2^0$, so assume now that $( s_1, s_2)\in R_\epsilon\in R$ is such
  that $\tdist\big(( s_1, S_1),( s_2, S_2)\big)\le \epsilon$ and let
  $s_1\mayto[ a, I_1]_1 t_1$, $s_2\mayto[ a, I_2]_2 t_2$.  Let $( q_1',
  P_1')\in\llbracket( t_1, S_1)\rrbracket$ and $x_1\in I_1$.

  There is an implementation $( p_1, P_1)\in\llbracket( s_1,
  S_1)\rrbracket$ for which $p_1\tto[ a, x_1] q_1$ and such that $( q_1,
  P_1)\le_m( q_1', P_1')$.  Now
  \begin{equation*}
    \tdist\big(( p_1, P_1),( s_2, S_2)\big)\le \tdist\big(( p_1, P_1),(
    s_1, S_1)\big)+ \tdist\big(( s_1, S_1),( s_2, S_2)\big)\le \epsilon,
  \end{equation*}
  hence we must have $s_2\mayto[ a_2', I_2']_2 t_2'$ with $\kdist\big((
  a, x_1),( a_2', I_2')\big)\le \epsilon$.  But then $a_2'= a$, hence
  by determinism of $S_2$, $I_2= I_2'$ and $t_2= t_2'$.

  The above considerations hold for any $x_1\in I_1$, hence
  $\kdist\big(( a, I_1),( a, I_2)\big)\le \epsilon$.  Thus $\epsilon-
  \kdist\big(( a, I_1),( a, I_2)\big)\ge 0$, and the definition of $R$
  above is justified.  Now let $x_2\in I_2$ such that $\kdist\big(( a,
  x_1),( a, x_2)\big)= \kdist\big(( a, x_1),( a, I_2)\big)$, then there is
  an implementation $( p_2, P_2)\in\llbracket( s_2, S_2)\rrbracket$
  for which $p_2\tto[ a, x_2] q_2$, and
  \begin{align*}
    \bdist\big(( q_1', P_1'),( q_2, P_2)\big) &\le \lambda^{ -1}\big(
    \epsilon- \kdist(( a, x_1),( a, x_2))\big) \\
    &= \lambda^{ -1}\big( \epsilon- \kdist(( a, I_1),( a, I_2))\big),
  \end{align*}
  which, as $( q_1', P_1')\in\llbracket( t_1, S_1)\rrbracket$ was
  chosen arbitrarily, entails $\tdist\big(( s_1, S_1),( s_2,
  S_2)\big)\le \lambda^{ -1}\big( \epsilon- \kdist(( a, I_1),( a,
  I_2))\big)$.

  We are ready to show that $R$ is a refinement family.  Let $( s_1,
  s_2)\in R_\epsilon\in R$ for some $\epsilon$, and assume $s_1\mayto[
  a,I_1]_1 t_1$.  Let $x\in I_1$, then there is an
  implementation $( p, P^x)\in\llbracket( s_1, S_1)\rrbracket$ with a
  transition $p\tto[ m] q$.  Now $\tdist\big(( p, P^x),( s_2,
  S_2)\big)\le \epsilon$, hence we have a transition $s_2\mayto[ a,
  I_2^x]_2 t_2^x$ with $\kdist\big(( a, x),( a, I_2^x)\big)\le \epsilon$.
  Also for any other $x'\in I_1$ we have a transition $s_2\mayto[ a,
  I_2^{ x'}]_2 t_2^{ x'}$ with $\kdist\big(( a, x'),( a, I_2^{
    x'})\big)\le \epsilon$, hence by determinism of $S_2$, $I_2^x=
  I_2^{ x'}$ and $t_2^x= t_2^{ x'}$.  It follows that there is a
  unique transition $s_2\mayto[ a, I_2] t_2$, and as $\kdist\big(( a,
  x),( a, I_2)\big)\le \epsilon$ for all $x\in I_1$, we have $\kdist\big((
  a, I_1),( a, I_2)\big)\le \epsilon$, and $( t_1, t_2)\in R_{
    \lambda^{ -1}( \epsilon- \kdist(( a, I_1),( a, I_2)))}$ by definition.

  Now assume $s_2\mustto[ a,I_2]_2 t_2$. Let $( p_1,
  P_1)\in\llbracket( s_1, S_1)\rrbracket$, then we have $( p_2,
  P_2)\in\llbracket( s_2, S_2)\rrbracket$ with $\bdist\big(( p_1, P_1),(
  p_2, P_2)\big)\le \epsilon$.  Now any $( p_2, P_2)\in\llbracket(
  s_2, S_2)\rrbracket$ has $p_2\tto[ a, x_2] q_2$ with $x_2\in I_2$,
  thus there is also $p_1\tto[ a, x_1] q_1$ with $\kdist\big(( a, x_1),(
  a, x_2)\big)\le \epsilon$ and $\bdist\big(( q_1, P_1),( q_2,
  P_2)\big)\le \lambda^{ -1}\big( \epsilon- \kdist(( a, x_1),( a,
  x_2))\big)$.  This in turn implies that $s_1\mustto[ a, I_1]_1
  t_1$ for some $x_1\in I_1$.  We will be done once we can show
  $\kdist\big(( a, I_1),( a, I_2)\big)\le \epsilon$, so assume to the
  contrary that there is $x_1'\in I_1$ with $\kdist\big(( a, x_1'),( a,
  I_2)\big)> \epsilon$.  Then there is an implementation $( p_1',
  P_1')\in\llbracket( s_1, S_1)\rrbracket$ with $p_1'\tto[ a, x_1']
  q_1'$, hence a transition $s_2\mayto[ a, I_2']_2 t_2'$ with
  $\kdist\big(( a, x_1'),( a, I_2')\big)\le \epsilon$.  But $I_2'= I_2$ by
  determinism of $S_2$, a contradiction. \qed  
\end{proof}

\section{Complexity of Computing Thorough and Modal Refinement Distances}
\label{se:complex}

The complexity results in the next theorem show that modal refinement
distance can serve as a useful approximation of thorough refinement
distance.

\begin{theorem}
  \label{pr:exptime-hard}
  For finite WMTS $S_1$, $S_2$ and $\epsilon \ge 0$, it is
  \EXPTIME-hard to decide whether $S_1\le_t^\epsilon
  S_2$.  The problem whether $S_1\le_m^\epsilon S_2$ is decidable in
  \NP~$\cap$~\coNP.
\end{theorem}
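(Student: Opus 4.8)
I would treat the two parts separately: the \EXPTIME-hardness of the thorough-distance threshold problem by a reduction from Boolean thorough refinement of ordinary modal transition systems, and the $\NP\cap\coNP$ membership of the modal-distance threshold problem by a reduction to discounted games in the sense of~\cite{DBLP:conf/cocoon/ZwickP95}.

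\textbf{\EXPTIME-hardness.} I reduce from the problem of deciding, for two finite modal transition systems $M_1,M_2$ over an alphabet $\Sigma$, whether $\sem{M_1}\subseteq\sem{M_2}$ in the Boolean sense, which is \EXPTIME-hard~\cite{DBLP:conf/ictac/BenesKLS09}. Given $M_1,M_2$, build WMTS $\hat M_1,\hat M_2$ by relabelling every transition carrying action $a$ with the specification label $(a,[0,0])$; since in an MTS every \must-transition is also a \may-transition and $[0,0]\sqsubseteq[0,0]$, each $\hat M_i$ is a valid finite WMTS. The reduction rests on three observations. First, an implementation of $\hat M_i$ is, up to the identification of weight-$0$ integer-weighted transition systems with plain LTS over $\Sigma$, exactly an element of $\sem{M_i}$, because once actions agree the interval side-conditions $k_1\sqsubseteq k_2$ in modal refinement are vacuous; hence $\sem{\hat M_i}=\sem{M_i}$ under this identification. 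Second, for two such all-zero implementations $\bdist(I_1,I_2)\in\{0,\infty\}$ and $\bdist(I_1,I_2)=0$ iff $I_1$ and $I_2$ are weighted bisimilar~\cite{journals/jlp/ThraneFL10}, which for all-zero weights is ordinary bisimilarity. Third, implementation sets of MTS are closed under bisimilarity. Combining these, $\tdist(\hat M_1,\hat M_2)=\sup_{I_1}\inf_{I_2}\bdist(I_1,I_2)$ equals $0$ iff every implementation of $M_1$ is bisimilar to some implementation of $M_2$, i.e.\ iff $\sem{M_1}\subseteq\sem{M_2}$. As the construction is polynomial and uses the fixed threshold $\epsilon=0$, deciding $S_1\le_t^\epsilon S_2$ is \EXPTIME-hard.

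\textbf{Membership in $\NP\cap\coNP$.} I encode $\mdist$ as the value of a two-player discounted game $\mathcal G$ with discount factor $\lambda$ on the finite arena with positions $S_1\times S_2$ (plus auxiliary positions for the intermediate choices). In a position $(s_1,s_2)$, Maximiser first picks one of the two challenges of Definition~\ref{de:acc.mo.dist}, namely a transition $s_1\mayto[k_1]_1 t_1$ or a transition $s_2\mustto[k_2]_2 t_2$; Minimiser then picks a matching transition of the corresponding type ($s_2\mayto[k_2]_2 t_2$, resp.\ $s_1\mustto[k_1]_1 t_1$); the move costs $\kdist(k_1,k_2)$ and the play continues from $(t_1,t_2)$ with all future costs discounted by $\lambda$. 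A position with no Maximiser move has value $0$ and a Minimiser position with no matching response has value $\infty$. By construction the Shapley optimality equations of $\mathcal G$ are precisely the equations defining $\mdist$, and since $0<\lambda<1$ the associated operator is a contraction on the finite product space, so its unique fixed point — the value of $\mathcal G$ — coincides with the least fixed point $\mdist(S_1,S_2)$; alternatively this correspondence can be read off the modal refinement families of Lemma~\ref{le:family}, whose challenge/response structure is exactly that of a pair of positional strategies. To obtain a genuine discounted game with finite rational payoffs I first decide in polynomial time whether $\mdist(S_1,S_2)=\infty$: this is the reachability question of whether Maximiser can force the play onto an $\infty$-cost edge (mismatched actions, or an unbounded interval endpoint on the wrong side) or into a position with no Minimiser response, solvable by the standard attractor computation on $\mathcal G$. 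If $\mdist(S_1,S_2)=\infty$ the query $\mdist(S_1,S_2)\le\epsilon$ has the trivial answer "no" for the given finite $\epsilon$. Otherwise Minimiser has a strategy confining the play to the sub-arena on which every edge carries a finite integer cost (of magnitude polynomial in the input, being a difference of interval endpoints); on this sub-arena $\mathcal G$ is an ordinary discounted game, and by~\cite{DBLP:conf/cocoon/ZwickP95} its value is computable and the decision whether the value is at most the rational $\epsilon$ lies in $\NP\cap\coNP$: a nondeterministic guess of an optimal positional strategy for one player collapses $\mathcal G$ to a one-player discounted game whose value is computed in polynomial time by solving a linear system, certifying the answer and, by the same argument for the other player, its complement.

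\textbf{Main obstacle.} The delicate point is the bookkeeping around the value $\infty$: one must isolate it by the attractor pre-processing so that the object handed to the discounted-games algorithm really is a finite-weight discounted game, and one must verify that the value of the constructed game matches the \emph{least} fixed point of Definition~\ref{de:acc.mo.dist} rather than another fixed point of the Shapley operator — which is exactly where the contraction property, and equivalently Lemma~\ref{le:family}, is used. The remaining ingredients (boundedness of the edge costs, polynomiality of all reductions) are routine.
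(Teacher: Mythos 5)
Your proof follows essentially the same route as the paper: \EXPTIME-hardness by embedding unweighted MTS as WMTS with all intervals $[0,0]$ and invoking the \EXPTIME-hardness of Boolean thorough refinement from~\cite{DBLP:conf/ictac/BenesKLS09}, and membership in \NP~$\cap$~\coNP by encoding the defining equations of $\mdist$ as the optimality equations of a discounted game and appealing to~\cite{DBLP:conf/cocoon/ZwickP95}. The additional care you take in isolating the value $\infty$ and in matching the game value to the \emph{least} fixed point is a correct elaboration of details the paper leaves implicit, but it does not change the approach.
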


The fact that computing thorough refinement distance is EXPTIME-hard is
easy.  By~\cite{DBLP:conf/ictac/BenesKLS09}, deciding thorough
refinement for MTS (without weights) is EXPTIME-complete.  By
translating MTS to WMTS with weight $0$ on all transitions, deciding
thorough refinement for modal transition systems polynomial-time reduces
to deciding whether thorough refinement distance is $\le 0$.

To show an upper bound on the complexity of computing modal refinement
distance, we need to introduce \emph{discounted values of weighted
  games}, \cf~\cite{DBLP:conf/cocoon/ZwickP95}.  A weighted game graph
is a finite real-weighted bipartite digraph $( V_1, V_2, \mgameto)$,
\ie~with $V_1\cap V_2= \emptyset$ and $\mgameto \in( V_1\times
\Real\times V_2)\cup( V_2\times \Real\times V_1)$ a finite set of
edges.  These are assumed to be non-blocking in the sense that each
$v\in V_1\cup V_2$ has at least one outgoing edge $v\gameto[ r] w$
(which is the shorthand for $( v, r, w)\in \mgameto$).

A Player-1 strategy in such a weighted game graph is a mapping
$\theta_1: V_1\to \Real\times V_2$ for which $\big( v_1, \theta_1(
v_1)\big)\in \mgameto$ for each $v_1\in V_1$.  Similarly, a Player-2
strategy is a mapping $\theta_2: V_2\to \Real\times V_1$ such that
$\big( v_2, \theta_2( v_2)\big)\in \mgameto$ for each $v_2\in V_2$.
The sets of all Player-1 and Player-2 strategies are denoted
$\Theta_1$ and $\Theta_2$, respectively.

Denote by $\tgt( e)= w$ the target of an edge $e=( v, r, w)\in
\mgameto$ and by $\weight( e)= r$ its weight.  A vertex $v_0\in V_1$
and a pair $( \theta_1, \theta_2)\in \Theta_1\times \Theta_2$ of
strategies determine a unique infinite sequence $\big( e_j( \theta_1,
\theta_2)\big)_{ j\ge 0}$ of edges $e_j( \theta_1, \theta_2)\in
\mgameto$ for which
\begin{align*}
  e_0( \theta_1, \theta_2) &= \big( v_0, \theta_1( v_0)\big), \\
  e_{ 2j+ 1}( \theta_1, \theta_2) &= \big( \tgt( e_{ 2j}), \theta_2(
  \tgt( e_{ 2j}))\big), \\
  e_{ 2j}( \theta_1, \theta_2) &= \big( \tgt( e_{ 2j- 1}, \theta_1(
  \tgt( e_{ 2j- 1}))\big).
\end{align*}
In other words, the two players alternate to pick edges in $\mgameto$
according to their strategies.  The \emph{discounted value} of the
game $( V_1, V_2, \mgameto)$ played from $v_0\in V_1$ with discounting
factor $\lambda$, $0\le \lambda< 1$, is defined to be
\begin{equation*}
  p( v_0, \lambda)= \adjustlimits \sup_{ \theta_1\in \Theta_1} \inf_{
    \theta_2\in \Theta_2} \sum_{ j= 0}^\infty \lambda^j \weight\big(
  e_j( \theta_1, \theta_2)\big).
\end{equation*}

We recall the following theorem from~\cite{DBLP:conf/cocoon/ZwickP95};
the complexity result is obtained by reduction to simple stochastic
games~\cite{DBLP:journals/iandc/Condon92}.

\begin{lemma}[\cite{DBLP:conf/cocoon/ZwickP95}]
  The discounted value $p( v_0, \lambda)$ may be computed as the
  unique fixed point to the equations
  \begin{equation*}
    p( v, \lambda)=
    \begin{cases}
      \max\limits_{ v\gameto[ r] w} r+ \lambda p( w, \lambda)
      &\text{if } v\in V_1, \\
      \min\limits_{ v\gameto[ r] w} r+ \lambda p( w, \lambda)
      &\text{if } v\in V_2.
    \end{cases}
  \end{equation*}
  The decision problem corresponding to computing $p( v_0)$ is
  contained in \NP~$\cap$~\coNP.
\end{lemma}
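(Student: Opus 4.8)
The plan is to establish the two assertions of the lemma separately: the characterisation of the discounted value $p(v_0,\lambda)$ as the unique fixed point of the displayed equations, and the $\NP\cap\coNP$ bound for the threshold problem ``is $p(v_0,\lambda)\ge t$?'' that accompanies it.

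For the fixed-point characterisation, let $F$ be the operator sending $p\in\Real^{V_1\cup V_2}$ to the vector whose $v$-component is the right-hand side of the displayed equations. A finite $\max$ or $\min$ is non-expansive and multiplication by $\lambda$ contracts by the factor $\lambda<1$, so $F$ is a $\lambda$-contraction for the supremum norm on the finite-dimensional space $\Real^{V_1\cup V_2}$; by the Banach fixed point theorem it has a unique fixed point $p^\ast$, reached by the iterates $F^n(0)$, and since $\|F^n(0)\|_\infty\le W/(1-\lambda)$ with $W$ bounding the absolute edge weights, $p^\ast$ is finite with the same bound. To identify $p^\ast$ with the game value I would read off from $p^\ast$ the strategies $\theta_1^\ast$, choosing at each $v_1\in V_1$ an edge attaining the maximum in its equation, and $\theta_2^\ast$, choosing a minimising edge at each $v_2\in V_2$. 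Playing $\theta_1^\ast$ against any $\theta_2\in\Theta_2$, the fixed-point equations give $p^\ast(v_j)\le\weight(e_j)+\lambda\, p^\ast(v_{j+1})$ at every step, so $p^\ast(v_0)\le\sum_{j=0}^{n-1}\lambda^j\weight(e_j)+\lambda^n p^\ast(v_n)$ for all $n$; letting $n\to\infty$ the tail vanishes because $p^\ast$ is bounded, whence $\sum_{j\ge0}\lambda^j\weight(e_j)\ge p^\ast(v_0)$ and thus $\inf_{\theta_2}\sum_j\lambda^j\weight(e_j(\theta_1^\ast,\theta_2))\ge p^\ast(v_0)$. The symmetric computation with $\theta_2^\ast$ yields $\inf_{\theta_2}\sum_j\lambda^j\weight(e_j(\theta_1,\theta_2))\le p^\ast(v_0)$ for every $\theta_1\in\Theta_1$, and combining the two inequalities gives $p(v_0,\lambda)=p^\ast(v_0)$, with $\theta_1^\ast$ and $\theta_2^\ast$ optimal.

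For the complexity bound, the direct route is the reduction of the threshold problem to the value problem of simple stochastic games of~\cite{DBLP:conf/cocoon/ZwickP95}, together with Condon's theorem~\cite{DBLP:journals/iandc/Condon92} placing that problem in $\NP\cap\coNP$. A more self-contained argument exploits the optimal positional strategies just obtained: to certify $p(v_0,\lambda)\ge t$ guess a strategy $\theta_1\in\Theta_1$; fixing it leaves a one-player discounted minimisation problem on a finite graph whose value from $v_0$ is the optimum of a polynomial-size linear program with rational data (for rational $\lambda$), hence rational of polynomially bounded bit-length and comparable with $t$ in polynomial time. Symmetrically a $\theta_2\in\Theta_2$ certifies $p(v_0,\lambda)<t$ via the dual one-player maximisation. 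Soundness and completeness of the two certificates are precisely the existence of optimal positional strategies shown above, so the problem lies in $\NP\cap\coNP$.

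The step I expect to be the main obstacle is the identification of the contraction's fixed point with the game value, \ie proving that the edge-local optimality encoded by the equations makes the induced positional strategies globally optimal; the unrolling estimate together with the geometric decay $\lambda^n p^\ast(v_n)\to0$ is the clean way to obtain it, and it simultaneously supplies the optimal positional strategies on which the $\NP\cap\coNP$ certificates rest.
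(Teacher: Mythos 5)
Your proof is correct, but note that the paper does not actually prove this lemma: it is recalled verbatim from Zwick and Paterson~\cite{DBLP:conf/cocoon/ZwickP95}, with the complexity bound attributed to a reduction to simple stochastic games and Condon's theorem~\cite{DBLP:journals/iandc/Condon92}. So there is no in-paper argument to compare against; what you have written is a self-contained proof of the cited result, along the classical lines. The fixed-point part (Banach contraction in the supremum norm on $\Real^{V_1\cup V_2}$, then the unrolling estimate $p^*(v_0)\le\sum_{j<n}\lambda^j\weight(e_j)+\lambda^n p^*(v_n)$ with the geometric tail vanishing) is the standard Shapley-style argument and correctly yields both determinacy and the existence of optimal positional strategies; it is also the natural companion to Lemma~\ref{le:impdistmet} of the paper, which invokes the same contraction principle for the distance equations. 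For the complexity bound you give both the paper's citation route and a direct $\NP\cap\coNP$ certificate (guess a positional strategy, solve the remaining one-player discounted optimisation in polynomial time); the soundness and completeness of those certificates rest exactly on the positional optimality you establish in the first part, so the argument is self-supporting. Two small points deserve a remark: the strategies $\Theta_1,\Theta_2$ in this paper are positional by definition, so you need no reduction from history-dependent to positional strategies; and the usual care with strict versus non-strict threshold inequalities and with the bit-length of the one-player values for rational $\lambda$ is needed, which you do acknowledge.
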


Next we present a reduction from modal refinement distance of WMTS to
discounted values of weighted games,
\cf~\cite{journals/tcs/LarsenFT11}.

\begin{lemma}
  For WMTS $S_1$, $S_2$ one can construct in polynomial time a
  weighted game $( V_1, V_2, \mgameto)$ with a vertex $v_0\in V_1$
  such that $\mdist( S_1, S_2)= p( v_0, \sqrt{ \lambda})$.
\end{lemma}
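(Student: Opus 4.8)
The plan is to realise the modal-refinement game underlying $\mdist$ as a two-layer weighted game graph and then to match its fixed-point characterisation with Definition~\ref{de:acc.mo.dist}. Intuitively $\mdist(s_1,s_2)$ is the value of a game where, from a pair $(s_1,s_2)$, the maximiser (Player~1) either plays a may-transition $s_1\mayto[k_1]_1 t_1$ of $S_1$ or a must-transition $s_2\mustto[k_2]_2 t_2$ of $S_2$, and the minimiser (Player~2) answers with, respectively, a may-transition $s_2\mayto[k_2]_2 t_2$ of $S_2$ or a must-transition $s_1\mustto[k_1]_1 t_1$ of $S_1$; a round costs $\kdist(k_1,k_2)$, play continues from $(t_1,t_2)$, and a cost incurred after $d$ rounds is discounted by $\lambda^d$. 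Since a game round consists of \emph{two} edges, I will use Zwick--Paterson discounting $\sqrt\lambda$ per edge, so that $(\sqrt\lambda)^2=\lambda$ per round.

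Concretely, let $V_1$ consist of \emph{pair vertices} $(s_1,s_2)\in S_1\times S_2$, with $v_0=(s_1^0,s_2^0)$, and let $V_2$ consist of \emph{challenge vertices} of two shapes: $(t_1,s_2,k_1)$ (Player~1 has committed to $s_1\mayto[k_1]_1 t_1$) and $(s_1,t_2,k_2)$ (Player~1 has committed to $s_2\mustto[k_2]_2 t_2$). From $(s_1,s_2)$ put a weight-$0$ edge to $(t_1,s_2,k_1)$ for each $s_1\mayto[k_1]_1 t_1$ and a weight-$0$ edge to $(s_1,t_2,k_2)$ for each $s_2\mustto[k_2]_2 t_2$; from $(t_1,s_2,k_1)$ put, for each $s_2\mayto[k_2]_2 t_2$ with matching action, an edge of weight $\lambda^{-1/2}\kdist(k_1,k_2)$ to $(t_1,t_2)$, and symmetrically from $(s_1,t_2,k_2)$ an edge of weight $\lambda^{-1/2}\kdist(k_1,k_2)$ to $(t_1,t_2)$ for each $s_1\mustto[k_1]_1 t_1$ with matching action. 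This graph has $O\big(|S_1||S_2|+|\mmayto_1|\,|S_2|+|S_1|\,|\mmustto_2|\big)$ vertices and is built in polynomial time; with the usual conventions for empty $\sup/\inf$, a blocking pair vertex (no outgoing Player-1 move) gets a $0$-weighted self-loop, which matches $\mdist=0$ there.

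The verification is a substitution. Writing $p(\cdot)=p(\cdot,\sqrt\lambda)$ and using the fixed-point equations recalled above, a challenge vertex satisfies $p(t_1,s_2,k_1)=\min_{s_2\mayto[k_2]_2 t_2}\big(\lambda^{-1/2}\kdist(k_1,k_2)+\sqrt\lambda\,p(t_1,t_2)\big)$, hence $\sqrt\lambda\,p(t_1,s_2,k_1)=\min_{s_2\mayto[k_2]_2 t_2}\big(\kdist(k_1,k_2)+\lambda\,p(t_1,t_2)\big)$, and symmetrically for $(s_1,t_2,k_2)$; plugging these into the $\max$ over the weight-$0$ Player-1 edges at $(s_1,s_2)$ collapses each round's two edges into exactly the equation defining $\mdist(s_1,s_2)$. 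Consequently the assignment $(s_1,s_2)\mapsto\mdist(s_1,s_2)$, together with the induced values on challenge vertices, is a fixed point of the game's equation system; since by~\cite{DBLP:conf/cocoon/ZwickP95} that system has a \emph{unique} fixed point, namely $p$, we get $\mdist(s_1,s_2)=p(s_1,s_2)$ for all pairs, in particular $\mdist(S_1,S_2)=p(v_0,\sqrt\lambda)$.

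Two points require care, and I expect the first to be the crux. A weighted game graph carries finite real weights, but $\kdist(k_1,k_2)=\infty$ when the actions differ; I would therefore first compute, by an ordinary least-fixed-point (equivalently, a Boolean modal-refinement-style) iteration, the set of pairs with $\mdist(s_1,s_2)=\infty$, output the trivial value $\infty$ if $v_0$ lies in it, and otherwise restrict the construction to the surviving pairs, keeping only the action-matching challenge-response edges. Finiteness of $\mdist$ on the survivors forces, via the defining equations, that every surviving challenge vertex still has an outgoing edge, so the restricted graph is non-blocking and finitely real-weighted and the argument above applies verbatim. The second point is the bookkeeping of exponents: it is precisely the per-edge discounting $\sqrt\lambda$ together with the scaling $\lambda^{-1/2}$ on challenge edges that makes the two game edges of a round contribute exactly one $\mdist$-round — getting this correspondence right is the entire content of the lemma, and is where the proof should be spelled out carefully, following the template of~\cite{journals/tcs/LarsenFT11}.
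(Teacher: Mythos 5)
Your construction is essentially the paper's: the same two-layer bipartite game with weight-$0$ edges from pair vertices to challenge vertices and $\kdist$-weighted response edges back to pair vertices, evaluated with per-edge discount $\sqrt\lambda$ so that one round of the refinement game costs one factor of $\lambda$. The two refinements you add are both sound and actually tighten the printed argument: scaling the response weights by $\lambda^{-1/2}$ is needed for the stated identity (with unscaled weights, as in the paper, the second edge of round $n$ is discounted by $(\sqrt\lambda)^{2n+1}$ and the game value comes out as $\sqrt\lambda\cdot\mdist(S_1,S_2)$ rather than $\mdist(S_1,S_2)$), and pre-removing the pairs at distance $\infty$ ensures the game graph is genuinely finite real-weighted and non-blocking before invoking the uniqueness of the Zwick--Paterson fixed point, a point the paper glosses over.
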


\begin{proof}
  Let $V_1= S_1\times S_2$, $V_2= S_1\times S_2\times \K\times\{ \may,
  \must\}$, and define the transitions as follows:
  \begin{alignat*}{2}
    ( s_1, s_2) &\gameto[ 0] ( t_1, s_2, k_1, \may) &
    \quad\text{if}\quad s_1 &\mayto[ k_1]_1 t_1 \\
    ( s_1, s_2) &\gameto[ 0] ( s_1, t_2, k_2, \must) &
    \quad\text{if}\quad s_2 &\mustto[ k_2]_2 t_2 \\
    ( t_1, s_2, k_1, \may) &\gameto[ \kdist( k_1, k_2)] ( t_1, t_2) &
    \quad\text{if}\quad s_2 &\mayto[ k_2]_2 t_2 \\
    ( s_1, t_2, k_2, \must) &\gameto[ \kdist( k_1, k_2)] ( t_1, t_2) &
    \quad\text{if}\quad s_1 &\mustto[ k_1]_1 t_1
  \end{alignat*}
  Setting $v_0=( s_1^0, s_2^0)$ finishes the construction. \qed
\end{proof}

In~\cite{journals/tcs/LarsenFT11} it is also shown that conversely,
computing discounted values of weighted games may be polynomial-time
reduced to computing simulation distance for weighted transition
systems, hence we can conclude the following.

\begin{lemma}
  The decision problem corresponding to computing modal refinement
  distance for WMTS is polynomial-time equivalent to the decision
  problem corresponding to computing discounted values of weighted
  games.
\end{lemma}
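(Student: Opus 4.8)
The plan is to read the statement off as a corollary of the two preceding lemmas, by composing reductions in both directions. Fix the two decision problems under comparison: for modal refinement distance, given finite WMTS $S_1, S_2$ (with the ambient discounting factor $\lambda$) and a rational $\epsilon \ge 0$, decide whether $S_1 \le_m^\epsilon S_2$; for discounted values of weighted games, given a finite rational-weighted game graph $( V_1, V_2, \mgameto)$, a vertex $v_0 \in V_1$, a discounting factor $\mu$ with $0 \le \mu < 1$, and a rational threshold $\nu$, decide whether $p( v_0, \mu) \le \nu$. I will show that each of these reduces to the other in polynomial time.

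For the direction from modal refinement distance to games I simply invoke the preceding lemma: in polynomial time one constructs a weighted game $( V_1, V_2, \mgameto)$ with a distinguished vertex $v_0 \in V_1$ such that $\mdist( S_1, S_2) = p( v_0, \sqrt{\lambda})$. For finite $S_1, S_2$ only finitely many labels from $\K$ occur on transitions, so the constructed graph is finite and carries the rational weights $\kdist( k_1, k_2)$. Hence $S_1 \le_m^\epsilon S_2$ holds if and only if $p( v_0, \sqrt{\lambda}) \le \epsilon$, which is an instance of the game decision problem; the appearance of $\sqrt{\lambda}$ rather than $\lambda$ is harmless, as the discounting factor is part of the input of that problem.

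For the converse I use the cited fact of~\cite{journals/tcs/LarsenFT11} that computing discounted values of weighted games reduces in polynomial time to computing the simulation distance between weighted transition systems, and then observe that the latter is itself a special case of modal refinement distance. Indeed, a weighted transition system $T$ may be regarded as a WMTS $\hat T$ whose may-transitions are exactly the transitions $s \mayto[{ a,[ x, x]}] s'$ obtained from the transitions of $T$ labelled $( a, x)$, and whose must-transition relation is empty. On such singleton-interval labels $\kdist$ coincides with $\idist$, and in the defining equations of $\mdist( \hat T_1, \hat T_2)$ the second (must) clause ranges over the empty set, contributing $0$ to the outer maximum; thus the defining operator is literally that of the simulation distance, with the same $\lambda$, so the two least fixed points agree. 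Composing, the game decision problem reduces to the simulation-distance problem for WTS, which reduces to the modal refinement distance problem for WMTS, which together with the first direction completes the equivalence (in fact all three problems are polynomial-time equivalent).

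The argument is thus essentially bookkeeping: the point needing care is to keep the discounting factors straight along the chain of reductions — in particular the $\sqrt{\lambda}$ introduced by splitting each refinement step into two game moves — and to check the convention $\sup \emptyset = 0$ used to discard the vacuous must-clause above. All of the substantive content, namely the fixed-point characterisation of discounted game values and the two reductions of~\cite{journals/tcs/LarsenFT11}, is imported from earlier results.
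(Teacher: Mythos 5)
Your proposal is correct and follows essentially the same route as the paper, which states this lemma as an immediate consequence of the preceding reduction lemma (for the forward direction) and the cited converse reduction of~\cite{journals/tcs/LarsenFT11} from discounted games to simulation distance for weighted transition systems. The only detail you add beyond the paper's implicit argument is the (correct) observation that simulation distance embeds into modal refinement distance by taking singleton intervals and an empty must-relation, so that the must-clause contributes $\sup\emptyset=0$.
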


\section{Relaxation}
\label{se:relax}

We introduce here a notion of \emph{relaxation} which is specific to the
quantitative setting.  Intuitively, relaxing a specification means to
weaken the quantitative constraints, while the discrete demands on which
transitions may or must be present in implementations are kept.  A
similar notion of \emph{strengthening} may be defined, but we do not use
this here.

\begin{definition}
  \label{def:relax}
  For WMTS $S$, $S'$ and $\epsilon\ge 0$, $S'$ is an
  \emph{$\epsilon$-relaxation} of $S$ if $S\le_m S'$ and
  $S'\le_m^\epsilon S$.
\end{definition}

Hence the quantitative constraints in $S'$ may be more permissive than
the ones in $S$, but no new discrete behavior may be introduced.  Also
note that any implementation of $S$ is also an implementation of $S'$,
and no implementation of $S'$ is further than $\epsilon$ away from an
implementation of $S$.  The following proposition relates specifications
to relaxed specifications:

\begin{proposition}
  \label{pr:wide-two}
  If $S_1'$ and $S_2'$ are $\epsilon$-relaxations of $S_1$ and $S_2$,
  respectively, then $\mdist( S_1, S_2)- \epsilon\le \mdist( S_1,
  S_2')\le \mdist( S_1, S_2)$ and $\mdist( S_1, S_2)\le \mdist( S_1',
  S_2)\le \mdist( S_1, S_2)+ \epsilon$.
\end{proposition}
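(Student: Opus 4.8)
The plan is to obtain all four inequalities from the triangle inequality for $\mdist$ (established in the lemma stating that the modal refinement distance is a hemimetric), together with two elementary facts about $\epsilon$-relaxations. First, whenever $T \le_m T'$ we have $\mdist(T, T') = 0$: the function sending every pair in the witnessing modal refinement relation $R$ to $0$ and every other pair to $\infty$ is a pre-fixed point of the operator $D$ from the proof of Lemma~\ref{le:family} (along a matched transition one has $k_1 \sqsubseteq k_2$, hence $\kdist(k_1,k_2)=0$, and the target pair again lies in $R$, so $D$ applied to this function is again $0$ on pairs of $R$ and trivially bounded elsewhere), so the least fixed point $\mdist$ is pointwise below it, and being nonnegative it is $0$. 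Second, $T' \le_m^\epsilon T$ is by definition exactly $\mdist(T', T) \le \epsilon$.

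With these in hand, the first chain follows by splitting the triangle inequality through $S_2$, respectively through $S_2'$: one has $\mdist(S_1, S_2') \le \mdist(S_1, S_2) + \mdist(S_2, S_2') = \mdist(S_1, S_2)$, using $S_2 \le_m S_2'$; and $\mdist(S_1, S_2) \le \mdist(S_1, S_2') + \mdist(S_2', S_2) \le \mdist(S_1, S_2') + \epsilon$, which rearranges to the claimed lower bound $\mdist(S_1,S_2) - \epsilon \le \mdist(S_1, S_2')$ (trivially true, if one likes, when the left side is negative). Symmetrically, splitting through $S_1'$ gives $\mdist(S_1, S_2) \le \mdist(S_1, S_1') + \mdist(S_1', S_2) = \mdist(S_1', S_2)$ and $\mdist(S_1', S_2) \le \mdist(S_1', S_1) + \mdist(S_1, S_2) \le \epsilon + \mdist(S_1, S_2)$.

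I expect essentially no obstacle here; the one thing to be careful about is that $\mdist$ is asymmetric, so each application of the triangle inequality must be arranged so that the relaxation contributes its vanishing term $\mdist(S_i, S_i')$ (coming from $S_i \le_m S_i'$) in the upper bounds and its bounded term $\mdist(S_i', S_i) \le \epsilon$ in the lower bounds, and never the reverse. The argument also goes through unchanged when some of the distances equal $\infty$. An alternative derivation via modal refinement families in the style of Lemma~\ref{le:family} is possible, but the triangle-inequality route is shorter and cleaner.
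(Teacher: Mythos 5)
Your proof is correct and follows the same route as the paper: four applications of the triangle inequality for $\mdist$, combined with $\mdist(S_i,S_i')=0$ (from $S_i\le_m S_i'$) and $\mdist(S_i',S_i)\le\epsilon$ (from $S_i'\le_m^\epsilon S_i$). The only difference is that you explicitly justify why $T\le_m T'$ forces $\mdist(T,T')=0$, a step the paper leaves implicit; your pre-fixed-point argument for it is sound.
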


\begin{proof}
  By the triangle inequality we have
  \begin{align*}
    \mdist( S_1, S_2') &\le \mdist( S_1, S_2)+ \mdist( S_2, S_2'), \\
    \mdist( S_1, S_2) &\le \mdist( S_1, S_2')+ \mdist( S_2', S_2), \\
    \mdist( S_1, S_2) &\le \mdist( S_1, S_1')+ \mdist( S_1', S_2), \\
    \mdist( S_1', S_2) &\le \mdist( S_1', S_1)+ \mdist( S_1,
    S_2). \qed
  \end{align*}
\end{proof}

On the syntactic level, we can introduce the following \emph{widening}
operator which relaxes all quantitative constraints in a systematic
manner.  We write $I\pm \delta=[ x- \delta, y+ \delta]$ for an interval
$I=[ x, y]$ and $\delta\in \Nat$.

\begin{definition}\label{def:widening}
  Given $\delta\in \Nat$, the \emph{$\delta$-widening} of a WMTS $S$
  is the WMTS $S^{ +\delta}$ with transitions $s\mayto[ a, I\pm
  \delta] t$ in $S^{ +\delta}$ for all $s\mayto[ a, I] t$ in $S$, and
  $s\mustto[ a, I\pm \delta] t$ in $S^{ +\delta}$ for all $s\mustto[
  a, I] t$ in $S$.
\end{definition}

Widening and relaxation are related as follows; note also that as
widening is a global operation whereas relaxation may be achieved
entirely locally, not all relaxations may be obtained as widenings.

\begin{proposition}
  \label{pr:wide-propt}
  The $\delta$-widening of any WMTS $S$ is a $( 1- \lambda)^{ -1}
  \delta$-relaxation.
\end{proposition}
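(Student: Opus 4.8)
The plan is to unfold Definition~\ref{def:relax}: with $\epsilon=(1-\lambda)^{-1}\delta$ we must establish (i) $S\le_m S^{+\delta}$ and (ii) $S^{+\delta}\le_m^{\epsilon}S$.

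For (i), observe that for every interval $I=[x,y]$ we have $I\sqsubseteq I\pm\delta=[x-\delta,y+\delta]$, hence $(a,I)\sqsubseteq(a,I\pm\delta)$ for every label. Then the identity relation $\{(s,s)\mid s\in S\}$ is a modal refinement: a may-transition $s\mayto[a,I] t$ of $S$ is matched by $s\mayto[a,I\pm\delta] t$ of $S^{+\delta}$ with the required inclusion, and by Definition~\ref{def:widening} every must-transition $s\mustto[a,I\pm\delta] t$ of $S^{+\delta}$ arises from a must-transition $s\mustto[a,I] t$ of $S$, again with $(a,I)\sqsubseteq(a,I\pm\delta)$.

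For (ii), I would first record, from the elementary formula for $\kdist$ stated just before Definition~\ref{de:acc.mo.dist}, that $\kdist\big((a,I\pm\delta),(a,I)\big)=\max(\delta,\delta,0)=\delta$ for every action $a$ and interval $I$. Then I would exhibit a modal refinement family and appeal to Lemma~\ref{le:family}. Concretely, set $R_\eta=\{(s,s)\mid s\in S\}$ for all $\eta\ge(1-\lambda)^{-1}\delta$ and $R_\eta=\emptyset$ otherwise; this family is trivially upward and downward closed. To check it is a modal refinement family from $S^{+\delta}$ to $S$, take $(s,s)\in R_\eta$, so $\eta\ge(1-\lambda)^{-1}\delta\ge\delta$. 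A may-move $s\mayto[a,I\pm\delta] t$ of $S^{+\delta}$ is matched by $s\mayto[a,I] t$ of $S$; here $\kdist=\delta\le\eta$, and putting $\eta'=\lambda^{-1}(\eta-\delta)$ one has $\eta-\delta\ge(1-\lambda)^{-1}\delta-\delta=\lambda(1-\lambda)^{-1}\delta$, so $\eta'\ge(1-\lambda)^{-1}\delta$ and therefore $(t,t)\in R_{\eta'}$, as required. Dually, a must-move $s\mustto[a,I] t$ of $S$ is matched, by Definition~\ref{def:widening}, by $s\mustto[a,I\pm\delta] t$ of $S^{+\delta}$ with the same cost $\delta$ and the same bound on $\eta'$. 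Since $(s^0,s^0)\in R_{(1-\lambda)^{-1}\delta}$, Lemma~\ref{le:family} yields $S^{+\delta}\le_m^{(1-\lambda)^{-1}\delta}S$, which together with (i) is exactly the claim. (Alternatively, one may verify directly that the function assigning $(1-\lambda)^{-1}\delta$ to each diagonal pair and $\infty$ elsewhere is a pre-fixed point of the monotone operator $D$ from the proof of Lemma~\ref{le:family} — matching each widened transition with its original gives $D(f)(s,s)\le\delta+\lambda(1-\lambda)^{-1}\delta=(1-\lambda)^{-1}\delta$ — and conclude $\mdist\le f$ by Tarski.)

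All the computations here are elementary. The only place needing a little care — and the reason the constant $(1-\lambda)^{-1}$, rather than just $1$, appears — is the bookkeeping of the discounted index $\eta'$ in the refinement family: each step of the widening contributes a label cost $\delta$, and these costs accumulate, discounted, to $\sum_{i\ge0}\lambda^i\delta=(1-\lambda)^{-1}\delta$; the inequality $\eta'\ge(1-\lambda)^{-1}\delta$ above is precisely the fixed-point manifestation of this geometric sum.
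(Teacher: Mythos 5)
Your proof is correct and follows essentially the same route as the paper's: the identity relation witnesses $S\le_m S^{+\delta}$, and the refinement family that is $\id_S$ at all indices $\eta\ge(1-\lambda)^{-1}\delta$ and empty below establishes $S^{+\delta}\le_m^{(1-\lambda)^{-1}\delta}S$ via Lemma~\ref{le:family}. Your explicit computation $\kdist\big((a,I\pm\delta),(a,I)\big)=\delta$ and the bookkeeping $\eta'=\lambda^{-1}(\eta-\delta)\ge(1-\lambda)^{-1}\delta$ are, if anything, slightly cleaner than the paper's corresponding inequalities.
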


\begin{proof}
  For the first claim, the identity relation $\id_S=\{( s, s)\mid s\in
  S\}\subseteq S\times S$ is a witness for $S\le_m S^{ +\delta}$: if
  $s\mayto[ k] t$, then by construction $s\mayto[ k_2]_{ +\delta} t$
  with $k\sqsubseteq k_2$, and if $s\mustto[ k_2]_{ +\delta} t$, then
  again by construction $s\mustto[ k] t$ for some $k\sqsubseteq k_2$.

  Now to prove $\mdist( S^{ +\delta}, S)\le( 1- \lambda)^{ -1}
  \delta$, we define a family of relations $R=\{ R_\epsilon\mid
  \epsilon\ge 0\}$ by $R_\epsilon= \emptyset$ for $\epsilon<( 1-
  \lambda)^{ -1} \delta$ and $R_\epsilon= \id_S$ for $\epsilon\ge {(
    1- \lambda)^{ -1} \delta}$.  We show that $R$ is a modal
  refinement family.

  Let $( s, s)\in R_\epsilon$ for some $\epsilon\ge {( 1- \lambda)^{
      -1} \delta}$, and assume $s\mayto[ k_2]_{ +\delta} t$.  By
  construction there is a transition $s\mayto[ k] t$ with $\kdist( k_2,
  k)\le \delta\le \epsilon$.  Now
  \begin{equation*}
    \frac1\lambda\Big( \epsilon- \kdist( k_2, k)\Big)\ge \frac1\lambda\Big(
    \frac \delta{ 1- \lambda}- \delta\Big)= \frac \delta{ 1-
      \lambda}\ge \epsilon
  \end{equation*}
  and $( t, t)\in R_\epsilon$, which settles this part of the proof.
  The other direction, starting with a transition $s\mustto[ k] t$, is
  similar.  \qed
\end{proof}

There is also an implementation-level notion which corresponds to
relaxation:

\begin{definition}
  The \emph{$\epsilon$-extended implementation semantics}, for
  $\epsilon\ge 0$, of a WMTS $S$ is $\llbracket S\rrbracket^{
    +\epsilon}=\big\{ I\bigmid I\le_m^\epsilon S, I \text{
    implementation}\big\}$.
\end{definition}

\begin{proposition}
  \label{pr:semwide}
  If $S'$ is an $\epsilon$-relaxation of $S$, then $\llbracket
  S'\rrbracket\subseteq \llbracket S\rrbracket^{ +\epsilon}$.
\end{proposition}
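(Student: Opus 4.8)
The plan is to unfold both sides of the inclusion to a single distance inequality and then close with the triangle inequality for $\mdist$. By definition a member of $\llbracket S'\rrbracket$ is an implementation $I$ with $I\le_m S'$, while $I$ lies in $\llbracket S\rrbracket^{+\epsilon}$ precisely when $I\le_m^\epsilon S$, i.e.\ when $\mdist(I,S)\le\epsilon$. So it suffices to prove: for every implementation $I$ with $I\le_m S'$ one has $\mdist(I,S)\le\epsilon$. (Note that only the half $S'\le_m^\epsilon S$ of the relaxation hypothesis will be used.)

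First I would record that $I\le_m S'$ forces $\mdist(I,S')=0$. Given a modal refinement relation $R$ witnessing $I\le_m S'$, the constant family $R_\delta:=R$ for all $\delta\ge 0$ is a modal refinement family: in each matching clause of the definition of $\le_m$ one gets $k_1\sqsubseteq k_2$, hence $\kdist(k_1,k_2)=0$, and the matched pair of successors lies again in $R=R_{\delta'}$ for $\delta'=0\le\lambda^{-1}\big(\delta-\kdist(k_1,k_2)\big)$. Since the pair of initial states lies in $R=R_0$, Lemma~\ref{le:family} (applied with $\epsilon=0$) yields $I\le_m^0 S'$, that is $\mdist(I,S')=0$. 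Equivalently, one checks directly that the function sending each pair of $R$ to $0$ and all other pairs to $\infty$ is a pre-fixed point of the functional from the proof of Lemma~\ref{le:family}, so the least fixed point $\mdist$ lies below it.

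Next, since $S'$ is an $\epsilon$-relaxation of $S$ we have $S'\le_m^\epsilon S$, i.e.\ $\mdist(S',S)\le\epsilon$. As $\mdist$ is a hemimetric, the triangle inequality gives
\[
  \mdist(I,S)\;\le\;\mdist(I,S')+\mdist(S',S)\;\le\;0+\epsilon\;=\;\epsilon ,
\]
so $I\le_m^\epsilon S$; since $I$ is an implementation this says exactly $I\in\llbracket S\rrbracket^{+\epsilon}$, as required.

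The only point needing any care is the passage from the relational refinement $I\le_m S'$ to the numerical statement $\mdist(I,S')=0$; everything after that is the one-line use of the triangle inequality. If one prefers not to route through $\mdist$ at all, an alternative is to compose a witness for $I\le_m S'$ with a modal refinement family witnessing $S'\le_m^\epsilon S$ — exactly as in the proof of Theorem~\ref{th:dtledm} — which directly produces a modal refinement family on $I\times S$ placing the pair of initial states in its $\epsilon$-level.
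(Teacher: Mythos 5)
Your proposal is correct and follows essentially the same route as the paper: the paper's proof likewise observes that $I\in\llbracket S'\rrbracket$ gives $\mdist(I,S')=0$ and then invokes Proposition~\ref{pr:wide-two} (itself just the triangle inequality for $\mdist$) to conclude $\mdist(I,S)\le\epsilon$. The only difference is that you spell out the step $I\le_m S'\Rightarrow\mdist(I,S')=0$ via a constant modal refinement family and Lemma~\ref{le:family}, which the paper leaves implicit.
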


\begin{proof}
  If $I\in \llbracket S'\rrbracket$, then $\mdist( I, S')= 0$, hence
  $\mdist( I, S)\le \epsilon$ by Proposition~\ref{pr:wide-two}, which in
  turn implies that $I\in \llbracket S\rrbracket^{+\epsilon}$. \qed
\end{proof}

The example in Figure~\ref{fi:sem-wide-counterex} shows that there are
WMTS $S$, $S'$ such that $S'$ is an $\epsilon$-relaxation of $S$ but the
inclusion $\llbracket S'\rrbracket\subseteq \llbracket S\rrbracket^{
  +\epsilon}$ is strict.  Indeed, for $\delta= 1$ and $\lambda= .9$, we
have $I\in \llbracket S\rrbracket^{ +( 1- \lambda)^{ -1} \delta}$, but
$I\notin \llbracket S^{ +\delta}\rrbracket$.

\begin{figure}[tpb]
  \centering
  \subfigure[$S$]{
    \begin{tikzpicture}[->,>=stealth',shorten >=1pt,auto,node
      distance=2.0cm,initial text=,scale=0.8,transform shape]
      \tikzstyle{every node}=[font=\small] \tikzstyle{every
        state}=[fill=white,shape=circle,inner sep=.5mm,minimum size=6mm]
      \node[initial,state] (i) at (0,0) {$s$};
      \node[state] (j) at (2,0) {$t$};
      \node[state] (k) at (4,0) {$u$};
      \path (i) edge node [above] {$a,[ 5, 5]$} (j);
      \path (j) edge node [above] {$a,[ 5, 5]$} (k);
    \end{tikzpicture}}
  \quad
  \subfigure[$S^{ +1}$]{
    \begin{tikzpicture}[->,>=stealth',shorten >=1pt,auto,node
      distance=2.0cm,initial text=,scale=0.8,transform shape]
      \tikzstyle{every node}=[font=\small] \tikzstyle{every
        state}=[fill=white,shape=circle,inner sep=.5mm,minimum size=6mm]
      \node[initial,state] (i) at (0,0) {$s^{ +1}$};
      \node[state] (j) at (2,0) {$t^{ +1}$};
      \node[state] (k) at (4,0) {$u^{ +1}$};
      \path (i) edge node [above] {$a,[ 4, 6]$} (j);
      \path (j) edge node [above] {$a,[ 4, 6]$} (k);
    \end{tikzpicture}}
  \quad
  \subfigure[$I$]{
    \begin{tikzpicture}[->,>=stealth',shorten >=1pt,auto,node
      distance=2.0cm,initial text=,scale=0.8,transform shape]
      \tikzstyle{every node}=[font=\small] \tikzstyle{every
        state}=[fill=white,shape=circle,inner sep=.5mm,minimum size=6mm]
      \node[initial,state] (i) at (0,0) {$i$};
      \node[state] (j) at (2,0) {$j$};
      \node[state] (k) at (4,0) {$k$};
      \path (i) edge node [above] {$a, 15$} (j);
      \path (j) edge node [above] {$a, 5$} (k);
    \end{tikzpicture}}
  \caption{WMTS $S$ and implementation $I$ for which $I\in \llbracket
    S\rrbracket^{ +( 1- \lambda)^{ -1} \delta}$, for $\delta= 1$ and
    $\lambda= .9$ (thus $( 1- \lambda)^{ -1} \delta= 10$), but $I\notin
    \llbracket S^{ +\delta}\rrbracket$, so that $\llbracket S^{
      +\delta}\rrbracket\subsetneq \llbracket
    S\rrbracket^{ +( 1- \lambda)^{ -1} \delta}$, even though $S^{
      +\delta}$ is a $( 1- \lambda)^{ -1} \delta$-relaxation of $S$.}
  \label{fi:sem-wide-counterex}
\end{figure}
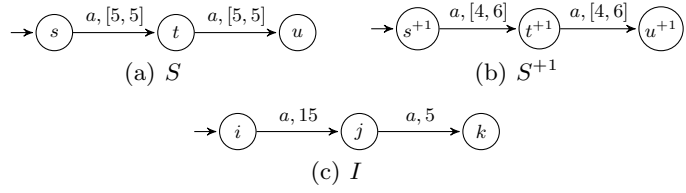

\section{Limitations of the Quantitative Approach}
\label{se:hull}

In this section we turn our attention towards some of the standard
operators for specification theories; determinization and logical
conjunction. In the standard Boolean setting, there is indeed a
determinization operator which derives the smallest deterministic
overapproximation of a specification, which is useful because it
enables checking thorough refinement,
\cf~Theorem~\ref{th:det-dteqdm}. Quite surprisingly, we show that in
the quantitative setting, there are problems with these notions which
do not appear in the Boolean theory.  More specifically, we show that
there is no determinization operator which always yields a
smallest deterministic overapproximation, and there is no
conjunction operator which acts as a greatest lower bound.

\begin{theorem}
  \label{th:no-dethull}
  There is no unary operator $\mathcal D$ on WMTS for which it holds
  that
  \begin{enumerate}[$(\ref{th:no-dethull}.1)$]
  \item \label{en:no-dethull:det}
    $\mathcal D( S)$ is deterministic for any WMTS $S$,
  \item
    \label{en:no-dethull:ub}
    $S\le_m \mathcal D( S)$ for any WMTS $S$,
  \item
    \label{en:no-dethull:lub}
    $S\le_m^\epsilon D$ implies $\mathcal D( S)\le_m^\epsilon D$ for
    any WMTS $S$, any deterministic WMTS $D$, and any $\epsilon\ge 0$.
  \end{enumerate}
\end{theorem}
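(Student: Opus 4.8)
The plan is to exhibit a single WMTS $S$ for which the three requirements $(\ref{th:no-dethull}.1)$--$(\ref{th:no-dethull}.3)$ are mutually contradictory, in the spirit of the incompleteness example of Figure~\ref{fig:incompleteness}. A natural candidate is a nondeterministic specification with two $a$-branches carrying different intervals, say $s^0 \mayto[{a,[0,0]}] t_1$ and $s^0 \mayto[{a,[2,2]}] t_2$ (both \may, with trivial continuations, or terminating). Any deterministic WMTS $D$ with $s^0 \mayto[{a,I}] d$ then has $\mdist(S,D) = \max(\kdist(( a,[0,0]), ( a,I)), \kdist(( a,[2,2]), ( a,I)))$; minimising over $I$, the best a deterministic overapproximation can achieve against a deterministic $D$ of this shape is governed by how far $I$ must stretch to cover both $0$ and $2$.

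First I would make this precise: for the specific witness $S$ I would pick two concrete deterministic specifications $D_1, D_2$ (e.g.\ with $a$-intervals $[0,0]$ and $[2,2]$ respectively) and compute $\mdist(S,D_1)$ and $\mdist(S,D_2)$ directly from Definition~\ref{de:acc.mo.dist} and the elementary formula for $\kdist$ on same-action labels. By the nondeterministic structure of $S$, these distances will be small (in fact $\mdist(S,D_1) = 2$ and $\mdist(S,D_2) = 2$, or one can scale the example so they equal any chosen $\epsilon$), because only one of the two branches of $S$ is ``far'' from each $D_i$. Then by $(\ref{th:no-dethull}.3)$ applied with $D := D_1$ and $D := D_2$ we would get $\mathcal D(S) \le_m^\epsilon D_1$ and $\mathcal D(S) \le_m^\epsilon D_2$ simultaneously, for $\epsilon = 2$.

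Next I would derive a contradiction from these two facts together with $(\ref{th:no-dethull}.1)$ and $(\ref{th:no-dethull}.2)$. Since $S \le_m \mathcal D(S)$ by $(\ref{th:no-dethull}.2)$, the transition $s^0 \mayto[{a,[0,0]}] t_1$ must be matched in $\mathcal D(S)$, so $\mathcal D(S)$ has some $a$-transition $r^0 \mayto[{a,J}] r$ with $[0,0] \sqsubseteq J$, i.e.\ $0 \in J$; likewise matching the $[2,2]$-branch forces (by determinism, $(\ref{th:no-dethull}.1)$, it is the \emph{same} transition) $2 \in J$, so $J \supseteq [0,2]$. But then $\kdist(( a,J),( a,[0,0])) \ge 2$ gives $\mdist(\mathcal D(S), D_1) \ge 2$-ish in the wrong direction --- more carefully, I would instead choose $D_1, D_2$ (or a third deterministic $D$) so that $\mathcal D(S) \le_m^\epsilon D_i$ forces $J$ to be \emph{narrow}, e.g.\ $D_1$ with interval $[0,0]$ forces the upper endpoint of $J$ to be close to $0$ and $D_2$ with interval $[2,2]$ forces the lower endpoint of $J$ to be close to $2$, which is impossible for $J$ an interval once the two ``closeness'' thresholds sum to less than $2$. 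Quantitatively, $\mathcal D(S) \le_m^\epsilon D$ with $D$ having $a$-interval $[c,c]$ yields $\kdist(( a,J),( a,[c,c])) \le \epsilon$, hence $J \subseteq [c - \epsilon, c + \epsilon]$; taking $c = 0$ and $c = 2$ forces $J \subseteq [-\epsilon,\epsilon] \cap [2-\epsilon, 2+\epsilon]$, empty when $\epsilon < 1$. So scaling the original example so that $\mdist(S,D_1) = \mdist(S,D_2) = \epsilon$ with $\epsilon < 1$ (e.g.\ intervals $[0,0]$ and $[2\epsilon, 2\epsilon]$ in $S$, against $D_i$ with the matching singleton) produces the contradiction.

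The main obstacle is bookkeeping rather than conceptual: one must choose the intervals in $S$ and in the $D_i$ so that (a) the distances $\mdist(S,D_i)$ are genuinely achieved by the nondeterministic branching of $S$ (so they are small), while (b) the constraints $\mathcal D(S) \le_m^\epsilon D_i$ pull the single deterministic interval $J$ of $\mathcal D(S)$ in incompatible directions. I expect the cleanest route is: $S$ has $a$-edges with intervals $[0,0]$ and $[1,1]$; then $\mdist(S, D_{[0,0]}) = 1$ and $\mdist(S, D_{[1,1]}) = 1$ via picking the far branch's $\inf$ over the single target of $D_i$; but $(\ref{th:no-dethull}.3)$ with $\epsilon = 1$ would require $\mathcal D(S) \le_m^1 D_{[0,0]}$, forcing $J \subseteq [-1,1]$, which alone is fine --- so instead I would use a \emph{strict} inequality argument: there is no $J$ with $\mdist$-to-both equal to the \emph{infimum} $1$ unless $J$ contains both $0$ and $1$, making $\mdist(\mathcal D(S), D_{[0,0]}) \ge 1$ with equality only if $J = [0,1]$, and symmetrically $J = [0,1]$ for $D_{[1,1]}$, giving $J = [0,1]$; but then $\mathcal D(S)$ is \emph{equal} to the obvious candidate, and one checks against a \emph{third} deterministic $D$ (e.g.\ a refinement-direction witness) that $S \le_m^{1/2} D$ while $[0,1] \not\le_m^{1/2} D$, contradicting $(\ref{th:no-dethull}.3)$. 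Pinning down this last $D$ --- the one that separates $S$ from its forced determinization --- is the real crux, and I would look for it among deterministic WMTS that exploit the \emph{discounting}: a second step after the $a$-edge, present in $S$ only on one branch, can make $S$ closer to $D$ than the merged interval $[0,1]$ is.
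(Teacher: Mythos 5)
Your proposal correctly diagnoses, in its final paragraph, where the contradiction must come from---a \emph{two-step} deterministic target $D$ that exploits discounting, so that each nondeterministic branch of $S$ deviates from $D$ in only one of the two steps (making $\mdist(S,D)$ a \emph{maximum} of the two deviations), while the single merged interval that determinism forces on $\mathcal D(S)$ must deviate in \emph{both} steps (making $\mdist(\mathcal D(S),D)$ an accumulated sum). But you explicitly stop there and label the construction of this $D$ as ``the real crux'': that crux is the entire content of the proof, and without the concrete witness and its verification the argument is not complete. The paper's witness (Figure~\ref{fig:cex}) is exactly of the shape you predict: $S$ has branches $s_0\mayto[{a,[3,3]}] s_1\mayto[{a,[3,3]}] s_3$ and $s_0\mayto[{a,[5,6]}] s_2\mayto[{a,[0,0]}] s_4$, and $D$ is $d_0\mayto[{a,[2,3]}] d_1\mayto[{a,[0,0]}] d_2$; one computes $\mdist(S,D)=\max(3\lambda,3)=3$ (the first branch pays $3$ only at depth two, the second only at depth one), whereas determinism plus Property~$(\ref{th:no-dethull}.\ref{en:no-dethull:ub})$ forces $\mathcal D(S)$ to carry intervals containing $[3,6]$ and then $[0,3]$, giving $\mdist(\mathcal D(S),D)\ge 3+3\lambda>3$. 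Your one-step attempts, by contrast, provably cannot be rescued: for a depth-one $S$ with branches $(a,I_1),\dots,(a,I_n)$ and a depth-one deterministic $D$ with interval $[x,y]$, one has $\kdist\big((a,\textstyle\bigcup_j I_j),(a,[x,y])\big)=\max\big(\max_j(x-x_j),\max_j(y_j-y),0\big)=\max_j\kdist\big((a,I_j),(a,[x,y])\big)=\mdist(S,D)$, so interval merging is exactly distance-preserving at depth one and the constraints on $J$ you derive are always simultaneously satisfiable---which is why your ``scaling'' manoeuvre never produces an empty intersection.

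On architecture: the paper proceeds differently, first exhibiting a canonical determinization $\mathcal D'$ satisfying $(\ref{th:no-dethull}.\ref{en:no-dethull:det})$, $(\ref{th:no-dethull}.\ref{en:no-dethull:ub})$ and the $\epsilon=0$ instance of $(\ref{th:no-dethull}.\ref{en:no-dethull:lub})$, deducing that any hypothetical $\mathcal D$ would be mutually modally refinable with $\mathcal D'$, and then refuting $(\ref{th:no-dethull}.\ref{en:no-dethull:lub})$ for $\mathcal D'$ on the example above. Your route---constraining $\mathcal D(S)$ directly from $(\ref{th:no-dethull}.\ref{en:no-dethull:det})$ and $(\ref{th:no-dethull}.\ref{en:no-dethull:ub})$, since the forced may-transitions must coincide under determinism and hence carry the merged interval---is a legitimate shortcut for the lower bound, because additional transitions or wider intervals in $\mathcal D(S)$ can only increase the supremum defining $\mdist(\mathcal D(S),D)$. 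So the gap is not in your proof strategy but in the missing example: as it stands, no contradiction has actually been derived.
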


\begin{proof}
  There is a determinization operator $\mathcal D'$ on WMTS which
  satisfies Properties~$(\ref{th:no-dethull}.\ref{en:no-dethull:det})$
  and $(\ref{th:no-dethull}.\ref{en:no-dethull:ub})$ above and a weaker
  version of Property~$(\ref{th:no-dethull}.\ref{en:no-dethull:lub})$
  with $\epsilon= 0$:
  \begin{enumerate}[$(\ref{th:no-dethull}.1')$]
  \item[$(\ref{th:no-dethull}.\ref{en:no-dethull:lub}')$] $S\le_m D$
    implies $\mathcal D'( S)\le_m D$ for any WMTS $S$ and any
    deterministic WMTS $D$.
  \end{enumerate}
  This $\mathcal D'$ can be defined as follows: For a WMTS $S =
  (S,s_0,\mmayto,\mmustto)$,
  \begin{equation*}
    \mathcal D'(S) = \big( \powerset{S} \setminus
    \{\emptyset\},\{s_0\},\mmayto_d,\mmustto_d\big),
  \end{equation*}
  where $\powerset S$ is the power set of $S$ and the transition
  relations $\mmayto_d$ and $\mmustto_d$ are defined as follows: Let
  $\mathcal T \in (\powerset{S} \setminus \{\emptyset\})$ be a state in
  $\mathcal D'(S)$. For every maximal, nonempty set $L_a \subseteq \{ I
  \mid \exists s \in \mathcal T : s \mayto[a,I] \}$ for some $a \in
  \Sigma$, we have $\mathcal T \mayto[a,\bigcup L_a]_d \mathcal T_a$
  where $\mathcal T_a = \{ s' \in S \mid \exists s \in \mathcal T, I \in
  L_a : s\mayto [ a,I] s' \}$ and $\bigcup L_a$ is the smallest interval
  containing all intervals from $L_a$. If, moreover, for each $s \in
  \mathcal T$ we have $s \mustto[a,I] s'$ for some $s' \in \mathcal T_a$
  and some $I \in L_a$, then $\mathcal T \mustto[ a,\bigcup L_a ]_d
  \mathcal T_a$. It is straightforward to prove that $\mathcal D'$
  satisfies the expected properties.

  Assume now that there is an operator $\mathcal D$ as in the theorem.
  Then for any WMTS $S$, $S\le_m \mathcal D'( S)$ and thus $\mathcal D(
  S)\le_m \mathcal D'( S)$
  by~$(\ref{th:no-dethull}.\ref{en:no-dethull:lub})$, and $S\le_m
  \mathcal D( S)$ and hence $\mathcal D'( S)\le_m \mathcal D( S)$
  by~$(\ref{th:no-dethull}.\ref{en:no-dethull:lub}')$.  We finish the
  proof by showing that the operator $\mathcal D'$ does not
  satisfy~$(\ref{th:no-dethull}.\ref{en:no-dethull:lub})$.  The example
  in Figure~\ref{fig:cex} shows a WMTS $S$ and a deterministic WMTS $D$
  for which $\mdist\big( \mathcal D'( S), D\big)= 3+ 3\lambda$ and
  $\mdist( S, D)= \max( 3, 3\lambda)= 3$, hence $\mdist\big( \mathcal
  D'( S), D\big)\not\le \mdist( S, D)$. \qed
\end{proof}

Likewise, the greatest-lower-bound property of logical conjunction in
the Boolean setting ensures that the set of implementations of a
conjunction of specifications is precisely the intersection of the
implementation sets of the two specifications.  Conjoining two WMTS
naturally involves a partial label conjunction operator $\conj$. We let
$( a_1, I_1)\conj( a_2, I_2)$ be undefined if $a_1\ne a_2 $, and
otherwise
\begin{align*}
  \big( a,[ x_1, y_1]\big)\conj\big( a,[ x_2, y_2]\big) &= 
  \begin{cases}
    \big( a,[ \max(x_1,x_2), \min(y_1, y_2)]\big) \\
    &\hspace*{-5em}\text{if }
    \max(x_1,x_2) \le \min(y_1, y_2),\\
    \text{undefined} &\hspace*{-5em}\text{otherwise}.
\end{cases}
\end{align*}

Before we show that such a conjunction operator for WMTS does not exist
in general, we need to define a \emph{pruning operator} which removes
inconsistent states that naturally arise when conjoining two WMTS.  The
intuition is that if a WMTS $S_1$ requires a behavior $s_1 \mustto[ k_1
]_1$ for which there is no may transition $s_2 \mayto[ k_2]_2$ such that
$k_1 \conj k_2$ is defined, then the state $(s_1,s_2)$ in the
conjunction is \emph{inconsistent} and will have to be pruned away,
together with all \must transitions leading to it.  In the definition
below, $\pre^*$ denotes the reflexive, transitive closure of $\pre$.

\begin{definition}
  \label{de:prune}
  For a WMTS $S$, let $\pre: 2^S\to 2^S$ be given by $\pre( B)=\{ s\in
  S\mid s\mustto[ k] t\in B \text{ for some } k\}$. Let $\lightning
  \subseteq S$ be a set of \emph{inconsistent} states. If $s^0\notin
  \pre^*( \lightning )$, then the \emph{pruning of $S$
    w.r.t.~$\lightning$} is defined by $\rho^\lightning( S)=( S_\rho,
  s^0, \mmayto_\rho, \mmustto_\rho)$ where $S_\rho= S\setminus \pre^*(
  \lightning )$, $\mmayto_\rho= \mmayto\cap\big( S_\rho\times \K \times
  S_\rho\big)$ and $\mmustto_\rho= \mmustto\cap\big( S_\rho\times \K
  \times S_\rho\big)$.
\end{definition}

\begin{figure}[tpb]
  \centering 
  \subfigure[$S$]{
    \begin{tikzpicture}[->,>=stealth',shorten >=1pt,auto,node
      distance=2.0cm,initial text=,scale=0.8,transform shape]
      \tikzstyle{every node}=[font=\small] \tikzstyle{every
        state}=[fill=white,shape=circle,inner sep=.5mm,minimum size=6mm]
      \node[initial,state] (i) at (0,0) {$s_0$};
      \node[state] (s) at (2,1) {$s_1$};
      \node[state] (t) at (2,-1) {$s_2$};
      \node[state] (spost) at (4,1) {$s_3$};
      \node[state] (tpost) at (4,-1) {$s_4$};
      \path (i) edge [densely dashed] node [above,sloped] {$a, [3,3]$} (s);
      \path (i) edge [densely dashed] node [above,sloped] {$a, [5,6]$} (t);
      \path (t) edge [densely dashed] node [above,sloped] {$a,
        [0,0]$} (tpost); 
      \path (s) edge [densely dashed] node [above,sloped] {$a,
        [3,3]$} (spost); 
    \end{tikzpicture}}
  \hspace{2mm}
  \subfigure[$\mathcal D'(S)$]{
    \raisebox{8mm}{
      \begin{tikzpicture}[->,>=stealth',shorten >=1pt,auto,node
        distance=2.0cm,initial text=,scale=0.8,transform shape]
        \tikzstyle{every node}=[font=\small] \tikzstyle{every
          state}=[fill=white,shape=circle,inner sep=.5mm,minimum size=6mm]
        \node[initial,state,shape=rectangle,rounded corners] (i) at
        (0,0) {$\{s_0\}$}; 
        \node[state,shape=rectangle,rounded corners] (s) at (3,0)
        {$\{s_1,s_2\}$}; 
        \node[state,shape=rectangle,rounded corners] (t) at (6,0)
        {$\{s_3,s_4\}$}; 
        \path (i) edge [densely dashed] node [above,sloped] {$a,
          [3,6]$} (s); 
        \path (s) edge [densely dashed] node [above,sloped] {$a,
          [0,3]$} (t); 
      \end{tikzpicture}}}
  \hspace{2mm}
  \subfigure[$D$]{
    \begin{tikzpicture}[->,>=stealth',shorten >=1pt,auto,node
      distance=2.0cm,initial text=,scale=0.8,transform shape]
      \tikzstyle{every node}=[font=\small] \tikzstyle{every
        state}=[fill=white,shape=circle,inner sep=.5mm,minimum size=6mm]
      \node[initial,state] (i) at (0,0) {$d_0$};
      \node[state] (s) at (2,0) {$d_1$};
      \node[state] (t) at (4,0) {$d_2$};
      \path (i) edge [densely dashed] node [above,sloped] {$a, [2,3]$} (s);
      \path (s) edge [densely dashed] node [above,sloped] {$a, [0,0]$} (t);
    \end{tikzpicture}
  }
  \caption{Counter-example for Theorem~\ref{th:no-dethull}: $\mdist\big(
    \mathcal D'( S), D\big)= 3+ 3\lambda$ and $\mdist( S, D)= \max( 3,
    3\lambda)= 3$, hence $\mdist\big( \mathcal D'( S), D\big)\not\le
    \mdist( S, D)$.}
  \label{fig:cex}
\end{figure}

\begin{theorem}
  \label{th:no-conj}
  There is no partial binary operator $\wedge$ on WMTS for which it
  holds that, for all WMTS $S$, $S_1$, $S_2$ such that $S_1$ and $S_2$
  are deterministic,
  \begin{enumerate}[$(\ref{th:no-conj}.1)$]
  \item
    \label{en:no-conj:lb}
    whenever $S_1 \wedge S_2$ is defined, then
    $S_1\wedge S_2\le_m S_1$ and $S_1\wedge S_2\le_m S_2$,
  \item 
    \label{en:no-conj:def}
    whenever $S\le_m S_1$ and
    $S\le_m S_2$, then $S_1 \wedge S_2$ is defined and $S\le_m S_1\wedge S_2$,
  \item
    \label{en:no-conj:glb}
    for any $\epsilon\ge 0$, there exist $\epsilon_1\ge 0$ and
    $\epsilon_2\ge 0$ such that if $S_1 \wedge S_2$ is defined,
    $S\le_m^{ \epsilon_1} S_1$ and $S\le_m^{ \epsilon_2} S_2$, then
    $S\le_m^\epsilon S_1\wedge S_2$.
  \end{enumerate}
\end{theorem}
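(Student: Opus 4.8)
The plan is to first use properties~$(\ref{th:no-conj}.\ref{en:no-conj:lb})$ and $(\ref{th:no-conj}.\ref{en:no-conj:def})$ to pin down the \emph{implementation semantics} of $S_1\wedge S_2$, and then to exhibit a single pair of deterministic WMTS for which property~$(\ref{th:no-conj}.\ref{en:no-conj:glb})$ cannot hold. So suppose, towards a contradiction, that such a partial operator $\wedge$ exists. For deterministic $S_1,S_2$ that admit a common modal refinement $I$ (\ie $I\le_m S_1$ and $I\le_m S_2$), property~$(\ref{th:no-conj}.\ref{en:no-conj:def})$ guarantees that $S_1\wedge S_2$ is defined, and combining both properties with transitivity of $\le_m$ one gets $\llbracket S_1\wedge S_2\rrbracket=\llbracket S_1\rrbracket\cap\llbracket S_2\rrbracket$: inclusion ``$\subseteq$'' follows from $S_1\wedge S_2\le_m S_i$ together with $\llbracket\cdot\rrbracket$ being monotone under $\le_m$, and ``$\supseteq$'' is exactly $(\ref{th:no-conj}.\ref{en:no-conj:def})$ applied to implementations. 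Consequently $\tdist(I,S_1\wedge S_2)$ depends only on these two fixed implementation sets, regardless of how the operator $\wedge$ chooses to lay out states and transitions internally --- this is the point that makes the argument robust against the fact that $\wedge$ is underdetermined.

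For the counterexample I would take $S_1$ to be the single-state WMTS with a \may self-loop labelled $(a,[0,0])$ together with an optional \may transition $(b,[-1,-1])$ into a deadlock, and $S_2$ the same but with $(b,[1,1])$ in place of $(b,[-1,-1])$. Both are deterministic, and any weighted transition system built from weight-$0$ $a$-transitions and no $b$-transition refines both, so $S_1\wedge S_2$ is defined. For each $n\in\Nat$ let $I_n$ be the implementation performing $n$ $a$-transitions of weight $0$ and then one $b$-transition of weight $0$ into a deadlock. A routine least-fixed-point computation along this finite chain gives $\mdist(I_n,S_1)=\mdist(I_n,S_2)=\lambda^n$: the $a$-steps contribute $0$ since $\kdist\big((a,[0,0]),(a,[0,0])\big)=0$, while the single $b$-step contributes $\kdist\big((b,[0,0]),(b,[\pm1,\pm1])\big)=1$ in either case, discounted by $\lambda^n$.

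It remains to see that $\mdist(I_n,S_1\wedge S_2)=\infty$. By the first step, $\llbracket S_1\wedge S_2\rrbracket=\llbracket S_1\rrbracket\cap\llbracket S_2\rrbracket$, and every implementation in this intersection is $b$-free (a $b$-transition would have to carry weight both $-1$ and $1$); hence for each such implementation $J$ the $b$-transition of $I_n$ has no match in $J$ at all, so $\bdist(I_n,J)=\infty$, and therefore $\tdist(I_n,S_1\wedge S_2)=\infty$. Theorem~\ref{th:dtledm} then yields $\mdist(I_n,S_1\wedge S_2)\ge\tdist(I_n,S_1\wedge S_2)=\infty$. (Alternatively one argues directly from $(\ref{th:no-conj}.\ref{en:no-conj:lb})$ that $S_1\wedge S_2\le_m S_1$ and $S_1\wedge S_2\le_m S_2$ force every reachable state of $S_1\wedge S_2$ to be $b$-free, and propagates the infinite distance back along the $a$-chain.) Now fix any $\epsilon\ge0$; for any candidate thresholds $\epsilon_1,\epsilon_2$ choose $n$ large enough that $\lambda^n$ lies below both of them (taking $I_n$ to be a genuine implementation of $S_1$ carrying a weight-$(-1)$ $b$-transition at depth $n$, for which $\mdist(I_n,S_1)=0$ and $\mdist(I_n,S_2)=2\lambda^n$, in the corner case where $\epsilon_1=0$). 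Then $I_n\le_m^{\epsilon_1}S_1$, $I_n\le_m^{\epsilon_2}S_2$ and $S_1\wedge S_2$ is defined, yet $I_n\not\le_m^{\epsilon}S_1\wedge S_2$, contradicting $(\ref{th:no-conj}.\ref{en:no-conj:glb})$.

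The step I expect to be the main obstacle is the first one: properties~$(\ref{th:no-conj}.\ref{en:no-conj:lb})$ and~$(\ref{th:no-conj}.\ref{en:no-conj:def})$ determine $\wedge$ only up to modal equivalence, not up to isomorphism, so one cannot compute $\mdist(I_n,S_1\wedge S_2)$ from the syntax of $\wedge$; the fix is to express everything through $\llbracket\cdot\rrbracket$ (or through the reachable states via $(\ref{th:no-conj}.\ref{en:no-conj:lb})$ alone) and to transport the infinite \emph{thorough} distance to the modal distance using $\mdist\ge\tdist$ from Theorem~\ref{th:dtledm}. The remaining ingredients --- the discounted fixed-point computation yielding $\lambda^n$, and the observation that irreconcilable (disjoint-interval) quantitative constraints make a \may transition vanish from every common lower bound --- are straightforward.
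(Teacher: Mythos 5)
Your proof is correct in substance and hits the same core obstruction as the paper --- a \may transition whose quantitative constraints in $S_1$ and $S_2$ are individually satisfiable but jointly irreconcilable must vanish from any conjunction, producing an infinite distance --- but it gets there by a genuinely different route, and in one respect a more complete one. The paper first constructs an explicit product-and-prune operator $\wedge'$ satisfying $(\ref{th:no-conj}.\ref{en:no-conj:lb})$ and $(\ref{th:no-conj}.\ref{en:no-conj:def})$, argues that any candidate $\wedge$ must be modally equivalent to $\wedge'$, and then exhibits a single depth-one counterexample ($S$ with $a,[1,2]$, $S_1$ with $a,[0,1]$, $S_2$ with $a,[2,3]$) for which $\mdist(S,S_1)=\mdist(S,S_2)=1$ but $\mdist(S,S_1\wedge S_2)=\infty$. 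You instead pin down $\wedge$ only through $\llbracket S_1\wedge S_2\rrbracket=\llbracket S_1\rrbracket\cap\llbracket S_2\rrbracket$ and transport the resulting infinite thorough distance to $\mdist$ via Theorem~\ref{th:dtledm}; this avoids constructing $\wedge'$ altogether and, as you note, is robust against the syntactic freedom left to the operator. Moreover, your depth-parametrized family $I_n$ with $\mdist(I_n,S_i)=\lambda^n$ rules out \emph{every} choice of positive thresholds $\epsilon_1,\epsilon_2$, whereas the paper's single example only non-vacuously refutes thresholds $\ge 1$ (for $\epsilon_1<1$ its hypothesis $S\le_m^{\epsilon_1}S_1$ fails and the implication is vacuous). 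Your fixed-point computations and the $b$-freeness argument for implementations of the conjunction both check out.

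One caveat, which you share with the paper rather than introduce: neither proof can dispose of the degenerate choice $\epsilon_1=\epsilon_2=0$. By Lemma~\ref{le:family}, $\mdist(S,S_i)=0$ is equivalent to $S\le_m S_i$, so with zero thresholds property $(\ref{th:no-conj}.\ref{en:no-conj:glb})$ collapses to property $(\ref{th:no-conj}.\ref{en:no-conj:def})$ and is automatically satisfied. The theorem is therefore only meaningful if the thresholds are implicitly required to be positive (that is, if $(\ref{th:no-conj}.\ref{en:no-conj:glb})$ is read as a continuity requirement under small positive perturbations); under that reading your proof is complete, and your explicit treatment of the mixed case $\epsilon_1=0<\epsilon_2$ via the weight-$(-1)$ variant of $I_n$ is a detail the paper does not address at all.
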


\begin{proof}
  We follow the same strategy as in the proof of
  Theorem~\ref{th:no-dethull}. One can define a partial conjunction
  operator $\wedge'$ defined for WMTS which satisfies
  Properties~$(\ref{th:no-conj}.\ref{en:no-conj:lb})$
  and~$(\ref{th:no-conj}.\ref{en:no-conj:def})$ as follows: For
  deterministic WMTS $S_1$ and $S_2$, $S_1 \wedge' S_2 =
  \rho^\lightning(S_1 \times S_2,(s_1^0,s_2^0),\mmayto,\mmustto)$ where
  the transition relations $\mmayto$ and $\mmustto$ and the set
  $\lightning \subseteq S_1 \times S_2$ of inconsistent states are
  defined by the following rules:
  \begin{gather*}
    \frac{s_1 \mustto[k_1] s_1'\quad s_2 \mayto[k_2] s_2' \quad k_1
      \conj k_2 \text{ defined}}{(s,t)\mustto[k_1 \conj k_2]
      (s_1',s_2')}%
    \qquad \frac{s_1 \mayto[ k_1 ] s_1'\quad s_2 \mustto[k_2 ] s_2'
      \quad k_1 \conj k_2 \text{
        defined}}{(s_1,s_2)\mustto[ k_1 \conj k_2](s_1',s_2')} \\
    \frac{s_1 \mayto[ k_1 ] s_1'\quad s_2 \mayto[ k_2] s_2' \quad k_1
      \conj k_2 \text{ defined}}{(s_1,s_2)\mayto[k_1 \conj
      k_2](s_1',s_2')} \\
    \frac{s_1 \mustto[k_1] \quad \big(k_1\conj k_2 \text{ undefined for
        any $k_2$ such that } s_2\mayto[k_2]\big)} {(s_1,s_2)\in
      \lightning} \\
    \frac{s_2\mustto[k_2] \quad \big(k_1\conj k_2 \text{ undefined for
        any $k_1$ such that } s_1\mayto[k_1] \big)}{(s_1,s_2) \in
      \lightning}
  \end{gather*}
  
  Using these properties, one can see that for all deterministic WMTS
  $S_1$ and $S_2$, $S_1\wedge S_2\le_m S_1\wedge' S_2$ and $S_1\wedge'
  S_2\le_m S_1\wedge S_2$.  The WMTS depicted in
  Figure~\ref{fi:conjunct-counterex} then show that
  Property~$(\ref{th:no-conj}.\ref{en:no-conj:glb})$ cannot hold: here,
  $\mdist( S, S_1)= \mdist( S, S_2)= 1$, but $\mdist( S, S_1\wedge S_2)=
  \infty$. \qed
\end{proof}

\begin{figure}[tpb]
  \centering
  \subfigure[$S$]{
    \begin{tikzpicture}[->,>=stealth',shorten >=1pt,auto,node
      distance=2.0cm,initial text=,scale=0.8,transform shape]
      \tikzstyle{every node}=[font=\small] \tikzstyle{every
        state}=[fill=white,shape=circle,inner sep=.5mm,minimum size=6mm]
      \node[initial,state] (s) at (0,0) {$s$};
      \node[state] (t) at (2,0) {$t$};
      \path (s) edge [densely dashed] node [above] {$a,[ 1, 2]$} (t);
    \end{tikzpicture}}
  \qquad
  \subfigure[$S_1$]{
    \begin{tikzpicture}[->,>=stealth',shorten >=1pt,auto,node
      distance=2.0cm,initial text=,scale=0.8,transform shape]
      \tikzstyle{every node}=[font=\small] \tikzstyle{every
        state}=[fill=white,shape=circle,inner sep=.5mm,minimum size=6mm]
      \node[initial,state] (s) at (0,0) {$s_1$};
      \node[state] (t) at (2,0) {$t_1$};
      \path (s) edge [densely dashed] node [above] {$a,[ 0, 1]$} (t);
    \end{tikzpicture}}
  \\
  \subfigure[$S_2$]{
    \begin{tikzpicture}[->,>=stealth',shorten >=1pt,auto,node
      distance=2.0cm,initial text=,scale=0.8,transform shape]
      \tikzstyle{every node}=[font=\small] \tikzstyle{every
        state}=[fill=white,shape=circle,inner sep=.5mm,minimum size=6mm]
      \node[initial,state] (s) at (0,0) {$s_2$};
      \node[state] (t) at (2,0) {$t_2$};
      \path (s) edge [densely dashed] node [above] {$a,[ 2, 3]$} (t);
    \end{tikzpicture}}
  \qquad
  \subfigure[$S_1\wedge S_2$]{
    \begin{tikzpicture}[->,>=stealth',shorten >=1pt,auto,node
      distance=2.0cm,initial text=,scale=0.8,transform shape]
      \tikzstyle{every node}=[font=\small] \tikzstyle{every
        state}=[fill=white,shape=circle,inner sep=.5mm,minimum size=6mm]
      \node[initial,state,shape=rectangle,rounded corners] (s) at
      (0,0) {$( s_1, s_2)$};
    \end{tikzpicture}}
  \caption{Counter-example for Theorem~\ref{th:no-conj}: $\mdist( S,
    S_1)= \mdist( S, S_2)= 1$, but $\mdist( S, S_1\wedge S_2)= \infty$.}
  \label{fi:conjunct-counterex}
\end{figure}

The counterexamples used in the proofs of Theorems~\ref{th:no-dethull}
and~\ref{th:no-conj} are quite general and apply to a large class of
distances, rather than only to the accumulating distance discussed in
this paper.  Hence it can be argued that what we have exposed here is a
fundamental limitation of any quantitative approach to modal
specifications.

\section{Structural Composition and Quotient}
\label{se:parcomp}

In this section we show that in our quantitative setting, notions of
structural composition and quotient can be defined which obey the
properties expected of such operations.  In particular, structural
composition satisfies independent implementability~\cite{AlfaroH95},
hence the refinement distance between structural composites can be
bounded by the distances between their respective components.

First we define partial synchronization operators $\oplus$ and $\ominus$
on specification labels which will be used for synchronizing
transitions.  We let $( a_1, I_1)\oplus( a_2, I_2)$ and $( a_1,
I_1)\ominus( a_2, I_2)$ be undefined if $a_1\ne a_2 $, and otherwise
\begin{align*}
  \big( a,[ x_1, y_1]\big)\oplus\big( a,[ x_2, y_2]\big) &= \big( a,[
  x_1+ x_2, y_1+ y_2]\big), \\
  \big( a,[ x_1, y_1]\big)\ominus\big( a,[ x_2, y_2]\big) &= \left\{
    \begin{array}{ll}
      \text{undefined} &\quad\text{if } x_1- x_2> y_1- y_2, \\
      \big( a,[ x_1- x_2, y_1- y_2]\big) &\quad\text{if } x_1- x_2\le
      y_1- y_2.
    \end{array}
  \right.
\end{align*}
Note that we use CSP-style synchronization, but other types of
synchronization can easily be defined.  Also, defining $\oplus$ to add
intervals (and $\ominus$ to subtract them) is only one particular
choice; depending on the application, one can also \eg~let $\oplus$ be
intersection of intervals or some other operation.  It is not difficult
to see that these alternative synchronization operators would lead to
properties similar to those we show here.

\begin{definition}
  \label{de:comp-quot}
  Let $S_1$ and $S_2$ be WMTS.  The \emph{structural composition} of $S_1$
  and $S_2$ is $S_1\| S_2=\big( S_1\times S_2,( s_1^0, s_2^0), \K,
  \mmayto, \mmustto\big)$ with transitions given as follows:
    \begin{gather*}
      \dfrac{ s_1\mayto[ k_1]_1 t_1 \quad s_2\mayto[ k_2]_2 t_2 \quad
        k_1\oplus k_2 \text{ def.}}{( s_1, s_2)\mayto[ k_1\oplus
        k_2]( t_1, t_2)}%
      \qquad \dfrac{ s_1\mustto[ k_1]_1 t_1 \quad s_2\mustto[ k_2]_2 t_2
        \quad k_1\oplus k_2 \text{ def.}} {( s_1, s_2)\mustto[
        k_1\oplus k_2]( t_1, t_2)}
    \end{gather*}
    The \emph{quotient} of $S_1$ by $S_2$ is $S_1\bbslash S_2=
    \rho^{\lightning}\big( S_1\times S_2\cup\{ u\},( s_1^0, s_2^0), \K,
    \mmayto, \mmustto\big)$ with transitions and the set of inconsistent
    states given as follows:
    \begin{gather*}
    \dfrac{%
      s_1\mayto[ k_1]_1 t_1 \quad s_2\mayto[ k_2]_2 t_2 \quad
      k_1\ominus k_2 \text{ def.}}{%
      ( s_1, s_2)\mayto[ k_1\ominus k_2]( t_1, t_2)} \qquad
    \dfrac{%
      s_1\mustto[ k_1]_1 t_1 \quad s_2\mustto[ k_2]_2 t_2 \quad
      k_1\ominus k_2 \text{ def.}}{%
      ( s_1, s_2)\mustto[ k_1\ominus k_2]( t_1, t_2)} \\
    \dfrac{%
      s_1\mustto[ k_1]_1 t_1 \quad \forall s_2\mustto[ k_2]_2 t_2:
      k_1\ominus k_2 \text{ undef.}}{%
      ( s_1, s_2) \in \lightning} \\
    \dfrac{%
      k\in \K \quad \forall s_2\mayto[ k_2]_2 t_2: k\oplus k_2
      \text{ undef.}}{%
      ( s_1, s_2)\mayto[ k] u} \qquad \dfrac{%
      k\in \K}{%
      u\mayto[ k] u}
  \end{gather*}
\end{definition}

Note that during the quotient $S_1\bbslash S_2$ inconsistent states can
arise which are then recursively removed using the pruning operator
$\rho$, see Definition~\ref{de:prune}.  After a technical lemma, the
next theorem shows that structural composition is well-behaved with
respect to modal refinement distance in the sense that the distance
between the composed systems is bounded by the distances of the
individual systems.  Note also the special case in the theorem of
$S_1\le_m S_2$ and $S_3\le_m S_4$ implying $S_1\| S_3\le_m S_2\| S_4$.

\begin{lemma}
  \label{le:synch-cont}
  For $k_1, k_2, k_3, k_4\in \K$ with $k_1\oplus k_3$ and $k_2\oplus
  k_4$ defined, we have $\kdist( k_1\oplus k_3, k_2\oplus k_4)\le
  \kdist( k_1, k_2)+ \kdist( k_3, k_4)$.
\end{lemma}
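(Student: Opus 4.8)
The plan is to reduce the statement to an elementary inequality about maxima of sums of (extended) reals, using the closed-form descriptions of $\kdist$ and $\oplus$ given just above the lemma. First I would dispose of the degenerate case. Since $k_1\oplus k_3$ and $k_2\oplus k_4$ are assumed to be defined, $k_1$ and $k_3$ carry a common action, say $a$, and $k_2$ and $k_4$ carry a common action, say $b$. If $a\ne b$ then $\kdist(k_1,k_2)=\infty$, so the right-hand side is $\infty$ and there is nothing to prove; hence it suffices to treat the case $a=b$.

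So write $k_1=(a,[x_1,y_1])$, $k_3=(a,[x_3,y_3])$, $k_2=(a,[x_2,y_2])$, $k_4=(a,[x_4,y_4])$. By definition of $\oplus$ we have $k_1\oplus k_3=(a,[x_1+x_3,y_1+y_3])$ and $k_2\oplus k_4=(a,[x_2+x_4,y_2+y_4])$, and the elementary formula for $\kdist$ gives
\[
  \kdist(k_1\oplus k_3,\, k_2\oplus k_4)=\max\bigl((x_2+x_4)-(x_1+x_3),\ (y_1+y_3)-(y_2+y_4),\ 0\bigr),
\]
while $\kdist(k_1,k_2)=\max(x_2-x_1,\,y_1-y_2,\,0)$ and $\kdist(k_3,k_4)=\max(x_4-x_3,\,y_3-y_4,\,0)$. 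Abbreviating the lower-endpoint differences by $\alpha_1=x_2-x_1$, $\alpha_2=x_4-x_3$ and the upper-endpoint differences by $\beta_1=y_1-y_2$, $\beta_2=y_3-y_4$, the claim becomes
\[
  \max(\alpha_1+\alpha_2,\ \beta_1+\beta_2,\ 0)\ \le\ \max(\alpha_1,\beta_1,0)+\max(\alpha_2,\beta_2,0),
\]
which I would establish by bounding each of the three arguments of the left-hand maximum separately: $\alpha_1+\alpha_2$ is bounded since $\alpha_i\le\max(\alpha_i,\beta_i,0)$; $\beta_1+\beta_2$ is bounded symmetrically; and $0$ is bounded since both summands on the right are nonnegative.

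The argument is essentially routine, and I do not expect a genuine obstacle. The only point deserving a word of care is the bookkeeping when interval endpoints take the values $\pm\infty$: lower endpoints range over $\Int\cup\{-\infty\}$ and upper endpoints over $\Int\cup\{\infty\}$, so each difference lies in $\Real\cup\{\pm\infty\}$, but once the outer $\max$ with $0$ is taken the relevant quantities sit in $\Realnn\cup\{\infty\}$ and the three displayed bounds remain valid with the usual conventions $r+\infty=\infty$. This is the step I would double-check, but it introduces no real difficulty.
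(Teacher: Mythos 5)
Your proposal is correct and follows essentially the same route as the paper: reduce to the elementary formula for $\kdist$ and verify $\max(\alpha_1+\alpha_2,\beta_1+\beta_2,0)\le\max(\alpha_1,\beta_1,0)+\max(\alpha_2,\beta_2,0)$ by bounding each argument of the left-hand maximum. Your explicit handling of the mismatched-action case and of the $\pm\infty$ endpoints is slightly more careful than the paper's proof, which silently assumes a common action $a$ for all four labels, but the core argument is identical.
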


\begin{proof}
  Let 
  $k_i=\big( a,[ x_i, y_i]\big)$ for all $i$.  We have
  \begin{align*}
    \kdist( k_1, k_2)+ \kdist( k_3, k_4) &= \max( x_2 - x_1, y_1 -
    y_2,0)+ \max( x_4 - x_3, y_3 - y_4,0) \\
    &\ge \max\big(( x_2 - x_1)+( x_4 - x_3),(
    y_1 - y_2)+( y_3 - y_4), 0\big)\\
    &= \max\big(( x_2+ x_4)-(
    x_1+ x_3),( y_1+ y_3)-( y_2+ y_4),0\big) \\
    &= \kdist( k_1\oplus k_3, k_2\oplus k_4). \quad \qed
  \end{align*}
\end{proof}

\begin{theorem}[Independent implementability]
  \label{th:indepimp}
  For WMTS $S_1$, $S_2$, $S_3$, $S_4$ we have $\mdist\big( S_1\| S_3,
  S_2\| S_4\big)\le \mdist( S_1, S_2)+ \mdist( S_3, S_4)$.
\end{theorem}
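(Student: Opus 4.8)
The plan is to build a modal refinement family for $S_1 \| S_3 \le_m^{\epsilon} S_2 \| S_4$ with $\epsilon = \mdist(S_1,S_2) + \mdist(S_3,S_4)$, invoking Lemma~\ref{le:family}. Let $R = \{R_\delta\}$ be a modal refinement family witnessing $\mdist(S_1,S_2)$ and $R' = \{R'_\delta\}$ one witnessing $\mdist(S_3,S_4)$ (using Lemma~\ref{le:family} in the forward direction). I would then define a family $Q = \{Q_\epsilon\}$ on $(S_1 \times S_3) \times (S_2 \times S_4)$ by declaring $\big((s_1,s_3),(s_2,s_4)\big) \in Q_\epsilon$ whenever there exist $\delta_1, \delta_3 \ge 0$ with $\delta_1 + \delta_3 \le \epsilon$, $(s_1,s_2) \in R_{\delta_1}$, and $(s_3,s_4) \in R'_{\delta_3}$. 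The initial pair lies in $Q_{\mdist(S_1,S_2)+\mdist(S_3,S_4)}$ by construction, so it remains to check that $Q$ satisfies the two transition-matching conditions of a modal refinement family.

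For the may-condition, suppose $(s_1,s_3) \mayto[k] (t_1,t_3)$ in $S_1\|S_3$; by definition of structural composition this comes from $s_1 \mayto[k_1]_1 t_1$, $s_3 \mayto[k_3]_3 t_3$ with $k = k_1 \oplus k_3$ defined. Using the may-clauses of $R$ and $R'$, I get matches $s_2 \mayto[k_2]_2 t_2$ and $s_4 \mayto[k_4]_4 t_4$ with $\kdist(k_1,k_2) \le \delta_1$, $\kdist(k_3,k_4) \le \delta_3$, and $(t_1,t_2) \in R_{\delta_1'}$, $(t_3,t_4) \in R'_{\delta_3'}$ for some $\delta_1' \le \lambda^{-1}(\delta_1 - \kdist(k_1,k_2))$ and $\delta_3' \le \lambda^{-1}(\delta_3 - \kdist(k_3,k_4))$. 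If $k_2 \oplus k_4$ is defined (which it is, since both labels share action $a$ and $\oplus$ only adds intervals, so it is always defined when actions agree), we get a match $(s_2,s_4) \mayto[k_2\oplus k_4] (t_2,t_4)$. By Lemma~\ref{le:synch-cont}, $\kdist(k_1\oplus k_3, k_2 \oplus k_4) \le \kdist(k_1,k_2) + \kdist(k_3,k_4) \le \delta_1 + \delta_3 \le \epsilon$. Moreover $\delta_1' + \delta_3' \le \lambda^{-1}\big((\delta_1+\delta_3) - (\kdist(k_1,k_2)+\kdist(k_3,k_4))\big) \le \lambda^{-1}\big(\epsilon - \kdist(k_1\oplus k_3, k_2\oplus k_4)\big)$, again by Lemma~\ref{le:synch-cont}, so $(t_1,t_3,t_2,t_4) \in Q_{\delta_1'+\delta_3'}$ with the required bound. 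The must-condition is entirely symmetric: a must-transition $(s_2,s_4) \mustto[k] (t_2,t_4)$ decomposes into must-transitions of $S_2$ and $S_4$, which are matched backwards by must-transitions of $S_1$ and $S_3$ through the must-clauses of $R$ and $R'$, and the same arithmetic applies.

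The only subtle point, and the one I expect to need the most care, is the bookkeeping of the discounting budgets: one must split the global budget $\epsilon$ multiplicatively through $\lambda^{-1}$ in a way that is consistent across \emph{both} component families simultaneously, and the clean identity here is precisely that the additive superadditivity of $\kdist$ under $\oplus$ from Lemma~\ref{le:synch-cont} lets the two budget inequalities $\delta_i' \le \lambda^{-1}(\delta_i - \kdist(k_i,k_i'))$ be combined into a single inequality of the required form. A secondary technical point is confirming that $\oplus$ is indeed defined on the matched labels whenever it is defined on the original ones — which holds because $\oplus$ of two labels with equal action is always defined (interval addition never fails), so no definedness obstruction arises for structural composition (unlike for quotient). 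Once these observations are in place, $Q$ is a modal refinement family, and Lemma~\ref{le:family} yields $S_1\|S_3 \le_m^{\mdist(S_1,S_2)+\mdist(S_3,S_4)} S_2\|S_4$, i.e.\ $\mdist(S_1\|S_3, S_2\|S_4) \le \mdist(S_1,S_2) + \mdist(S_3,S_4)$.
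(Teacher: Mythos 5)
Your proposal is correct and follows essentially the same route as the paper: both construct the product refinement family $Q_\epsilon$ (the paper's $R_\epsilon$) from the two witnessing families with the additive budget split $\delta_1+\delta_3\le\epsilon$, and both rely on Lemma~\ref{le:synch-cont} to combine the label-distance and discounting-budget inequalities. The only cosmetic difference is that the paper dispatches the $\mdist(S_1,S_2)=\infty$ or $\mdist(S_3,S_4)=\infty$ case explicitly up front, which your argument handles implicitly since the bound is then vacuous.
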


\begin{proof}
  If $\mdist( S_1, S_2)= \infty$ or $\mdist( S_3, S_4)= \infty$, we have
  nothing to prove.  Otherwise, let $R^1=\{ R^1_\epsilon\subseteq
  S_1\times S_2\mid \epsilon\ge 0\}$, $R^2=\{ R^2_\epsilon\subseteq
  S_3\times S_4\mid \epsilon\ge 0\}$ be witnesses for $\mdist( S_1,
  S_2)$ and $\mdist( S_3, S_4)$, respectively; hence $( s_1^0,
  s_2^0)\in R^1_{ \mdist( S_1, S_2)}\in R^1$ and $( s_3^0, s_4^0)\in
  R^2_{ \mdist( S_3, S_4)}\in R^2$.  Define
  \begin{multline*}
    R_\epsilon=\big\{\big(( s_1, s_3),( s_2, s_4)\big)\in S_1\times
    S_3\times S_2\times S_4\bigmid \\
    ( s_1, s_2)\in R^1_{ \epsilon_1}\in R^1,( s_3, s_4)\in R^2_{
      \epsilon_2}\in R^2, \epsilon_1+ \epsilon_2\le\epsilon \big\}
  \end{multline*}
  for all $\epsilon\ge 0$ and let $R=\{ R_\epsilon\mid \epsilon\ge
  0\}$.  We show that $R$ witnesses $\mdist\big( S_1\| S_3,
  S_2\| S_4\big)\le \mdist( S_1, S_2)+ \mdist( S_3, S_4).$

  We have $\big(( s_1^0, s_3^0),( s_2^0, s_4^0)\big)\in R_{ \mdist(
    S_1, S_2)+ \mdist( S_3, S_4)}\in R$.  Now let $$\big(( s_1, s_3),(
  s_2, s_4)\big)\in R_\epsilon\in R$$ for some $\epsilon$, then $( s_1,
  s_2)\in R^1_{ \epsilon_1}\in R^1$ and $( s_3, s_4)\in R^2_{
    \epsilon_2}\in R^2$ for some $\epsilon_1+ \epsilon_2\le \epsilon$.

  Assume $( s_1, s_3)\mayto[ k_1\oplus k_3]( t_1, t_3)$, then
  $s_1\mayto[ k_1]_1 t_1$ and $s_3\mayto[ k_3]_3 t_3$.  By $( s_1,
  s_2)\in R^1_{ \epsilon_1}\in R^1$, we have $s_2\mayto[ k_2]_2 t_2$
  with $\kdist( k_1, k_2)\le \epsilon_1$ and $( t_1, t_2)\in R^1_{
    \epsilon_1'}\in R^1$ for some $\epsilon_1'\le \lambda^{ -1}\big(
  \epsilon_1- \kdist( k_1, k_2)\big)$; similarly, $s_4\mayto[ k_4]_4 t_4$
  with $\kdist( k_3, k_4)\le \epsilon_2$ and $( t_3, t_4)\in R^2_{
    \epsilon_2'}\in R^2$ for some $\epsilon_2'\le \lambda^{ -1}\big(
  \epsilon_2- \kdist( k_3, k_4)\big)$.  Let $\epsilon'= \epsilon_1'+
  \epsilon_2'$, then the sum $k_2\oplus k_4$ is defined, and
  \begin{align*}
    \epsilon' &\le \lambda^{ -1}\big( \epsilon_1+ \epsilon_2-( \kdist( k_1,
    k_2)+ \kdist( k_3, k_4))\big) \\
    &\le \lambda^{ -1}\big( \epsilon- \kdist( k_1\oplus k_3, k_2\oplus
    k_4)\big)
  \end{align*}
  by Lemma~\ref{le:synch-cont}.  We have $( s_2, s_4)\mayto[ k_2\oplus
  k_4]( t_2, t_4)$, $\kdist( k_1\oplus k_3, k_2\oplus k_4)\le
  \epsilon_1+ \epsilon_2\le \epsilon$ again by
  Lemma~\ref{le:synch-cont}, and $\big(( t_1, t_3),( t_2, t_4)\big)\in
  R_{ \epsilon'}\in R$.  The reverse direction, starting with a
  transition $( s_2, s_4)\mustto[ k_2\oplus k_4]( t_2, t_4)$, is
  similar. \qed
\end{proof}

Again after a technical lemma, the next theorem expresses the fact that
quotient is a partial inverse to structural composition.  Intuitively,
the theorem shows that the quotient $S_1\bbslash S_2$ is maximal among
all WMTS $S_3$ with respect to any distance $S_2\| S_3\le_m^\epsilon
S_1$; note the special case of $S_3\le_m S_1\bbslash S_2$ if and only if
$S_2\| S_3\le_m S_1$.

\begin{lemma}
  \label{le:plusminus}
  If $k_1, k_2, k_3\in \K$ are such that $k_1\ominus k_2$ and $k_2\oplus
  k_3$ are defined, then $\kdist( k_3, k_1\ominus k_2)= \kdist(
  k_2\oplus k_3, k_1)$.
\end{lemma}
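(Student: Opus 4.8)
The plan is to reduce the claimed identity to a one-line computation on interval endpoints, in the same spirit as the proof of Lemma~\ref{le:synch-cont}. First I would observe that the hypotheses force all three labels to carry the same action: $k_1 \ominus k_2$ being defined requires the actions of $k_1$ and $k_2$ to agree, and $k_2 \oplus k_3$ being defined requires those of $k_2$ and $k_3$ to agree, so I may write $k_i = (a,[x_i,y_i])$ for a common $a \in \Sigma$ and $i \in \{1,2,3\}$. In particular both $\kdist(k_3, k_1 \ominus k_2)$ and $\kdist(k_2 \oplus k_3, k_1)$ fall into the matching-action case of $\kdist$, hence are both given by the elementary formula $\kdist\big((a,[x,y]),(a,[x',y'])\big) = \max(x'-x,\, y-y',\, 0)$ established earlier.

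Next I would simply expand both sides using the definitions of $\oplus$ and $\ominus$. Since $k_1 \ominus k_2 = (a,[x_1-x_2,\, y_1-y_2])$ and $k_2 \oplus k_3 = (a,[x_2+x_3,\, y_2+y_3])$, the formula for $\kdist$ gives $\kdist(k_3, k_1 \ominus k_2) = \max\big((x_1-x_2)-x_3,\; y_3-(y_1-y_2),\; 0\big)$ and $\kdist(k_2 \oplus k_3, k_1) = \max\big(x_1-(x_2+x_3),\; (y_2+y_3)-y_1,\; 0\big)$. Rearranging the summands in each argument of the $\max$ shows that both right-hand sides are literally $\max(x_1-x_2-x_3,\; y_2+y_3-y_1,\; 0)$, which is exactly the claimed equality.

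I do not expect a real obstacle here; the only point that needs a word of care is the behaviour of the arithmetic when some interval bound is $-\infty$ or $+\infty$. One handles this with the usual conventions for extended-integer addition and subtraction — the same conventions already implicit in Definition~\ref{de:comp-quot} for $\oplus$ and $\ominus$ — under which the termwise rearrangement above remains valid, and the mismatched-action case is excluded by the first step. This bookkeeping is routine and I would mention it only in passing; the substance of the lemma is the endpoint computation of the second paragraph.
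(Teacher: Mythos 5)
Your proposal is correct and matches the paper's own argument: both reduce the identity to writing $k_i=(a,[x_i,y_i])$ and expanding $\kdist(k_3,k_1\ominus k_2)$ and $\kdist(k_2\oplus k_3,k_1)$ via the elementary formula for $\kdist$, after which the two maxima are seen to coincide termwise (the paper merely spells this out by a case analysis on which argument of the max is attained). Your side remark on extended-integer arithmetic is a harmless addition not present in the paper.
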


\begin{proof}
  We can write $k_i=\big( a,[ x_i, y_i]\big)$ for some $a\in \Sigma$.
  Then
  \begin{align*}
    \kdist( k_3, k_1\ominus k_2) &= \max\big(( x_1- x_2) - x_3,
    y_3 - ( y_1- y_2), 0\big) \\
    &= \left\{
      \begin{array}{cl}
        x_1- x_2- x_3 &\quad\text{if}\quad
        \begin{aligned}[t]
          x_1- x_2- x_3 &\ge 0, \\
          x_1- x_2- x_3 &\ge y_3- y_1+ y_2;
        \end{aligned} \\
        y_3- y_1+ y_2 &\quad\text{if}\quad
        \begin{aligned}[t]
          y_3- y_1+ y_2 &\ge 0, \\
          y_3- y_1+ y_2 &\ge x_1- x_2- x_3;
        \end{aligned} \\
        0 &\quad\text{if}\quad
        \begin{aligned}[t]
          x_1- x_2- x_3 &\le 0, \\
          y_3- y_1+ y_2 &\le 0.
        \end{aligned}
      \end{array}
    \right.
  \end{align*}
  Similarly,
  \begin{align*}
    \kdist( k_2\oplus k_3, k_1) &= \max\big( x_1 - ( x_2+ x_3),( y_2+
    y_3)- y_1, 0\big) \\
    &= \left\{
      \begin{array}{cl}
        x_1- x_2- x_3 &\quad\text{if}\quad
        \begin{aligned}[t]
          x_1- x_2- x_3 &\ge 0, \\
          x_1- x_2- x_3 &\ge y_2+ y_3- y_1 ;
        \end{aligned} \\
        y_2+ y_3- y_1 &\quad\text{if}\quad
        \begin{aligned}[t]
          y_2+ y_3- y_1 &\ge 0, \\
          y_2+ y_3- y_1 &\ge x_1- x_2- x_3;
        \end{aligned} \\
        0 &\quad\text{if}\quad
        \begin{aligned}[t]
          x_1- x_2- x_3 &\le 0, \\
          y_2+ y_3- y_1 &\le 0. \quad \qed
        \end{aligned}
      \end{array}
    \right.
  \end{align*}
\end{proof}

\begin{theorem}[Soundness and maximality of quotient]
  \label{th:soundmaxquot}
  Let $S_1$, $S_2$ and $S_3$ be locally consistent WMTS such that
  $S_2$ is deterministic and $S_1\bbslash S_2$ is defined.  If
  $\mdist( S_3, S_1\bbslash S_2)< \infty$, then $\mdist( S_3,
  S_1\bbslash S_2)= \mdist( S_2\| S_3, S_1)$.
\end{theorem}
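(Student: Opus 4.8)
The plan is to prove the two inequalities $\mdist(S_2\| S_3,S_1)\le\mdist(S_3,S_1\bbslash S_2)$ (``soundness'') and $\mdist(S_3,S_1\bbslash S_2)\le\mdist(S_2\| S_3,S_1)$ (``maximality'') separately, and in each direction to take a modal refinement family witnessing one side (Lemma~\ref{le:family}) and massage it into a family for the other side. The technical glue is Lemma~\ref{le:plusminus}, which converts a distance estimate $\kdist(k_1,k_2\oplus k_3)$ on a composed label into the estimate $\kdist(k_1\ominus k_2,k_3)$ on the corresponding quotient label, and vice versa; determinism of $S_2$ is what makes this conversion unambiguous, since it guarantees that each $S_2$-state has at most one may-transition per action, so there is no choice about which $S_2$-component a composed or quotiented transition came from.

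For soundness I would start from a family $R=\{R_\epsilon\subseteq S_3\times(S_1\bbslash S_2)\}$ with $(s_3^0,(s_1^0,s_2^0))\in R_{\mdist(S_3,S_1\bbslash S_2)}$ --- which exists since that distance is finite --- and put $R'_\epsilon=\{((s_2,s_3),s_1)\mid(s_3,(s_1,s_2))\in R_\epsilon\}$. To verify that $R'$ is a modal refinement family I would trace transitions: a may-transition $(s_2,s_3)\mayto[k_2\oplus k_3](t_2,t_3)$ comes from $s_2\mayto[k_2]t_2$ and $s_3\mayto[k_3]t_3$; the latter is matched in the quotient, and since $s_2$ \emph{does} have a may-transition on this action, the matching quotient transition cannot be one into the universal state $u$, so it has the form $(s_1,s_2)\mayto[k_1\ominus k_2'](t_1,t_2')$ with $s_1\mayto[k_1]t_1$; determinism of $S_2$ forces $k_2'=k_2$, $t_2'=t_2$, and $s_1\mayto[k_1]t_1$ is then the needed match in $S_1$, with the distance bound coming from Lemma~\ref{le:plusminus}. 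Dually, a must-transition $s_1\mustto[k_1]t_1$ of $S_1$: since $(s_1,s_2)$ survived pruning it is not in $\lightning$, so there is a partner $s_2\mustto[k_2]t_2$ with $k_1\ominus k_2$ defined, and the quotient transition $(s_1,s_2)\mustto[k_1\ominus k_2](t_1,t_2)$ survives pruning too (because $(s_1,s_2)\notin\pre^*(\lightning)$ forces its must-successors to survive); this is matched in $R$ by some $s_3\mustto[k_3]t_3$, and $s_2\mustto[k_2]t_2$ together with it yields $(s_2,s_3)\mustto[k_2\oplus k_3](t_2,t_3)$, again with the distance controlled by Lemma~\ref{le:plusminus}.

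For maximality I would start from a family $R=\{R_\epsilon\subseteq(S_2\| S_3)\times S_1\}$ witnessing $\mdist(S_2\| S_3,S_1)$ (if this is $\infty$ there is nothing to prove) and build $R'_\epsilon=\{(s_3,(s_1,s_2))\mid((s_2,s_3),s_1)\in R_\epsilon\}\cup\{(s_3,u)\mid s_3\in S_3\}$, the second part being harmless because $\mdist(s_3,u)=0$ for all $s_3$ ($u$ has all may- and no must-transitions). The may-clause at a pair $(s_3,(s_1,s_2))$ splits: if $s_2$ has no may-transition on the action of a given $s_3\mayto[k_3]t_3$, then $(s_1,s_2)\mayto[k_3]u$ matches it at cost $0$ and $(t_3,u)\in R'$; otherwise the unique $s_2\mayto[k_2]t_2$ gives $(s_2,s_3)\mayto[k_2\oplus k_3](t_2,t_3)$, $R$ supplies $s_1\mayto[k_1]t_1$ with $((t_2,t_3),t_1)\in R$, and one needs $k_1\ominus k_2$ to be defined --- and the target $(t_1,t_2)$ not to have been pruned --- so that $(s_1,s_2)\mayto[k_1\ominus k_2](t_1,t_2)$ is an actual transition of $S_1\bbslash S_2$; this is where $\mdist(S_3,S_1\bbslash S_2)<\infty$ is essential, as it forces the quotient to be rich enough along the configurations the construction reaches, after which Lemma~\ref{le:plusminus} turns the bound on $\kdist(k_1,k_2\oplus k_3)$ into the required bound on $\kdist(k_1\ominus k_2,k_3)$. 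The must-clause is symmetric, producing matching must-transitions of $S_3$ from quotient must-transitions, using local consistency and determinism of $S_2$ as in the soundness argument.

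The main obstacle is precisely the bookkeeping in the maximality direction: one must ensure that every quotient transition the construction wants to use really exists --- that $\ominus$ is defined on the relevant labels, that the target is not in $\pre^*(\lightning)$, and that transitions into $u$ are used exactly when $S_2$ offers no synchronisation --- and it is here that all three hypotheses (finiteness of $\mdist(S_3,S_1\bbslash S_2)$, local consistency, determinism of $S_2$) have to be invoked together. Once the existence of the needed transitions is secured, the numeric estimates are a routine application of Lemma~\ref{le:plusminus}, in the same spirit as Lemma~\ref{le:synch-cont} was used in the proof of Theorem~\ref{th:indepimp}.
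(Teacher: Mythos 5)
Your proposal follows essentially the same route as the paper's proof: both inequalities are obtained by transposing a witnessing modal refinement family between $S_3\times(S_1\bbslash S_2)$ and $(S_2\|S_3)\times S_1$ (adding the pairs $(s_3,u)$ in the maximality direction), with Lemma~\ref{le:plusminus} supplying the conversion between $\kdist(k_2\oplus k_3,k_1)$ and $\kdist(k_3,k_1\ominus k_2)$ and determinism of $S_2$ pinning down the $S_2$-component of each composed or quotiented transition. The step you single out as the main obstacle --- that $k_1\ominus k_2$ is defined and $(t_1,t_2)$ survives pruning in the maximality direction, which is where the hypothesis $\mdist(S_3,S_1\bbslash S_2)<\infty$ must enter --- is precisely the step the paper itself passes over with a bare assertion, so your sketch is at the same level of detail as the published argument and identifies its one delicate point correctly.
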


\begin{proof}
  To avoid confusion, we write $\mmayto_\bbslash$ and
  $\mmustto_\bbslash$ for transitions in $S_1\bbslash S_2$ and
  $\mmayto_\|$ and $\mmustto_\|$ for transitions in $S_2\| S_3$.  The
  inequality $\mdist( S_3, S_1\bbslash S_2)\ge \mdist( S_2\| S_3, S_1)$
  is trivial if $\mdist( S_2\| S_3, S_1)= \infty$, so assume the
  opposite and let $R^1=\big\{ R^1_\epsilon\subseteq S_3\times\big(
  S_1\times S_2\cup\{ u\}\big)\bigmid \epsilon\ge 0\big\}$ be a
  witness for $\mdist( S_3, S_1\bbslash S_2)$.  Define
  $R^2_\epsilon=\big\{\big(( s_2, s_3), s_1\big)\bigmid\big(s_3,( s_1,
  s_2)\big)\in R^1_\epsilon\big\}\subseteq S_2\times S_3\times S_1$
  for all $\epsilon\ge 0$, and let $R^2=\{ R^2_\epsilon\mid
  \epsilon\ge 0\}$.  Certainly $\big(( s_2^0, s_3^0), s_1^0\big)\in
  R^2_{ \mdist( S_3, S_1\bbslash S_2)}\in R^2$, so let now $\big(( s_2,
  s_3), s_1\big)\in R^2_\epsilon\in R^2$ for some $\epsilon\ge 0$.

  Assume $( s_2, s_3)\mayto[ k_2\oplus k_3]_\|( t_2, t_3)$, then also
  $s_2\mayto[ k_2]_2 t_2$ and $s_3\mayto[ k_3]_3 t_3$.  We have $\big(
  s_3,( s_1, s_2)\big)\in R^1_\epsilon$, so there is $( s_1, s_2)\mayto[
  k_1\ominus k_2']_\bbslash( t_1, t_2')$ for which $\kdist( k_3,
  k_1\ominus k_2')= \kdist( k_2'\oplus k_3, k_1)\le \epsilon$ and such
  that $\big( t_3,( t_1, t_2')\big)\in R^1_{ \epsilon'}\in R^1$, hence
  $\big(( t_2', t_3), t_1\big)\in R^2_{ \epsilon'}\in R^2$, for some
  $\epsilon'\le \lambda^{ -1}\big( \epsilon- \kdist( k_2'\oplus k_3,
  k_1)\big)$.  By definition of quotient we must have $s_1\mayto[ k_1]_1
  t_1$ and $s_2\mayto[ k_2']_2 t_2'$, and by determinism of $S_2$,
  $k_2'= k_2$ and $t_2'= t_2$.

  Assume $s_1\mustto[ k_1]_1 t_1$.  We must have a transition
  $s_2\mustto[ k_2]_2 t_2$ for which $k_1\ominus k_2$ is defined.  Hence
  $( s_1, s_2)\mustto[ k_1\ominus k_2]_\bbslash( t_1, t_2)$.  This in
  turn implies that there is $s_3\mustto[ k_3]_3 t_3$ for which $\kdist(
  k_3, k_1\ominus k_2)= \kdist( k_2\oplus k_3, k_1)\le \epsilon$ and
  such that $\big( t_3,( t_1, t_2)\big)\in R^1_{ \epsilon'}\in R^1$,
  hence $\big(( t_2, t_3), t_1\big)\in R^2_{ \epsilon'}\in R^2$, for
  some $\epsilon'\le \lambda^{ -1}\big( \epsilon- \kdist( k_2\oplus k_3,
  k_1)\big)$, and by definition of parallel composition, $( s_2,
  s_3)\mustto[ k_2\oplus k_3]_\|( t_2, t_3)$.

  To show that $\mdist( S_3, S_1\bbslash S_2)\le \mdist( S_2\| S_3,
  S_1)$, let $R^2=\{ R^2_\epsilon\subseteq S_2\times S_3\times S_1\mid
  \epsilon\ge 0\}$ be a witness for $\mdist( S_2\| S_3, S_1)$, define
  $R^1_\epsilon=\big\{\big( s_3,( s_1, s_2)\big)\bigmid\big(( s_2,
  s_3), s_1\big)\in R^2_\epsilon\big\}\cup\big\{( s_3, u)\bigmid
  s_3\in S_3\big\}$ for all $\epsilon\ge 0$, and let $R^1=\{
  R^1_\epsilon\mid \epsilon\ge 0\}$, then $\big( s_3^0,( s_1^0,
  s_2^0)\big)\in R^1_{ \mdist( S_2\| S_3, S_1)}\in R^1$.

  For any $( s_3, u)\in R^1_\epsilon$ for some $\epsilon\ge 0$, any
  transition $s_3\mayto[ k_3]_3 t_3$ can be matched by $u\mayto[
  k_3]_\bbslash u$, and then $( t_3, u)\in R^1_0$.  Let now $\big(
  s_3,( s_1, s_2)\big)\in R^1_\epsilon$ for some $\epsilon\ge 0$, and
  assume $s_3\mayto[ k_3]_3 t_3$.  If $k_2\oplus k_3$ is undefined for
  all transitions $s_2\mayto[ k_2]_2 t_2$, then by definition $( s_1,
  s_2)\mayto[ k_3] u$, and again $( t_3, u)\in R^1_0$.  If there is a
  transition $s_2\mayto[ k_2]_2 t_2$ such that $k_2\oplus k_3$ is
  defined, then also $( s_2, s_3)\mayto[ k_2\oplus k_3]_\|( t_2,
  t_3)$.  Hence we have $s_1\mayto[ k_1]_1 t_1$ with $\kdist( k_2\oplus
  k_3, k_1)\le \epsilon$, implying that $( s_1, s_2)\mayto[ k_1\ominus
  k_2]_\bbslash( t_1, t_2)$. Hence $\kdist( k_3,
  k_1\ominus k_2)= \kdist( k_2\oplus k_3, k_1)\le \epsilon$.  Also, $\big((
  t_2, t_3), t_1\big)\in R^2_{ \epsilon'}\in R^2$, hence $\big( t_3,(
  t_1, t_2)\big)\in R^1_{ \epsilon'}\in R^1$, for some $\epsilon'\le
  \lambda^{ -1}\big( \epsilon- \kdist( k_3, k_1\ominus k_2)\big)$.

  Assume $( s_1, s_2)\mustto[ k_1\ominus k_2]_\bbslash( t_1, t_2)$,
  hence we have $s_1\mustto[ k_1]_1 t_1$ and $s_2\mustto[ k_2]_2
  t_2$.  It follows that $( s_2, s_3)\mustto[ k_2'\oplus k_3]_\|(
  t_2', t_3)$ with $\kdist( k_2'\oplus k_3, k_1)= \kdist( k_3,
  k_1\ominus k_2')\le \epsilon$ and such that $\big(( t_2', t_3),
  t_1\big)\in R^2_{ \epsilon'}\in R^2$, hence $\big( t_3,( t_1,
  t_2')\big)\in R^1_{ \epsilon'}\in R^1$, for some $\epsilon'\le
  \lambda^{ -1}\big( \epsilon- \kdist( k_3, k_1\ominus k_2')\big)$.  By
  definition of parallel composition we must have $s_2\mustto[ k_2']_2
  t_2'$ and $s_3\mustto[ k_3]_3 t_3$, and by determinism of $S_2$,
  $k_2'= k_2$ and $t_2'= t_2$. \qed
\end{proof}

\begin{figure}[tpb]
  \centering
  \subfigure[$S_1$]{
    \begin{tikzpicture}[->,>=stealth',shorten >=1pt,auto,node
      distance=2.0cm,initial text=,scale=0.8,transform shape]
      \tikzstyle{every node}=[font=\small] \tikzstyle{every
        state}=[fill=white,shape=circle,inner sep=.5mm,minimum size=6mm]
      \node[initial,state] (s1) at (0,0) {$s_1$};
      \node[state] (s2) at (2,0) {$t_1$};
      \path (s1) edge [densely dashed] node [above] {$a,[ 0, 0]$} (s2);
    \end{tikzpicture}}
  \hspace{2mm}
  \subfigure[$S_2$]{
    \begin{tikzpicture}[->,>=stealth',shorten >=1pt,auto,node
      distance=2.0cm,initial text=,scale=0.8,transform shape]
      \tikzstyle{every node}=[font=\small] \tikzstyle{every
        state}=[fill=white,shape=circle,inner sep=.5mm,minimum size=6mm]
      \node[initial,state] (s1) at (0,0) {$s_2$};
      \node[state] (s2) at (2,0) {$t_2$};
      \path (s1) edge [densely dashed] node [above] {$a,[ 0, 1]$} (s2);
    \end{tikzpicture}}
  \hspace{2mm}
  \subfigure[$S_3$]{
    \begin{tikzpicture}[->,>=stealth',shorten >=1pt,auto,node
      distance=2.0cm,initial text=,scale=0.8,transform shape]
      \tikzstyle{every node}=[font=\small] \tikzstyle{every
        state}=[fill=white,shape=circle,inner sep=.5mm,minimum size=6mm]
      \node[initial,state] (s1) at (0,0) {$s_3$};
      \node[state] (s2) at (2,0) {$t_3$};
      \path (s1) edge [densely dashed] node [above] {$a,[ 0, 0]$} (s2);
    \end{tikzpicture}}
  \hspace{2mm}
  \subfigure[$S_2\| S_3$]{
    \begin{tikzpicture}[->,>=stealth',shorten >=1pt,auto,node
      distance=2.0cm,initial text=,scale=0.8,transform shape]
      \tikzstyle{every node}=[font=\small] \tikzstyle{every
        state}=[fill=white,shape=circle,inner sep=.5mm,minimum size=6mm]
      \node[initial,state,shape=rectangle,rounded corners] (s1) at
      (0,0) {$( s_2, s_3)$}; 
      \node[state,shape=rectangle,rounded corners] (s2) at (3,0)
      {$( t_2, t_3)$};
      \path (s1) edge [densely dashed] node [above] {$a,[ 0, 1]$} (s2);
    \end{tikzpicture}}
  \hspace{2mm}
  \subfigure[$S_1\bbslash S_2$]{
    \begin{tikzpicture}[->,>=stealth',shorten >=1pt,auto,node
      distance=2.0cm,initial text=,scale=0.8,transform shape]
      \tikzstyle{every node}=[font=\small] \tikzstyle{every
        state}=[fill=white,shape=circle,inner sep=.5mm,minimum size=6mm]
      \node[initial,state,shape=rectangle,rounded corners] (s1) at
      (0,0) {$( s_1, s_2)$};
    \end{tikzpicture}}
  \caption{WMTS for which $\mdist( S_2\| S_3, S_1)\ne \mdist( S_3,
    S_1\bbslash S_2)= \infty$.}
  \label{fi:quotient-counterex}
\end{figure}

The example depicted in Figure~\ref{fi:quotient-counterex} shows that
the condition $\mdist( S_3, S_1\bbslash S_2)< \infty$ in
Theorem~\ref{th:soundmaxquot} is necessary.  Here $\mdist( S_2\| S_3,
S_1)= 1$, but $\mdist( S_3, S_1\bbslash S_2)= \infty$ because of
inconsistency between the transitions $s_1\mayto[{ a,[ 0, 0]}]_1 t_1$
and $s_2\mayto[{ a,[ 0, 1]}]_2 t_2$ for which $k_1\ominus k_2$ is defined.

\medskip%
As a practical application, we notice that \emph{relaxation} as defined
in Section~\ref{se:relax} can be useful when computing quotients.  The
quotient construction in Definition~\ref{de:comp-quot} introduces
inconsistent states (which afterwards are pruned) whenever there is a
\must transition $s_1\mustto[ k_1]_1 s_1'$ such that $k_1\ominus k_2$ is
undefined for all transitions $s_2\mustto[ k_2]_2 s_2'$.  Looking at the
definition of $\ominus$, we see that this is the case if $k_1=( a_1,[
x_1, y_1])$ and $k_2=( a_2,[ x_2, y_2])$ are such that $a_1\ne a_2$ or
$x_1- x_2> y_1- y_2$.  In the first case, the inconsistency is of a
\emph{structural} nature and cannot be dealt with; but in the second
case, it may be avoided by \emph{enlarging} $k_1$: decreasing $x_1$ or
increasing $y_1$ so that now, $x_1- x_2\le y_1- y_2$.

Enlarging quantitative constraints is exactly the intuition of
relaxation, thus in practical cases where we get a quotient $S_1\bbslash
S_2$ which is ``too inconsistent'', we may be able to solve this problem
by constructing a suitable $\epsilon$-relaxation $S_1'$ of $S_1$.
Theorems~\ref{th:indepimp} and~\ref{th:soundmaxquot} can then be used to
ensure that also $S_1'\bbslash S_2$ is a relaxation of $S_1\bbslash
S_2$.

\section{Logical Characterizations}
\label{se:logics}

We now turn our attention to showing that quantitative refinement admits
a logical characterization. Our results extend the logical
characterization of modal transition systems in~\cite{Larsen89}, by
abandoning the usual Boolean interpretation of logical satisfaction, as
we did for refinement, and instead interpreting each formula as a map
assigning to states a real-valued number denoting the relationship
between the property and the state.  The logic $\mathcal L$ is the
smallest set of expressions generated by the following abstract syntax:
\begin{equation*}
  \phi,\phi_1,\phi_2 := \true \mid \false
  \mid  \langle \ell\rangle \phi \mid [ \ell] \phi \mid
  \phi_1 \wedge \phi_2
  \mid \phi_1 \vee \phi_2 \qquad (\ell\in \K)
\end{equation*}
As usual, when $\ell = (a,[x_1,x_2])$, writing $\langle \ell \rangle
\phi$ means that we insist on implementations exhibiting a transition
which reaches a state having property $\phi$ and is labeled by $a$ and
an integer $x$ for which $x_1\le x\le x_2$. Dually, $[\ell]\phi$
restricts the set of implementations to those where every transition
labeled with $a$ and an integer in $[x_1,x_2]$ reaches a state with
property $\phi$.

With this standard (informal) interpretation of logical specifications,
implementations which come close to matching the specification are
rejected just as much as the truly wrong implementations. Analog to our
refinement distance, a quantitative interpretation provides us with
continuous judgments on the relationship between a specification $S$ or
implementation $I$ and a logical specification $\phi$. Defining the
semantics of formulae as a map from states to reals, the value of any
$\phi$ for the initial state of implementations determines an order on
the applicability of the implementations for the given specification.
The semantics of a formula $\phi\in \mathcal L$ is a mapping $\sem \phi:
S\to \Realnn \cup \{\infty\}$ given inductively, again relative to the
discounting factor $\lambda$ with $0 < \lambda <1$, as follows:
\begin{gather*}
  \begin{aligned}
    \sem \true s &= 0 &\qquad\qquad \sem \false s &= \infty \\
    \sem{( \phi_1\wedge \phi_2)} s &= \max( \sem{ \phi_1} s, \sem \phi_2
    s) &\qquad \sem{( \phi_1\vee \phi_2)} s &= \min( \sem{ \phi_1} s,
    \sem{ \phi_2} s)
  \end{aligned}
  \\
  \begin{aligned}
    \sem{ \langle \ell\rangle \phi} s &= \inf\{ \kdist( k, \ell) +
    \lambda \sem \phi
    t)\mid s\mustto[ k] t, \kdist( k, \ell)\ne \infty\} \\
    \sem{ [ \ell] \phi} s &= \sup\{ \kdist( k, \ell) + \lambda \sem
    \phi t \mid s\mayto[ k] t, \kdist( k, \ell)\ne \infty\}
  \end{aligned}
\end{gather*}

Intuitively, $\sem{[\ell]\phi}s$ takes the value of the supremum over
all outgoing $s\mayto[a,x] t$ transitions and the respective match with
$x\in [x_1,x_2]$ plus the discounted value of the property $\phi$ for
$t$. Clearly if $\sem{[\ell]\phi}s = 0$ then every $s\mayto[a,x] t$
satisfies the property exactly, recovering the standard interpretation.
Notice that by evaluating a logical specification $\phi$ for a WMTS
specification $S$, we get a measure on the set of implementations of $S$
which are not shared by the specification $\phi$.  The value is $0$ if
and only if there is a thorough refinement from $S$ to $\phi$, \ie~if
and only if any implementation of $S$ satisfies $\phi$.

For a SMTS $S$ we write $\sem \phi S= \sem \phi s_0$.  The first theorem
below expresses the fact that $\mathcal L$ is \emph{quantitatively
  sound} for refinement distance, \ie~the value of a formula in a
specification is bounded by its value in any other specification
together with their distance.  Note the special case that $S\le_m T$
implies $\sem \phi S\le \sem \phi T$.

\begin{theorem}
  \label{th:l-sound}
  For all $\phi\in \mathcal L$ and WMTS $S$, $T$, $\sem \phi
  S \le \sem \phi T + d_m( S, T)$.
\end{theorem}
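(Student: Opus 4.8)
The plan is to prove, by induction on the structure of $\phi$, the more general claim that $\sem{\phi}{s_1}\le \sem{\phi}{s_2}+ \mdist( s_1, s_2)$ for every state $s_1$ of $S$ and every state $s_2$ of $T$; the theorem is the special case where $s_1$ and $s_2$ are the initial states. Since the right-hand side equals $\infty$ whenever $\mdist( s_1, s_2)= \infty$, I may assume $\mdist( s_1, s_2)< \infty$. The propositional cases are routine: $\true$ and $\false$ are immediate, and the cases $\phi_1\wedge \phi_2$ and $\phi_1\vee \phi_2$ follow because $\max$ and $\min$ are monotone and commute with adding the constant $\mdist( s_1, s_2)$. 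All the content sits in the two modal cases, where the matching of proof obligations mirrors the two clauses of the fixed-point equation for $\mdist$ in Definition~\ref{de:acc.mo.dist}.

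For $\phi= \langle \ell\rangle \psi$: if $s_2$ has no must-transition $s_2\mustto[ k_2] t_2$ with $\kdist( k_2, \ell)\ne \infty$ then $\sem{\langle \ell\rangle \psi}{s_2}= \infty$ and nothing is to prove, so fix such a transition. The ``must'' component of the $\max$ in the defining equation gives $\inf_{ s_1\mustto[ k_1] t_1}\big( \kdist( k_1, k_2)+ \lambda \mdist( t_1, t_2)\big)\le \mdist( s_1, s_2)$, and since $S$ is compactly branching this infimum is attained by an actual transition $s_1\mustto[ k_1] t_1$ (continuous functions on the set $\{( t_1, k_1)\mid s_1\mustto[ k_1] t_1\}$ attain their infimum; see the discussion preceding Lemma~\ref{le:family}). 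By the triangle inequality for $\kdist$ we get $\kdist( k_1, \ell)\le \kdist( k_1, k_2)+ \kdist( k_2, \ell)< \infty$, so this transition is an admissible term in the infimum defining $\sem{\langle \ell\rangle \psi}{s_1}$. Using the same triangle inequality and the induction hypothesis at $( t_1, t_2)$,
\begin{align*}
  \sem{\langle \ell\rangle \psi}{s_1}
    &\le \kdist( k_1, \ell)+ \lambda \sem{\psi}{t_1} \\
    &\le \kdist( k_1, k_2)+ \kdist( k_2, \ell)+ \lambda\big( \sem{\psi}{t_2}+ \mdist( t_1, t_2)\big) \\
    &\le \kdist( k_2, \ell)+ \lambda \sem{\psi}{t_2}+ \mdist( s_1, s_2);
\end{align*}
taking the infimum over all eligible transitions of $s_2$ on the right-hand side then gives $\sem{\langle \ell\rangle \psi}{s_1}\le \sem{\langle \ell\rangle \psi}{s_2}+ \mdist( s_1, s_2)$.

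For $\phi= [ \ell] \psi$ the argument is dual and driven by the ``may'' component of the fixed-point equation: I fix an arbitrary $s_1\mayto[ k_1] t_1$ with $\kdist( k_1, \ell)\ne \infty$ (if there is none, $\sem{[ \ell] \psi}{s_1}= 0$ and we are done), use compact branching of $T$ to obtain a matching $s_2\mayto[ k_2] t_2$ with $\kdist( k_1, k_2)+ \lambda \mdist( t_1, t_2)\le \mdist( s_1, s_2)$ and hence $\kdist( k_2, \ell)\le \kdist( k_1, k_2)+ \kdist( k_1, \ell)<\infty$, and then run the same triangle-inequality-plus-induction estimate to obtain $\kdist( k_1, \ell)+ \lambda \sem{\psi}{t_1}\le \kdist( k_2, \ell)+ \lambda \sem{\psi}{t_2}+ \mdist( s_1, s_2)\le \sem{[ \ell] \psi}{s_2}+ \mdist( s_1, s_2)$; taking the supremum over all such transitions of $s_1$ on the left-hand side completes the induction.

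I do not anticipate a genuine difficulty here; the only thing that needs care is the bookkeeping, namely that the transition extracted from the fixed-point equation of $\mdist$ is really a witness inside the $\inf$/$\sup$ defining the logical semantics. This is where the side condition $\kdist( \cdot, \ell)\ne \infty$ must be checked to be preserved (handled by the triangle inequality for $\kdist$, so that the transition one picks still pairs finitely with $\ell$), and where compact branching is needed so that the relevant infimum is realized by an honest transition rather than merely approached in the limit. Alternatively, one can run the same induction through a modal refinement family $R$ witnessing $\mdist( S, T)$ via Lemma~\ref{le:family}, proving that $( s_1, s_2)\in R_\epsilon$ implies $\sem{\phi}{s_1}\le \sem{\phi}{s_2}+ \epsilon$; the transition-matching clauses in the definition of a refinement family then supply exactly the witnesses used above, with the discounting bookkeeping already packaged into the indices.
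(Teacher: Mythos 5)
Your proof is correct and follows essentially the same route as the paper's: structural induction on $\phi$, with the two modal cases handled by extracting a matching transition from the fixed-point equation for $\mdist$ and combining the induction hypothesis with the triangle inequality for $\kdist$; you are in fact more explicit than the paper about why the extracted transition is an eligible term of the $\inf$/$\sup$ in the semantics and about where compact branching is used to realize the infimum. One small slip worth noting: in the $[\ell]\psi$ case the bound $\kdist( k_2, \ell)\le \kdist( k_1, k_2)+ \kdist( k_1, \ell)$ is not an instance of the triangle inequality for the asymmetric $\kdist$ (that would read $\kdist( k_2, k_1)+ \kdist( k_1, \ell)$) and can fail numerically, but all you actually need there is $\kdist( k_2, \ell)\ne \infty$, which holds because $\kdist$ is finite exactly when the action components agree.
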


\begin{proof}
  By standard structural induction in $\phi$.  The claim obviously holds
  for $\phi= \true$ and $\phi= \false$.

  For $\phi= \phi_1\land \phi_2$, the induction hypothesis that $\sem{
    \phi_i} s_1\le \sem{ \phi_i} s_2 + d_m( s_1, s_2)$ for $i= 1, 2$
  implies that also $\max( \sem{ \phi_1} s_1, \sem{ \phi_2} s_1)\le
  \max( \sem{ \phi_1} s_2, \sem{ \phi_2} s_2) + d_m( s_1, s_2)$.
  Similarly for $\phi= \phi_1\vee \phi_2$.

  For the case $\phi=\langle \ell\rangle \phi'$, if $d_m( s_1, s_2)=
  \infty$ or if there are no transitions $s_2\mustto$ the claim is
  trivial.  Let thus $s_2\mustto[ k_2] t_2$, then there exist
  $s_1\mustto[ k_1] t_1$ with $\kdist( k_1, k_2) + \lambda d_m( t_1,
  t_2) \le d_m( s_1, s_2)$ (by definition of $d_m$).  

  Then $\kdist( k_1, \ell) + \lambda \sem{ \phi'} t_1 \le (\kdist(
  k_1, k_2) + \lambda d_m( t_1, t_2)) + (\kdist( k_2, \ell) + \lambda
  \sem{ \phi'} t_2)$ by induction hypothesis and the triangle
  inequality for $\kdist$, hence $\kdist( k_1, \ell) + \lambda \sem{
    \phi'} t_1 \le d_m( s_1, s_2) + \kdist( k_2, \ell)+\lambda \sem{
    \phi'} t_2$.  As $s_2\mustto[ k_2] t_2$ was arbitrary, this
  entails $\inf\{ \kdist( k_1, \ell) + \lambda \sem{ \phi'} t_1 \mid
  s_1\mustto[ k_1] t_1\}\le \inf\{ \kdist( k_2, \ell) + \lambda \sem{
    \phi'} t_2\mid s_1\mustto[ k_2] t_2\} + d_m( s_1, s_2)$, which was
  to be shown.

  For the case of $\phi=[ \ell] \phi'$ the proof is similar: We have
  nothing to prove if $d_m( s_1, s_2)= \infty$ or if there are no
  transitions $s_1\mayto[ k_1] t_1$ with $\kdist( k_1, \ell) \ne
  \infty$, so assume there is such a transition.  Then we also have
  $s_2\mayto[ k_2] t_2$ with $(\kdist( k_1, k_2) + \lambda d_m( t_1,
  t_2))\le d_m( s_1, s_2)$, and $\kdist( k_1, \ell)+\lambda \sem{
    \phi'} t_1 \le (\kdist( k_1, k_2)+\lambda d_m( t_1, t_2)) +
  \kdist( k_2, \ell) +\lambda \sem{ \phi'} t_2 \le d_m( s_1, s_2) +
  \kdist( k_2, \ell)+\lambda \sem{ \phi'} t_2$.  \qed
\end{proof}

The next theorem shows that the disjunction-free fragment of $\mathcal
L$ is also \emph{quantitatively implementation complete}, \ie~the value
of any disjunction-free formula in a specification $S$ is bounded above
by its value in any implementation of $S$.  Note that
disjunction-freeness is a common assumption in this context,
\cf~\cite{Larsen89,DBLP:conf/atva/BenesCK11}.

\begin{theorem}
  \label{th:l-complete}
  For all disjunction-free $\phi\in \mathcal L$ and locally consistent
  and compactly branching WMTS $S$, we have $\sem \phi S= \sup_{ I\in \llbracket
    S\rrbracket} \sem \phi I$.
\end{theorem}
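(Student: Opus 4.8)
The plan is to prove the two inequalities separately. The inequality $\sem\phi S\ge\sup_{I\in\llbracket S\rrbracket}\sem\phi I$ is immediate from Theorem~\ref{th:l-sound}: every $I\in\llbracket S\rrbracket$ satisfies $I\le_m S$, hence (by the special case of that theorem) $\sem\phi I\le\sem\phi S$. For the reverse inequality I would argue by structural induction on the disjunction-free formula $\phi$, proving the slightly stronger statement that for every state $s$ of $S$ and every $\delta>0$ there is an implementation $I\in\llbracket(s,S)\rrbracket$ with $\sem\phi I\ge\sem\phi s-\delta$, where for $\sem\phi s=\infty$ this is read as ``$\sem\phi I$ can be made arbitrarily large''. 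Here $(s,S)$ denotes $S$ with its initial state moved to $s$, as in the proof of Theorem~\ref{th:det-dteqdm}; it inherits local consistency and compact branching, so the induction hypothesis applies to it. Specialising to $s=s^0$ and letting $\delta\to0$ gives the theorem.

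The base and conjunction cases are light. Since $\sem\true s=0$ and $\sem\false s=\infty$ at every state of every WMTS, implementations included, one only needs $\llbracket(s,S)\rrbracket\ne\emptyset$, which follows from local consistency (and compact branching). For $\phi=\phi_1\wedge\phi_2$, assume w.l.o.g.\ $\sem{\phi_1}s\ge\sem{\phi_2}s$; the induction hypothesis for $\phi_1$ yields $I\in\llbracket(s,S)\rrbracket$ with $\sem{\phi_1}I\ge\sem{\phi_1}s-\delta$, and then $\sem\phi I=\max(\sem{\phi_1}I,\sem{\phi_2}I)\ge\sem{\phi_1}I\ge\sem\phi s-\delta$, so this case rests solely on the induction hypothesis.

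The modal cases carry the content. For $\phi=\langle\ell\rangle\phi'$ I would take $I$ rooted at a state $i$ whose transitions (must $=$ may, since $I$ is an implementation) are exactly one point-refinement $i\mustto[k_0']j_0$ of each must-transition $s\mustto[k_0]t_0$ of $s$ and nothing else; this already yields $I\le_m(s,S)$, because every must of $s$ is covered and every transition of $i$ refines a may of $s$. The label $k_0'$ is obtained by replacing the interval of $k_0$ with one of its endpoints, chosen so that $\kdist(k_0',\ell)=\kdist(k_0,\ell)$ when the latter is finite (a short computation on the explicit formula for $\kdist$), and chosen far towards $\pm\infty$ so as to make $\kdist(k_0',\ell)$ as large as needed when $\kdist(k_0,\ell)=\infty$ with matching action; the target $j_0$ is the initial state of some $J_0\in\llbracket(t_0,S)\rrbracket$ with $\sem{\phi'}J_0\ge\sem{\phi'}t_0-\delta$, supplied by the induction hypothesis for $\phi'$ with the same $\delta$ for all $k_0$. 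Since the transitions of $i$ are precisely these refinements, $\sem{\langle\ell\rangle\phi'}I$ equals the infimum of $\kdist(k_0',\ell)+\lambda\sem{\phi'}J_0$ over those with $\kdist(k_0',\ell)\ne\infty$, which is $\ge\sem{\langle\ell\rangle\phi'}s-\lambda\delta$ once the terms stemming from unbounded intervals have been made large. For $\phi=[\ell]\phi'$ I would instead pick one may-transition $s\mayto[k^*]t^*$ witnessing the defining supremum up to $\delta/2$, put in $I$ a point-refinement of it with $\kdist(k^{*\prime},\ell)=\kdist(k^*,\ell)$ and target the initial state of some $J^*\in\llbracket(t^*,S)\rrbracket$ with $\sem{\phi'}J^*\ge\sem{\phi'}t^*-\delta/(2\lambda)$ from the induction hypothesis, and additionally add arbitrary point-refinements of all must-transitions of $s$ (with arbitrary implementation targets, which exist by local consistency) so that $I\le_m(s,S)$ holds; since a supremum only grows when transitions are added, $\sem{[\ell]\phi'}I\ge\kdist(k^{*\prime},\ell)+\lambda\sem{\phi'}J^*\ge\sem{[\ell]\phi'}s-\delta$ (and by Theorem~\ref{th:l-sound} the forced extra transitions cannot in fact raise the value above $\sem{[\ell]\phi'}s$).

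I expect the $\langle\ell\rangle$-case to be the real obstacle, because there $\sem{\langle\ell\rangle\phi'}I$ is an \emph{infimum}, so the construction cannot cherry-pick a single good transition: $I$ is compelled by $I\le_m S$ to carry a refinement of \emph{every} must-transition of $s$, and one has to check that these can be made simultaneously ``as bad as possible'' — which works precisely because each must-transition's point refinement and its target implementation are chosen independently. The remaining work is bookkeeping: verifying that the disjoint union of the induction-hypothesis relations, joined to $\{(i,s)\}$, is a genuine modal refinement relation onto $(s,S)$ and that the glued implementation is again compactly branching; the uniform choice of $\delta$ across the possibly infinitely many must-transitions in the $\langle\ell\rangle$-case, which is harmless since $\delta$ is independent of the transition; and the usual care with extended-integer interval bounds (pushing endpoints to $\pm\infty$ when $\kdist(k_0,\ell)=\infty$), which parallels the corresponding point in the proof of Theorem~\ref{th:l-sound}.
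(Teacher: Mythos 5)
Your proposal is correct and follows essentially the same route as the paper's proof: the easy inequality via Theorem~\ref{th:l-sound}, and the converse by structural induction, building a witness implementation that point-refines \emph{every} must-transition (with targets supplied by the induction hypothesis) for $\langle\ell\rangle\phi'$, and cherry-picks one near-optimal may-transition on top of the forced must-refinements for $[\ell]\phi'$. The only differences are cosmetic --- you carry a uniform slack $\delta$ where the paper picks per-transition thresholds $\alpha'_k$ and invokes compact branching for a strict inequality, and you are somewhat more explicit about must-transitions whose label is at infinite $\kdist$-distance from $\ell$ --- neither of which changes the substance.
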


\begin{proof}
  Since $d_m(I,S) = 0$ for all $I\in\sem{S}$, Theorem~\ref{th:l-sound}
  entails $\sem{\phi}I \le \sem{\phi}S$, hence also
  $\sup_{I\in\sem{S}}\sem{\phi}I \le \sem{\phi}S$. To show that
  $\sem{\phi}S \le \sup_{I\in\sem{S}}\sem{\phi}I$ we use structural
  induction on $\phi$.  If $\phi= \true$, both sides are $0$, and if
  $\phi= \false$, both sides are $\infty$, so the induction base is
  clear.

  The case $\phi= \phi_1\wedge \phi_2$ is also clear: By hypothesis,
  $\sem{ \phi_1}S \le \sup_{ I\in \llbracket S\rrbracket} \sem{
    \phi_1} I$ and similarly for $\phi_2$, hence
  \begin{align*}
    \sem \phi S= \max( \sem{ \phi_1} S, \sem{ \phi_2} S) &\le
    \max( \sup_{ I\in \llbracket S\rrbracket} \sem{ \phi_1} I, \sup_{
      I\in \llbracket S\rrbracket} \sem{ \phi_2} I) \\ &= \sup_{ I\in
      \llbracket S\rrbracket} \max( \sem{ \phi_1} I, \sem{ \phi_2} I).
  \end{align*}

  For the case $\phi= \langle \ell\rangle \phi'$, we are done if $\sem
  \phi S= 0$.  Otherwise, to conclude that $\sup_{I \in\sem{S}}
  \sem{\langle \ell\rangle \phi'} I \ge \sem{\langle \ell\rangle
    \phi'}S$ we expose an $I\in \llbracket S\rrbracket$ for which
  $\alpha < \sem \phi I$ for any $\alpha < \sem \phi S$. For a fixed
  $\alpha < \sem \phi S$, start by letting $I=\{ i_0\}$ and $\mmustto_I=
  \emptyset$.

  Now for each transition $s_0\mustto[ k]_S t$ we have $\alpha < \kdist(
  k, \ell)+ \lambda\sem{ \phi'} t$, so (assuming for the moment that
  $\sem{ \phi'} t\ne 0$) by the density of the reals, there is a number
  $\alpha_k' < \sem{ \phi'} t$ for which $\alpha < \kdist( k,
  \ell)+\lambda \alpha_k'$.  By induction hypothesis, the sub-formula
  $\phi'$ satisfies $\sup_{J\in \sem{S'}} \sem{\phi'} J = \sem{\phi'}
  S'$ for any $S'$, specifically when $S'=( t, S)$ is taken as $S$ with
  initial state replaced by $t$. Therefore, and as $\alpha_k' <
  \sem{\phi'} t$, there exists a $J\in \sem{(t,S)}$ with $\alpha_k' <
  \sem{\phi'}J$.  Now let $n\in \Imp \K$ with $n\sqsubseteq k$ be such that
  $\kdist( n, \ell) + \lambda \sem{ \phi'} J= \kdist( k, \ell)+\lambda
  \sem{ \phi'} J$, and add $J$ together with a transition $i_0\mustto[
  n]_I j_0$ to $I$.

  In case $\sem{ \phi'} t= 0$, we have $J\in \llbracket t, S\rrbracket$
  with $\sem{ \phi'} J= 0$, and we can add $J$ together with a
  transition $i_0\mustto[ n]_I j_0$ to $I$ as above.

  For the so-constructed implementation $I$ we have
  \begin{align}
    \sem \phi I &= \inf\{ \kdist( m, \ell)+\lambda \sem{ \phi'} j\mid
    i_0\mustto[
    m]_I j\} \notag\\
    &= \inf\{ \kdist( k, \ell)+\lambda \sem{ \phi'} J\mid s_0\mustto[
    k]_S t, J\in \llbracket t, S\rrbracket, \sem{ \phi'} t= \infty
    \text{ or }
    \alpha_k' < \sem{ \phi'} J\} \notag\\
    &> \inf(\{ \kdist( k, \ell)+\lambda \alpha_k' \mid s_0\mustto[ k]_S
    t\}\cup\{ \kdist( k, \ell)+\lambda \sem{ \phi'} t \})\ge
    \alpha, \label{eq:fibragen}
  \end{align}
  the strict inequality in~\eqref{eq:fibragen} because $S$ is compactly
  branching.

  For the case $\phi=[ \ell] \phi'$, let again $\alpha < \sem \phi S$,
  and let $I\in \llbracket S\rrbracket$ be any implementation.  If
  $\kdist( k, \ell)+\lambda \sem{ \phi'} t= \infty$ for all $s_0\mayto[
  k]_S t$, then $\sem \phi S= \sup \emptyset= 0$ and we are done.
  Otherwise let $s_0\mayto[ k]_S t$ be such that $\sem \phi S= \kdist(
  k, \ell)+\lambda \sem{ \phi'} t$, which exists because $S$ is
  compactly branching.  Then $\alpha < \kdist( k, \ell)+\lambda \sem{
    \phi'} t$, so (assuming that $\sem{ \phi'} t \ne 0$) we have
  $\alpha_k'< \sem{ \phi'} t$ with $\kdist( k, \ell)+\lambda \alpha_k' >
  \alpha$.

  Let $J\in \llbracket t, S\rrbracket$ such that $\alpha_k'< \sem{
    \phi'} J$, let $n\in \Imp \Spec$ with $n\sqsubseteq k$ be such that
  $\kdist( n, \ell)+\lambda \sem{ \phi'} J = \kdist( k, \ell)+\lambda
  \sem{ \phi'} J$, and add $J$ together with a transition $i_0\mustto[
  n]_I j_0$ to $I$.  Then
  \begin{align*}
    \sem \phi I &= \sup\{ \kdist( m, \ell)+\lambda \sem{ \phi'} n \mid
    i_0\mustto[
    m]_I j\} \\
    &\ge \kdist( n, \ell)+\lambda \sem{ \phi'} J = \kdist( k,
    \ell)+\lambda \sem{ \phi'} J \ge F( k, \ell,
    \alpha_k') > \alpha.
  \end{align*}
  In case $\sem{ \phi'} t= 0$ instead, we again take some $J\in
  \llbracket t, S\rrbracket$, and then $\sem \phi I\ge \kdist( k,
  \ell)+\lambda \sem{ \phi'} t > \alpha$. \qed
\end{proof}

Other notions of completeness (see
e.g.~\cite{journals/mscs/BauerJLLS11}) are subject of future work.

\section{Conclusion and Further Work}

We have shown in this paper that within the quantitative specification
framework of weighted modal transition systems, refinement and
implementation distances provide a useful tool for robust compositional
reasoning.  Note that these distances permit us not only to reason about
differences between implementations and from implementations to
specifications, but they also provide a means by which we can compare
specifications directly at the abstract level.

We have shown that for some of the ingredients of our specification
theory, namely structural composition and quotient, our formalism is a
conservative extension of the standard Boolean notions.  We have also
noted however, that for determinization and logical conjunction, the
properties of the Boolean notions are not preserved, %
and that this is a fundamental limitation of any reasonable quantitative
specification theory.  The precise practical implications of this for
the applicability of our quantitative specification framework, and
perhaps how to circumvent these limitations, are subject to future work.


\end{document}